\documentclass{article}
\usepackage{kr}

\usepackage{times}
\usepackage{soul}
\usepackage{url}
\usepackage[hidelinks]{hyperref}
\usepackage[utf8]{inputenc}
\usepackage[small]{caption}
\usepackage{graphicx}
\usepackage{amsmath}
\usepackage{amsthm}
\usepackage{booktabs}
\usepackage{algorithm}
\usepackage{multirow}

\usepackage{fontawesome}
\usepackage{microtype}
\usepackage{graphicx} % Required for inserting images
\usepackage[%textwidth=15mm,
]
{todonotes} % author comments

\usepackage{subcaption} 
\usepackage{tikz} %for graphics
\usetikzlibrary{positioning, calc, fit, shapes.geometric, arrows, decorations.pathreplacing, arrows.meta, bending}
\usepackage{url}

\usepackage{hyperref}

\usepackage{amsthm, thmtools} % theorem environments etc
\usepackage{amsfonts} % mathbb etc
\usepackage{amssymb} % symbols like \restriction
\usepackage{amsmath} % align etc
\usepackage{bm} %bold symbols
\usepackage{bbm} %mathbb on numbers
\usepackage{xspace} % spaces after commands without braces
\usepackage{mathtools} % coloneqq etc
\usepackage{algcompatible}
\usepackage{booktabs}
\usepackage{enumitem}
\usepackage{multicol}
\usepackage{float}

\theoremstyle{plain}
\newtheorem{corollary}{Corollary}
\newtheorem{lemma}{Lemma}

\theoremstyle{definition}
\newtheorem{definition}{Definition}

\theoremstyle{remark}
\newtheorem{remark}{Remark}
\newtheorem{example}{Example}

\newenvironment{proofidea}{\noindent \textit{Proof Sketch.}}{\qed}
\author{Author1}
\usepackage{pifont}
\newcommand{\xmark}{\ding{55}}
\newcommand{\cmark}{\ding{51}}

\newcommand{\N}{\ensuremath{\mathbb{N}}\xspace}
\newcommand{\R}{\ensuremath{\mathbb{R}}\xspace}
\newcommand{\bO}{\mathcal{O}}

\newcommand{\NN}{\ensuremath{\mathcal{N}}\xspace}
\newcommand{\C}{\ensuremath{\mathcal{C}}\xspace}

\newcommand{\Graph}{\ensuremath{\mathfrak{Graph}}\xspace}
\newcommand{\lgraph}{\ensuremath{\mathfrak{G}}\xspace} %labelled Graph

\newcommand{\recF}[2]{\ensuremath{\text{rec}}[#1]\text{-}#2} %recurrent circuit function class

\newcommand{\CGNN}{\text{C-GNN}\xspace}

\newcommand{\recNN}{\ensuremath{\text{rec-}\mathcal{N}}\xspace}

\newcommand{\halt}{\ensuremath{f_\text{halt}}} %halting function
\newcommand{\haltGNN}{\ensuremath{t\text{-}f_\text{halt}}} %halting function GNN (invariant)
\newcommand{\actfct}{\ensuremath{\sigma}\xspace}
\newcommand{\faci}{\mathrm{FAC}^i_\R}
\newcommand{\faco}{\ensuremath{\mathrm{FAC}^0_{\R}}\xspace}

\newcommand{\size}{\textit{size}}
\newcommand{\depth}{\textit{depth}}
\newcommand{\SD}[1]{\mathrm{FSIZE}\text{-}\mathrm{DEPTH}_{#1}}

\newcommand{\SDr}{\SD{\R}}

\newcommand{\sgre}[1]{\ensuremath{\lgraph({#1})}} %symbolic graph encoding encoding
\newcommand{\gre}[1]{\ensuremath{\lgraph({#1}, \ol{x})}} %graph encoding encoding

\newcommand{\ol}[1]{\overline{#1}}
\newcommand{\enc}[1]{\ensuremath{\langle #1 \rangle}}

\newcommand{\lms}{\ensuremath{\left\{\!\!\left\{}}
\newcommand{\rms}{\ensuremath{\right\}\!\!\right\}}}

\newcommand{\dfn}{\coloneqq}

\NewDocumentCommand{\ms}{om}{%
  \sbox0{\mathsurround=0pt$#1\{$}%
  \sbox2{\{}%
  \ifdim\ht0=\ht2
    \{\kern-.625\wd2 \{#2\}\kern-.625\wd2 \}%
  \else
    \mathopen{#1\{\kern-.7\wd0 #1\{}
    #2
    \mathclose{#1\}\kern-.7\wd0 #1\}}
  \fi
}

\DeclareUnicodeCharacter{211D}{$\mathbb{R}$}

\title{Recurrent Graph Neural Networks and Arithmetic Circuits }

\author{Anonymous Author(s)}

\author{%
Timon Barlag$^1$\and
Vivian Holzapfel$^1$\and
Laura Strieker$^{1}$\\
Jonni Virtema$^2$ \and
Heribert Vollmer$^1$\\
\affiliations
$^1$Institut für Theoretische Informatik, Leibniz Universität Hannover, Germany\\
$^2$School of Computing Science, University of Glasgow, UK\\
\emails
\{barlag, holzapfel, strieker, vollmer\}@thi.uni-hannover.de,\\
jonni.virtema@glasgow.ac.uk
}
\begin{document}
\maketitle

% \textbf{current ToDos: (all done)}
% \begin{itemize}
%     \item compare recurrent GNN defs (Grohe, Ahvonen, Pflueger)
%     \item settle on rec GNN def
%     \item come up with rec Circuit def
%     \item transfer C-GNNs to recurrent setting if needed
% \end{itemize}

% \textbf{Plan for Paper:}
% \begin{itemize}
%     \item def recurrent arithmetic circuit model
%     \item connect recurrent GNNs and recurrent circuits (probably via rec C-GNNs)
%     \item (give particular circuit family for simple rec GNNs)
% \end{itemize}

\begin{abstract}
    %\todo[inline]{finish abstract until 12.2.}
%     \vivian{old paper: We characterize the computational power of neural networks that follow the graph neural network (GNN) architecture, not restricted to aggregate-combine GNNs or other particular types.
% We establish an exact correspondence between the expressivity of GNNs using diverse activation functions and arithmetic circuits over real numbers.
% In our results the activation function of the network becomes a gate type in the circuit. 
% Our result holds for families of constant depth circuits and networks, both uniformly and non-uniformly, for all common activation functions.}

    We characterise the computational power of recurrent graph neural networks (GNNs) in terms of arithmetic circuits over the real numbers. Our networks are not restricted to aggregate-combine GNNs or other particular types.
    Generalising similar notions from the literature, we introduce the model of recurrent arithmetic circuits, which can be seen as arithmetic analogues of sequential or logical circuits.
    These circuits utilise so-called \emph{memory gates} which are used to store data between iterations of the recurrent circuit.
    While (recurrent) GNNs work on labelled graphs, we construct arithmetic circuits
that obtain encoded labelled graphs as real valued tuples and then compute the same function.
   For the other direction we construct recurrent GNNs which are
able to simulate the computations of recurrent circuits.
These GNNs are given the circuit-input as initial feature vectors and then, after the GNN-computation, have the circuit-output among the feature vectors of its nodes.
%
%of their output labelled graph when given an encoded recurrent circuit as labelled input graph as input.  
    In this way we establish an exact correspondence between the expressivity of recurrent GNNs and recurrent arithmetic circuits operating over real numbers.
    Our results both deepen our understanding of the capabilities of trained neural networks and open new approaches to study recurrent neural networks using the lens of circuit complexity theory.
\end{abstract}

\section{Introduction}

Graph neural networks (GNNs) have become a widely-used and popular machine learning model on graph inputs. Graph nodes are labelled with numerical feature values, which are updated in a sequence of steps (so called layers) depending on the feature values of the neighbours. Graph neural networks have been studied intensively from a theoretical perspective, in particular regarding their expressive (or: computational) power. In this context, the training process, i.\,e.~the question how a GNN can be obtained from its training data is ignored, but its intrinsic power as an optimally trained network is considered. Going back to a seminal paper by \citeauthor{Barcelo_2020}~\shortcite{Barcelo_2020} the expressive power of GNNs was closely related to unary formulas of a certain counting variant of first-order predicate logic that classify nodes in a given graph. Grohe \shortcite{DBLP:conf/lics/Grohe23} extended these results and related GNNs to the Boolean circuit class TC$^0$ of families of polynomial size and constant depth threshold circuits.\looseness=-1

A further step in understanding the capabilities of GNNs, that is very important for the present paper, was done by Barlag et al.{} (\citeyear{DBLP:conf/nips/BarlagHSVV24}). 
First, since GNNs intrinsically compute functions over graphs labelled with feature vectors of real numbers, they were not compared to Boolean logic (like Barceló) or Boolean circuits (like Grohe), but to circuits computing over real numbers.
%\jonni{The follwing sentence is very long.}
Second, this paper considered a general form of GNNs where the message passing from one layer to the next was not restricted to so-called AC-GNNs. These are GNNs where the computation in each node of a layer of the network consists first of an aggregation step (collecting information from the adjacent nodes of the graph—very often simply summation), followed by a combination step (combining the aggregated information with the current information of the node—mostly a linear function followed by a unary non-linear activation function such as sigmoid or ReLU). \citeauthor{DBLP:conf/nips/BarlagHSVV24} retained the message-passing architecture of GNNs, but allowed any computation that can be performed by constant-depth arithmetic circuits to be used; their model was called circuit-GNN (C-GNN). 
In this way, a close analogy between GNNs (with a constant number of layers) and arithmetic circuits of polynomial size and constant depth was obtained.

In 2024, \citeauthor{DBLP:conf/aaai/PfluegerCK24}~\shortcite{DBLP:conf/aaai/PfluegerCK24} introduced recurrent GNNs which are allowed to perform more than a fixed number of message-passing iterations. 
%\jonni{The following sentence should be revised.}
They considered several variants for termination, one by performing iterations until a form of stabilisation was achieved, and another in which the number of iterations can depend on the graph size. They then proved that both variants are more powerful than GNNs with constant number of layers, and related them to a fragment of monadic monotone fixpoint logic. Further halting conditions were studied \cite{DBLP:conf/nips/AhvonenHKL24,DBLP:journals/corr/abs-2505-11050} and related to modal and predicate logics.\looseness=-1

\noindent{\bfseries Our contributions.} 
Our goal is to exactly characterise the computational power of recurrent GNNs as a model of computation over real numbers or labelled graphs, by relating it to other well-studied computational models in theoretical computer science.
In the above mentioned works, recurrent GNNs have been compared with Boolean circuits, and their expressive power has been compared with certain logical languages. Since GNNs compute over real numbers, significant effort in these results is placed into issues of coding or approximating real numbers, and no exact correspondence between a Boolean model and GNNs has been proven (with the exception of some special cases for Boolean node classifiers).
We aim at relating recurrent GNNs to well-studied computation models operating over the reals.
More concretely, we follow the above mentioned approach by Barlag et al.{} (\citeyear{DBLP:conf/nips/BarlagHSVV24}) and study the power of recurrent GNNs by comparing them to arithmetic circuits. 

Interestingly a somewhat surprising extension of usual circuits turns out to be important:
In the area of digital systems, one distinguishes between combinational circuits, i.\,e., regular circuits with gates and connecting wires, but without any form of memory, and sequential circuits (sometimes called logic circuits) with memory gates. These consist of combinational circuits plus memory states and feedback edges that allow memory states to be used in the next iteration of the circuit computation \cite{Lewin1992}. Sequential circuits are used to implement finite automata or Moore and Mealy machines \cite{DBLP:books/daglib/0011240}.
Here, we use the well-studied model of arithmetic circuits \cite{Vollmer99} (as previously also done by Barlag et al.{} in the context of non-recurrent GNNs) but now extend it with memory gates and backward edges, and consider the computation of the circuit in iterations. We define halting conditions that for their part can be computed by arithmetic circuits depending on values of certain so-called \emph{halting gates}. We call these circuits \emph{recurrent arithmetic circuits}.
We define \emph{recurrent graph neural networks} to iterate a periodic sequence of GNN layers. Halting of the computation is determined by a halting function, depending on all feature vectors of the current layer.

Our main result is a comparison of the computational power of recurrent GNNs and recurrent arithmetic circuits. We do not restrict our attention to AC-GNNs, but as in Barlag et al.{} we allow the message passing to be computed by arithmetic circuits, which then can be recurrent themselves. Hence we distinguish inner recurrence (where the circuits computing the messages passed are recursive), outer recurrence (the recursion by iterating GNN layers), and their combination. 

We show that any recurrent C-GNN can be simulated by a recurrent arithmetic circuit. 
The circuit obtains an encoding of the input graph of the C-GNN as an input and computes an encoding of the output graph of the GNN.
Conversely, we show that any recurrent arithmetic circuit can be simulated by a recurrent C-GNN. The GNN obtains (an encoding of) the circuit's input as feature values 
%\vivianS{I think we mostly called them feature values} 
and computes the output of the circuit among the feature values
%\vivianS{do we better say values here instead of vectors? we restrict ourselves to $\R$} 
in its terminating layer. 
Both results hold for inner and outer recurrence, but as we will see, certain restrictions are necessary for syntactic reasons.\looseness=-1

By studying GNNs in comparison with another computation model over the reals, we obtain a close correspondence between recurrent GNNs and arithmetic circuits. 
In order to understand the capabilities of GNNs as Boolean classifiers, arithmetic circuits can be related with a Boolean computation model in a subsequent step.
Mixing these two steps obscures statements about the computational power of the networks by mixing up two different issues.
By separating the computational aspect of recurrent GNNs from the Boolean coding aspect, we do not only obtain an upper bound %as in the previous studies, 
but an equivalence between recurrent GNNs and recurrent arithmetic circuits over the reals. In this way we shed more light on the computational power of recurrent GNNs by showing very explicitly what elementary operations they can perform. In particular, any known limitations of recurrent arithmetic circuits known or to be proven in the future transfer immediately to recurrent GNNs.

\noindent{\bfseries Comparison to previous work.}
Recurrent GNNs were defined by \citeauthor{DBLP:conf/aaai/PfluegerCK24} \shortcite{DBLP:conf/aaai/PfluegerCK24} considering two halting conditions, as explained above. 
Subsequent papers \cite{DBLP:journals/corr/abs-2505-11050,DBLP:conf/nips/AhvonenHKL24,DBLP:journals/corr/abs-2505-00291} have considered further halting conditions and related the obtained recurrent GNN model to various logics. For some cases of Boolean node classifiers, an equivalence has been shown.\looseness=-1 

Contrary to the cited papers, we do not consider Boolean classifiers, but see both GNNs and circuits as devices that compute functions (from labelled graphs to labelled graphs, or from tuples of reals to tuples of reals), and show that both are equivalent (modulo an appropriate input encoding). 
%\vivianS{see above regarding vectors}
%
We do not restrict ourselves to the AC-GNN architecture but allow a more general message passing in the form of C-GNNs.
We do not restrict ourselves to a particular type of halting condition, but allow arbitrary halting functions computed by arithmetic circuits, subsuming all cases studied so far in the literature. 

\noindent{\bfseries Organisation.}
We start by defining our models of recurrent arithmetic circuits in Section~\ref{sec:rec_arithmetic_circs} and recurrent \CGNN{}s in Section~\ref{sec:model_of_comp}.
Subsequently, we give a short overview of our results in Section~\ref{sec:reoadmap}, after which we delve into the relation between the models in Section~\ref{sec:circs_and_gnns}, where we show how and under which restrictions recurrent arithmetic circuits can simulate recurrent \CGNN{}s and vice versa.

\section{Recurrent Arithmetic Circuits} \label{sec:rec_arithmetic_circs}
Throughout this paper, the graphs we consider have an ordered set of vertices and are undirected unless otherwise specified. 
We use overlined letters to denote tuples, and write $\lvert \overline{x} \rvert$ to denote the length of $\ol{x}$ (i.\,e. the number of elements of $\ol{x}$) and $x_j$ to denote the jth element $\overline{x}$.
% , and use $[n]$ to denote the first $n$ nonzero natural numbers $\{1, \dots, n\}$ \vivianS{check if we use that anywhere}.
For a $k \in \N$, we write $\R^k$ for the set of all $k$-tuples from $\R$ and $\R^*$ for the set of all $n$-tuples from $\R$, for $n\in \N$.
The notation $\lms \rms$ is used for multisets and tuple notation for ordered multisets.
Hence $\R^*$ is also interpreted as the set of all multisets over $\R$.
We write $\chi[f_B] \to \{0,1\}$ for the characteristic function of a Boolean expression $f_B$.

We start with defining the base model of arithmetic circuits and then add recurrence. 
For simplicity, we define our model of computation over the reals $\R$, but all our results would also hold for $\R^k$ with $k \in \N$.
Equivalence of  $\R$- and $\R^k$-circuits was shown in \cite[Remark 2.4]{DBLP:conf/nips/BarlagHSVV24}. 
\begin{definition}
    \label{def:circuit}
    Let $n,m \in \mathbb{N}$.
    An \emph{arithmetic circuit} with $n$ inputs and $m$ outputs is a simple directed acyclic graph of labelled nodes, also called \emph{gates}, such that
    \begin{itemize}[noitemsep,topsep=0pt]
        \item there are exactly $n$ input gates, which each have indegree $0$,\looseness=-1
        \item there are exactly $m$ output gates, which have indegree $1$ and outdegree $0$,
        \item there are constant gates, which have indegree $0$ and are labelled with a value $c \in \R$,
        % \item gates labelled \emph{sign} have indegree $1$ and outdegree $1$, \timonS{do we need sign gates for anything?}
        %\item there are projection gates labelled \emph{projection$_{i, j}$} for $1 \leq i,j \leq k$, which have in- and outdegree $1$,
        \item there are arithmetic gates, which are associated with addition or multiplication.
    \end{itemize}
    % \timon{input gates labelled input etc is a bit weird}
    % \jonni{Inputs could we those that have in-degree 0, and are not constants (nor auxilliary memory gates). Output gates could be the ones without outgoing edges. This could create a issue with non-output halting gates, when "designing" circuits though.}
    Both the input and the output gates are ordered.

    The \emph{depth} of $C$ (written $\depth(C)$) is the length of the longest path from an input gate to an output gate in $C$ and the \emph{size} of $C$ (written $\size(C)$) is the number of gates in $C$.
%    We will also write $\depth(C)$ and $\size(C)$ to denote these values.
%
    For a gate $g$ in $C$, we will write $\depth(g)$ to denote the length of the longest path from an input gate to $g$ in $C$.
\end{definition}

% \heribert{In Def.~\ref{def:labelled_graph}, we define labelled graph, but here we (additionally) use different labels.}
% \jonni{We could use the term "associate" and "associated" instead of label. }
\begin{definition}
\label{def:circuit_function}
%    Let $n, m \in \N$ and let $C$ be an arithmetic circuit with $n$ inputs and $m$ outputs.
    % Each gate in an arithmetic circuit $C$ with $n$ inputs and $m$ outputs computes the function $f^C_g \colon (\R)^n \to (\R)$ as follows:

    % \begin{itemize}
    %     \item If $g$ is the $i$th input gate, then $f_g(\ol{x}) = x_i$.
    %     \item If $g$ is a constant gate labelled $c$, then $f_g(\ol{x}) = c$.
    %     \item If $g$ is a projection gate labelled \textit{projection$_{i, j}$}, then $f_g(\ol{x})$
    % \end{itemize}
    An arithmetic circuit $C$ with $n$ inputs and $m$ outputs computes the function $f_C \colon \R^n \to \R^m$ as follows:
    First, the input to the circuit is placed in the input gates.
    Then, recursively, each arithmetic gate whose predecessor gates all have a value, computes its own value by applying the function it is labelled with to the values of its predecessors.
    %By addition and multiplication we refer to the respective operations.
    %and the projection computes the function $\proj[i,j]{} \colon \R \rightarrow \R,\: (x_1, \dots, x_i, \dots x_k) \mapsto (0, \dots, 0, \underset{\text{position }j}{x_i}, 0, \dots, 0)$.
    Finally, each output gate takes the value of its predecessor, once its predecessor has one. 
    The output of $f_C$ is then the tuple of values in the $m$ output gates of $C$ after the computation.
    % At times it is convenient to be able to talk about the function computed up to an individual gate of a circuit.
    % For a gate $g$ in $C$, we thus define $f_g^C \colon \R^n \to \R$, where $f_g^C(\ol{x})$ is the value that $g$ takes in the computation of $C$ on input $\ol{x}$.
    The function computed up to an individual gate $g$ of a circuit $C$ is defined as $f_g^C \colon \R^n \to \R$, where $f_g^C(\ol{x})$ is the value that $g$ takes in the computation of $C$ on input $\ol{x}$.
\end{definition}

% \begin{definition}
%     An \emph{arithmetic circuit family} $\C$ is a sequence $(C_{n})_{n \in \N}$ of circuits, where each circuit $C_{n}$ has exactly $n$ input gates. 
%     Its $\depth$ and $\size$ are functions mapping natural number $n$ to $\depth(C_n)$ and $\size(C_n)$, respectively.

%     An arithmetic circuit family $\C = (C_{n})_{n \in \N}$ computes the function $f_\C \colon \R^* \to \R^*$ defined as 
%     \(
%         f_\C(\ol{x}) \coloneqq f_{C_{\lvert \ol{x} \rvert}}(\ol{x}).
%     \)
%     \label{def:circuit_family}
% \end{definition}

% \begin{definition}
%     \label{def:fnc_class}
%     For any $s, d \colon \N \to \N$, the class $\SDk(s, d)$ is the class of all functions $\R^* \to \R^*$ that are computable by arithmetic circuit families of size  $\bO(s)$ and depth $\bO(d)$.
%     Let $\mathcal{A}$ be a set of functions of the form $f \colon \R \to \R$. 
%     Then $\SDk(s, d)[\mathcal{A}]$ is the class of functions computable by arithmetic circuit families with the same constraints, but with additional gate types $g_f$, with indegree and outdegree $1$ that compute $f$, for each $f\in\mathcal{A}$.  
% \end{definition}
%     We call all classes of the form $\SDk(s,d)[\mathcal{A}]$ (and their subclasses) \emph{circuit function classes}.
% As usual in circuit complexity, an \emph{$\mathfrak{F}$-circuit family}, where $\mathfrak{F}$ is a class of functions, will denote a circuit family that computes a function in $\mathfrak{F}$.
    
Adding recurrence extends the definition of an arithmetic circuit by a halting function and the notion of memory gates. 
It allows for a circuit to be iteratively executed multiple times until a halting condition is met. 
In order to save information between the iterations of computation, we first extend our model of arithmetic circuits with so called \emph{auxiliary memory gates} and call the resulting model an \emph{extended arithmetic circuit}.

% \jonni{The reader will probably think that what weird things these auxiliary memory gates are. Also, if the circuit really encodes the first-value of the memory gates. Then the circuit and circuit family computes a total function (without $\ol{a}$), and can be seen to compute the partial function.}
\begin{definition} \label{def:ext_circ}
    Let $\ell ,m,n \in \mathbb{N}$.
    An \emph{extended arithmetic circuit} with $n$ inputs and $m$ outputs is an arithmetic circuit with $\ell$ additional gates labelled \emph{auxiliary memory gates}
    %which are labelled with real values $\ol{a}\in \R^\ell$.
    which are ordered and behave in the same way as input gates.
    %apart from being labelled with their initial values.
    The term \emph{memory gates} is used to refer to the set of input and auxiliary memory gates.%\looseness=-1

    An extended arithmetic circuit $C$ 
    %with $n$ inputs, $\ell$ auxiliary memory gates and $m$ outputs 
    extends the computation of an arithmetic circuit by dependence on the auxiliary memory gates and computes the function $f_C \colon \R^n\times \R^\ell \to \R^m$.
    The functions computed up to individual gates are extended in the same way such that for each gate $g$ in $C$, we define $f_g^C \colon \R^n \times \R^\ell \to \R$, where $f_g^C(\ol{x}, \ol{a})$ is the value that $g$ takes in the computation of $C$ on input $\ol{x}$ and with auxiliary memory gate values $\ol{a}$.
\end{definition}
%The size and depth of extended arithmetic circuits are defined as in Definition~\ref{def:circuit}, where the auxiliary memory gates are counted like the other non-input gates.
%\jonni{Memory gates should be treated as input gates in the size depth calculations. I think depth has to be defined as a maximum distance from one gate to another, otherwise some weird things might happen.}
The size of extended arithmetic circuits is defined as in Definition~\ref{def:circuit}, whereas the depth is the maximum distance between any two gates.

\begin{definition} \label{def:extended_circuit_family}    
 An \emph{extended arithmetic circuit family} $\C$ is a sequence $(C_{n})_{n \in \N}$ of extended circuits, where for all $n\in \N$ the circuit $C_{n}$ has exactly $n$ input gates. 
    Its $\depth$ and $\size$ are functions mapping natural number $n$ to $\depth(C_n)$ and $\size(C_n)$, respectively.

    % An extended arithmetic circuit family $\C = (C_{n})_{n \in \N}$ computes the partial function $f_\C \colon \R^*\times \R^*\to \R^*$ defined as 
    % \[
    %     f_\C(\ol{x}, \ol{a}) \coloneqq \begin{cases}
    %         f_{C_{\lvert \ol{x} \rvert}}(\ol{x}, \ol{a}), &\textnormal{if $C_{\lvert \ol{x} \rvert}$ has exactly } \lvert \ol{a} \rvert\\ &\text{auxiliary memory gates} \\
    %         \textnormal{undefined,} & \textnormal{otherwise.}
    %     \end{cases}
    % \] 
    An extended arithmetic circuit family $\C = (C_{n})_{n \in \N}$ computes the function $f_\C \colon \R^*\times \R^*\to \R^*$ defined as 
        $f_\C(\ol{x}, \ol{a}) \coloneqq
            f_{C_{\lvert \ol{x} \rvert}}(\ol{x}, \ol{a})$.
    If $C_{\lvert \ol{x} \rvert}$ has more than $\lvert \ol{a} \rvert$ auxiliary memory gates, the input to $f_{C_{\lvert \ol{x} \rvert}}$ is padded with zeroes, and
    if it has fewer, the overflow %rest of the input into the auxiliary memory gates 
    is cut off.
    % \(
    %     f_\C(\ol{x}, \ol{a}) \coloneqq f_{C_{\lvert \ol{x} \rvert}}(\ol{x}, \ol{a})
    % \) 
    % if $C_{\lvert \ol{x} \rvert}$ has exactly $\lvert \ol{a} \rvert$ auxiliary memory gates and is otherwise undefined.
    The number of auxiliary memory gates need not correlate with the input size and can change for every circuit in the family.
\end{definition}

%\jonni{The functions below are partial functions, but do we want $\faci$ to denote partial function of type $f \colon \R^* \times \R^* \to \R^*$ or functions of type $f \colon \R^* \to \R^*$?}

\begin{definition}
    \label{def:fnc_class}
    For $s, d \colon \N \to \N$, $\SDr(s, d)$ is the class of all functions $\R^* \times \R^*\to \R^*$ that are computable by extended arithmetic circuit families of size  $\bO(s)$ and depth $\bO(d)$.
    Let $\mathcal{A}$ be a set of functions $f \colon \R \to \R$. 
    Then $\SDr(s, d)[\mathcal{A}]$ is the class of functions computable by extended arithmetic circuit families with the same constraints, but with additional gate types $g_f$, with indegree and outdegree $1$ that compute $f$, for each $f\in\mathcal{A}$.  
\end{definition}

We are interested in functions that are ``simple'' in the sense that they can be computed by circuit families with reasonable size and depth.

\begin{definition}
   % \timon{Redefined circuit function classes to just be $\faci[\mathcal{A}]$}
    \label{def:fac_class}
    For $i \in \N$, the class $\faci$ denotes the set of functions $f \colon \R^* \times \R^* \to \R^*$ computable by extended arithmetic circuit families of size $\bO(n^{\bO(1)})$ and depth $\bO((\log n)^i)$.

    % Let $\mathcal{A}$ be a set of functions of the form $f \colon \R \to \R$.
    % Then $\faci[\mathcal{A}]$ is the class of functions computable by extended arithmetic circuit families with the same constraints, but with additional gate types $g_f$ with indegree and outdegree $1$ that compute $f$ for each $f \in \mathcal{A}$.
\end{definition}
For a set $\mathcal{A}$, the class $\faci[\mathcal{A}]$ is defined as in Definition \ref{def:fnc_class}.
We call such classes, i.\,e., classes of the form $\faci[\mathcal{A}]$, \emph{circuit function classes}.
Since those are the only classes we are interested in, whenever we write $\mathfrak{F}$ in the remainder of the paper, we mean that $\mathfrak{F}$ is a circuit function class.

Furthermore, for any $\mathfrak{F} = \faci[\mathcal{A}]$ we will use the term \emph{$\mathfrak{F}$-circuit family} to denote those circuit families adhering to the size and depth requirements of $\mathfrak{F}$.
\begin{remark}
    % The same terms of circuit families and circuit function classes are defined for classical arithmetic circuits as described in Definition \ref{def:circuit}. 
    Circuit families and circuit function classes are defined analogously for (non-extended) arithmetic circuits as described in Definition~\ref{def:circuit}.
\end{remark}

The second component of a recurrent arithmetic circuit is the halting condition.
We define the halting condition as a function based on a natural number which will represent the number of iterations of the circuit, and a tuple of reals which will represent values of some predefined gates of the circuit.
Note that this functionality could also be simulated by a function without the additional parameter for the number of iterations by using a sub circuit to count this instead, but we keep it to be consistent with the model of recurrent GNNs we introduce later on. 
This equivalence is shown in Lemma~\ref{lem:hlt_fnc_equivalence_extra_parameter} in the appendix.

\begin{definition} \label{def:halting_function}
    A \emph{halting function} is a function of the form $\halt \colon \N_{>0} \times \R^* \to \{0,1\}$.
\end{definition}
% The function need not depend on the natural number.
% \vivian{maybe add something that a recurrent circuit with a function that is only dependent on the natural number could just be simulated by one where the circuit is composed multiple times}
% \begin{example}
%     Let $\halt \colon \R^* \to \{0,1\}$, \[\halt (x, M) \coloneqq (x \geq 5) \cdot \left(\sum_{\ol{m} \in M} \ol{m} < 20\right).\]
%     This halting function evaluates to one if and only if the circuit it is based on is iterated at least 5 times and the sum over all values isn't greater than 20. 
%     \vivian{it would be nice to have an example of a halting function here already, but not sure how smart it is without the definition of a recurrent circuit}
% \end{example}

Note that the halting function maps to discrete values. In what follows, we will also consider halting functions that can be computed by circuits, i.\,e.~they are of some circuit function class. 
Since plain arithmetic circuits can only compute continuous functions, we have to make use of the additional gate types to achieve the non-continuous nature of the halting function. 
For this we often use the sign function as an additional gate type:\looseness=-1
\[ 
\text{sign}(x) \dfn 
\begin{cases}
    1, \text{ if }x > 0\\
    0, \text{ otherwise.}
\end{cases}
\]
Given a circuit function class $\mathfrak{F}$ we write $\mathfrak{F}_s$ %\vivianS{we need to decide how we want to do it for actual classes, like $\faco$, ${\faco}_s$ does not look good} 
as a shorthand for $\mathfrak{F}[\text{sign}]$.
%Throughout the paper we assume that the halting function is never the constant zero function, since we want our models to be able to halt eventually.% \vivianS{consider functions that are constant 0 but not the zero function in the results}
%\vivianS{probably don't need this if we make the halting only dependent on a fixed set of gates of the circuit}
%\timonS{Do we need to exclude the case that the halting function is constant $0$}

% \jonni{Do we want to restrict the outdegree of output gates (also the target should be a memory gate)? One possibility is to say that memory gates are particular constant gates that  have in-degree 1. Probably the circuit below should be defined more directly, now the definition is hard to find.}
% \vivian{decide if we want to have output gates of outdegree 1 that connect to memory gates i.\,e. every memory gate has an output gate as its predecessor. Restricts the output dimension of the circuit. Same could probably be simulated by a circuit without this restriction.}
% \vivian{or just have the output gates as in a normal arithmetic circuit with outdegree 0 and only have edges to memory gates only from non-output gates}

% To distinguish recurrent arithmetic circuits from plain ones, we often use the prefix $\textit{rec-}$ when referring to recurrent circuits.

\begin{definition} \label{def:rec_circ}   
    Let $\ell, m,n \in \mathbb{N}$.
    A \emph{recurrent arithmetic circuit} with $n$ inputs, $\ell$ auxiliary memory gates and $m$ outputs is a tuple $(C, \ol{a}, E_\text{rec}, \halt, V_\text{halt})$, where $C$ is an extended arithmetic circuit with $n$ inputs, $m$ outputs and $\ell$ auxiliary memory gates with a tuple of initial values $\ol{a} \in \R^\ell$. 
    We call $C$ the \emph{underlying extended arithmetic circuit}.
    $E_\text{rec}$ is a set of edges consisting of exactly one incoming edge for each memory gate in $C$.
    As before, the memory gates consist of the input and auxiliary memory gates.
    Every memory gate has fan-in one, the output gates have fan-out zero.

    % \timon{Should the below maybe be put to the definition of the computed function definition?}
    % \vivian{rewrote it, check whether this is good}
    % \timon{looks good to me}
    
    % The rounds of execution of the circuit are called \emph{iterations}.
    % The halting of the recurrent circuit is determined by a halting function $\halt \colon \N_{>0} \times \R^{\vert V_\text{halt}\rvert} \to \{0,1\}$ defined on the current number of iteration and the values of the ordered \emph{halting gates} $V_\text{halt}$, a subset of the gates of $C$.
    The halting of the recurrent circuit is determined by a halting function $\halt \colon \N_{>0} \times \R^{\vert V_\text{halt}\rvert} \to \{0,1\}$, where $V_\text{halt}$ are the \emph{halting gates}, a subset of the gates of $C$.
\end{definition}
The structure of a recurrent circuit is depicted in Figure~\ref{fig:rec_circ_structure} and a complete example is given in Example~\ref{ex:fib_numbers}.

\begin{figure}[tb]
    \centering
    \scalebox{.85}{
            \begin{tikzpicture}
                [every node/.style={minimum width =.9cm, minimum height= .5cm, outer sep=0pt}, every path/.style={-stealth}
                    ]
                    \node     (i1)        {$\textit{in}_1$};
                    \node     (dots1)     [right =0.1 of i1]      {$\dots$};
                    \node     (in)        [right =0.1 of dots1]     {$\textit{in}_n$};
                    \node     (m1)        [right =0.5 of in]           {$\textit{aux}_1$};
                    \node     (dots2)     [right =0.1 of m1]      {$\dots$};
                    \node     (ml)        [right =0.1 of dots2]     {$\textit{aux}_\ell$};
                    \node     (o1)        [below = 3.1cm of in.south west]           {$\textit{out}_1$};
                   
                    \node     (om)        [below = 3.1cm of m1.south east]     {$\textit{out}_m$};
                     \node     (dots3)     at    ($(o1)!.5!(om)$)      {$\dots$};
                    \node[fill=gray!25, rectangle, minimum width =6cm, minimum height=2cm, draw] (rectangle) [below = 0.8 of $(in)!0.5!(m1)$] {
                    };
                    \node[ellipse, fill=white, draw, minimum height=1.2cm] (vhalt) [below = 1.2cm of $(in)!0.5!(m1)$] {
                    $V_\text{halt}$};
                    \node[inner sep =8pt, rectangle, draw, fit=(i1) (ml) (om) (rectangle), label={below: \textit{underlying extended arithmetic circuit} $C$}] {};
        
                    \draw[-stealth]     (i1)      -- ++(-0,-0.8cm)      (rectangle);
                    \draw[-stealth]     (in)      -- ++(-0,-.8cm)      (rectangle);
                    \draw[-stealth]     (m1)      -- ++(-0,-.8cm)      (rectangle);
                    \draw[-stealth]     (ml)      -- ++(-0,-.8cm)      (rectangle);
                    \draw[stealth-]     (o1)-- ++(-0,.8cm) (rectangle) ;
                    \draw[stealth-]      (om)  -- ++(-0,.8cm)  (rectangle);
                    \draw[dashed]   ($(rectangle.east)+(0,-0.8)$) |- ++(1,0)  |- ($(m1)+(0,1)$) -- (m1.north);
                    \draw[dashed]   ($(rectangle.east)+(0,-0.5)$) |- ++(0.8,0)  |- ($(ml)+(0,0.8)$) -- (ml.north);
                    \draw[dashed]   ($(rectangle.west)+(0,-0.8)$) |- ++(-1,0)  |- ($(in)+(0,1)$) -- (in.north);
                    \draw[dashed]   ($(rectangle.west)+(0,-0.5)$) |- ++(-0.8,0)  |- ($(i1)+(0,0.8)$) -- (i1.north);
            \end{tikzpicture}}
    \caption{Structure of a recurrent arithmetic circuit: $\text{in}_i$ 
    are the input gates, $\text{aux}_i$ are the auxiliary memory gates, $\text{out}_i$ are the output gates and $V_\text{halt}$ the halting gates. Dashed edges are the recurrent edges of the set $E_\text{rec}$. 
    The halting function is not depicted.}
    \label{fig:rec_circ_structure}
\end{figure}
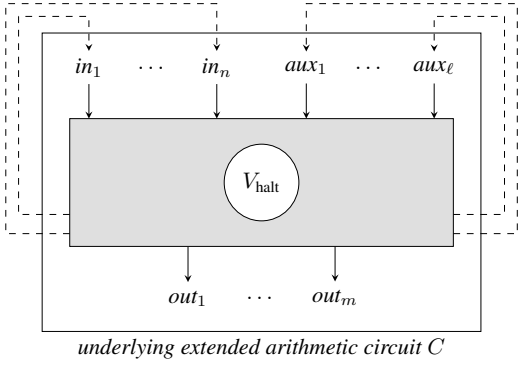

% A recurrent arithmetic circuit $\textit{rec-}C =(C, \ol{a}, E_\text{rec}, \halt, V_\text{halt})$ with $n$ inputs, $\ell$ auxiliary memory gates and $m$ outputs, where $\ell, m,n \in \mathbb{N}$, computes a function $f_{\textit{rec-}C} \colon \R^n \to \R^m$ as follows:
%     In addition to the input gates auxiliary memory gates get assigned their initial constant.
%     The underlying extended arithmetic circuit is then executed with the input values and the initial memory gate values as in Definition \ref{def:ext_circ} until all gates have a value. 
%     If $\halt(i, \text{val}(V_\text{halt}))=1$, where $i \in \N$ is the number of the current iteration and $\text{val}(V_\text{halt})$ is the tuple of values in the predefined ordered subset $V_\text{halt}$ of the gates,
%     the output of $f_{\text{rec-}C}$ is the tuple of values in the $m$ output gates.
%     Otherwise the memory gates take the value of their respective predecessors as defined by $E_\text{rec}$, all other non-constant gates are reset to have no value and the computation of the underlying circuit is executed again.

    A recurrent arithmetic circuit computes a function as follows:
    In addition to the input gates, the auxiliary memory gates get assigned their initial constant.
    The underlying extended arithmetic circuit is then executed with the input values and the initial memory gate values as in Definition~\ref{def:ext_circ} until all gates have a value. 
    If the halting function evaluates to $1$ the output of the function is the tuple of values in the output gates of the recurrent circuit.
    Otherwise the memory gates take the value of their respective predecessors, all other non-constant gates are reset to have no value and the computation of the underlying circuit is executed again.
    %This is defined formally as follows.

\begin{definition}
    Let $\ell, m,n,p \in \mathbb{N}$.
    Let $\textit{rec-}C =(C, \ol{a}, E_\text{rec}, \halt, V_\text{halt})$ be a recurrent arithmetic circuit with $n$ inputs, $\ell$ auxiliary memory gates and $m$ outputs, an ordered set of $p$ halting gates $V_\text{halt} = \{v_1, \dots,v_p\}$ and a halting function $\halt \colon \N_{>0} \times \R^p \to \{0,1\}$.
    The halting function operates on the current iteration number and the ordered values of the gates $V_\text{halt}$.
    For all gates $g$ in $C$ let $f_g \colon\R^n \times \R^\ell \to \R$ be the function that computes the value that $g$ takes in the computation of $C$.
    The recurrent arithmetic circuit $\textit{rec-}C$ computes the function $f_{\textit{rec-}C} \colon \R^n \to \R^m$ defined as $f_{\textit{rec-}C}(\ol{x}) \coloneqq f_\text{mem}^1(\ol{x}, \ol{a})$,
    where $\ol{a}$ are the initial constants in the auxiliary memory gates and $\ol{x}$ is the input. 
    The function $f_\text{mem}^i \colon\R^n \times \R^\ell \to \R^m$ is recursively defined as 
    % \[
    %     f_\text{mem}(\ol{x}, \ol{a}) \coloneqq \begin{cases}
    %     \begin{aligned}
    %         f_\text{mem}\big(&f^C_{\text{pi}_1}( \ol{x}, \ol{a}), \dots, f^C_{\text{pi}_n}( \ol{x}, \ol{a}), \\
    %         &f^C_{\text{pa}_1}( \ol{x}, \ol{a}), \dots, f^C_{\text{pa}_\ell} ( \ol{x}, \ol{a})\big),
    %     \end{aligned}
    %          & \begin{aligned}
    %              f_\text{halt}\Big(&i, \big(f^C_{v_1}(\ol{x}, \ol{a}), \dots,\\ 
    %              &f^C_{v_p}(\ol{x}, \ol{a})\big)\Big)=0
    %          \end{aligned}\\
    %        \left(f^C_{\text{out}_1}(\ol{x}, \ol{a}\right) \dots f^C_{\text{out}_m}(\ol{x}, \ol{a})),  & \text{else}
    %     \end{cases}
    % \] \vivianS{maybe use some shorthand vector notation instead?}
    % \jonni{Something like $f_\text{mem}^{i+1}\big(\vec{y}, \vec{b}\big)$, where $y_j=f^C_{\text{pr}(\text{in}_j)}( \ol{x}, \ol{a})$ and  $b_j=f^C_{\text{pr}(\text{aux}_j)}( \ol{x}, \ol{a})$ and $\text{pr}(\text{aux}_j)$ is the predecessor gate of $\text{aux}_j$.}
    % \vivian{alternative:
    % \[
    %     f_\text{mem}^i(\ol{x}, \ol{a}) \coloneqq \begin{cases}
    %         f_\text{mem}^{i+1}\big((f^C_{\text{pr}(\text{in}_j)}( \ol{x}, \ol{a}))_{j \in [n]}, (f^C_{\text{pr}(\text{aux}_j)}( \ol{x}, \ol{a}))_{j \in [\ell]})\big),
    %              &f_\text{halt}(i, (f^C_{v_j}(\ol{x}, \ol{a})_{j\in[p]})=0\\
    %        \left(f^C_{\text{out}_j}(\ol{x}, \ol{a})\right)_{j\in [m]},  & \text{else}
    %     \end{cases}
    % \]}
    \[
        f_\text{mem}^i(\ol{x}, \ol{a}) \coloneqq \begin{cases}
            f_\text{mem}^{i+1}\left(\ol{y}, \ol{b}\right),   &\halt(i, \ol{v})=0\\
            \ol{o},                      &\text{else}
        \end{cases}
    ,\]
    where $\ol{y} \in \R^n $ with $y_j=f^C_{\text{in}_j}( \ol{x}, \ol{a})$ for each $j\leq n$, $\ol{b}\in \R^\ell$ with $b_j=f^C_{\text{aux}_j}( \ol{x}, \ol{a})$ for each $j\leq \ell$, $\ol{v} \in \R^p$ with $v_j = f^C_{v_j}(\ol{x}, \ol{a})$ for each $j\leq p$, and $\ol{o} \in \R^m$ with $o_j = f^C_{\text{out}_j}(\ol{x}, \ol{a})$ for each $j\leq m$.
    % $\ol{y} \in \R^n $ with $y_j=f^C_{\text{pr}(\text{in}_j)}( \ol{x}, \ol{a})$, $\ol{b}\in \R^\ell$ with $b_j=f^C_{\text{pr}(\text{aux}_j)}( \ol{x}, \ol{a})$ and $\ol{o} \in \R^m$ with $o_j = f^C_{\text{out}_j}(\ol{x}, \ol{a})$ and $\ol{v} \in \R^p$ with $v_j = f^C_{v_j}(\ol{x}, \ol{a})$.
    Here, $\text{in}_j$ ($\text{aux}_j$, $\text{out}_j$, $\text{out}_j$, resp.) refers to the predecessor of the $j$th input gate (auxiliary memory gate, halting gate, output gate, resp.) and $i$ is the number of the current iteration. 
    The tuples $\ol{y}, \ol{b}, \ol{o}, \ol{v}$ are the new values of these gates after one iteration of the underlying circuit $C$.
    The function $f_\text{mem}^i$ maps memory gate values to memory gate values until the halting condition is reached.
\end{definition}

\begin{example} \label{ex:fib_numbers}
For $x_1 \in \N$, 
the recurrent circuit $\textit{rec-}C = (C, \ol{a}, E_\text{rec}, \halt, V_\text{halt})$ as illustrated in Figure~\ref{fig:examplereccirc} computes the $x_1$-th Fibonacci number when given $x_1 > 1$ as the input.

The recurrent circuit $\textit{rec-}C$ has one input gate $\textit{in}_1$ which receives the input $x_1$, two auxiliary memory gates $\textit{aux}_1$ and $\textit{aux}_2$ and one output gate $\textit{out}_1$. 
%Its recurrent edges $E_\text{rec}$ are given by the dashed lines and 
The set $V_\text{halt}$ consists only of $\textit{in}_1.$% which is marked in bold.
The halting function $f_\mathrm{halt}$ of $\textit{rec-}C$ is the function computed by the circuit $C_\mathrm{halt}$ on the right.

The circuit computes the $x_1$-th Fibonacci number as follows.
Initially, the input $x_1$ to $\textit{rec-}C$ is the index of the desired Fibonacci number and the auxiliary memory gates are set up with the initial values $\ol{a} = (a_1, a_2)$ where $a_1=1$ and $a_2=0$, respectively, which are the first (and 0th) Fibonacci numbers. 
The addition gate now computes the sum of the values of the gates $\textit{aux}_1$ and $\textit{aux}_2$.
Before the next iteration of the circuit begins, the halting condition is checked.
In this case, $C_\text{halt}$ examines whether the number of iteration rounds (the value of input gate $i$) and the input to $\textit{rec-}C$, $x_1$, are equal.
%\jonni{Something weird with the explanation here.}
But since the first Fibonacci number was stored in $\textit{aux}_1$ and not calculated by the recursive scheme, we need to subtract $1$ from $x_1$.
The function of the equality gate ($=$) can be computed using a (non-recurrent) circuit of depth 7 facilitating the following formula:
    $x_1 = x_2 \coloneqq ((\text{sign}(x_1-x_2) + \text{sign}(x_2-x_1)) \times -1) +1$.
    %where each subtraction is computed as $x_1 - x_2 \coloneqq x_1 + (x_2 \times (-1))$. 
If $C_\text{halt}$ outputs 1, $\textit{rec-}C$ outputs the value given to the output gate.
Otherwise, the values of the auxiliary memory gates are updated:
The value of $\textit{aux}_1$ gets updated with the sum of the previous values of $\textit{aux}_1$ and $\textit{aux}_2$, while the value of $\textit{aux}_2$ gets updated with the previous value of $\textit{aux}_1$.\looseness=-1
%\laura{todo: add description of how to compute equality (=) gate}
\setlength\columnsep{1cm}
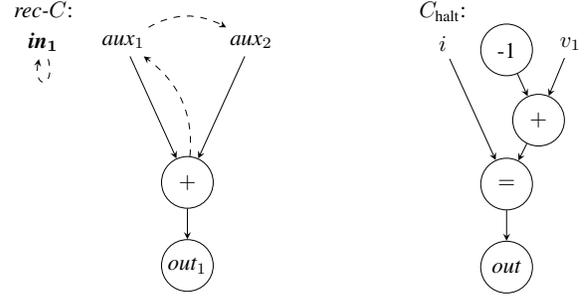
\begin{figure}[tb]
    \centering
    \vspace{-.3cm}
    \begin{multicols}{2}
    \scalebox{.85}{\begin{tikzpicture}
        every node/.style={minimum size =1cm, outer sep=0pt}, every path/.style={-stealth, ->}
        \tikzset{basic/.style={draw,fill=none,
                       text badly centered,minimum width=2em}}
        \tikzset{gate/.style={basic,circle,minimum width=2.3em, inner sep=0pt}}
        \tikzset{back/.style={basic, %color=blue,
        ->, dashed}}

        \node[] (name) at (0.75,.5) {$\textit{rec-}C$:};
        
        %input and aux memory gates
        \node[] (x1)    at (0.75,0)    {$\mathbf{\textbf{\textit{in}}_1}$};
        \node[] (m1)    at (2,0)    {$\textit{aux}_1$};
        \node[] (m2)    at (4,0)    {$\textit{aux}_2$};
        %\node[gate] (m3)    at (6,0)    {$m_3$};

        %arithmetic gates
        \node[gate] (add1)  at (3,-2.2) {$+$};
        %\node[gate] (mul1)  at (4.5,-2) {$\times$};
        %\node[gate] (add2)  at (3.75,-4){$+$};
        \node[gate] (out)   at (3,-3.5) {$\textit{out}_1$};
        %\node[gate, color=red] (add3)  at (6,-2)   {$+$};

        %constant gates
        %\node[gate] (c-1)   at (5,-1)   {-1};
        %\node[gate] (c1)    at (7, -1)  {1};

        %wires
        \draw[-stealth] (m1) to (add1) {};
        \draw[-stealth] (m2) to (add1) {};
        %\draw[->] (m2) to (mul1) {};
        %\draw[->] (c-1) to (mul1) {};
        \draw[-stealth] (add1) to (out) {};
        %\draw[->] (mul1) to (add2) {};
        %\draw[->] (m3) to (add3) {};
        %\draw[->] (c1) to (add3) {};
        %\draw[->] (add1) to (add2) {};

        %loops back
        \draw[bend right, back, -stealth] (add1) to (m1) {};
        %\draw[back] (add2) to (m2) {};
        \draw[back, bend left, -stealth] (m1) to (m2) {};
        \draw[back, loop below, -stealth] (x1) to (x1) {};

    \end{tikzpicture}}
    \scalebox{.85}{
    \begin{tikzpicture}
        every node/.style={minimum size =1cm, outer sep=0pt}, every path/.style={-stealth, ->}
        \tikzset{basic/.style={draw,fill=none,
                       text badly centered,minimum width=2.3em}}
        \tikzset{gate/.style={basic,circle,minimum width=2.3em}}
        \tikzset{back/.style={basic, color=blue, ->, dashed}}

        \node[] (name) at (0,.5) {$C_{\text{halt}}$:};

        %input and aux memory gates
        % \node[color=red] (x1)    at (0,0)    {$i$};
        % \node[color=red] (x2)    at (2,0)    {$v_1$};
        \node[] (x1)    at (0,0)    {$i$};
        \node[] (x2)    at (2,0)    {$v_1$};

        %arithmetic gates
        % \node[gate] (mul1)  at (1,-2) {$\times$};
        % \node[gate] (mul2)  at (3,-2) {$\times$};
        % \node[gate] (add1)  at (1,-3){$+$};
        % \node[gate] (add2)  at (3,-3) {$+$};
        % \node[gate] (add3)  at (2,-5)   {$+$};
        % \node[gate] (sig1)  at (1,-4)   {s};
        % \node[gate] (sig2)  at (3,-4)   {s};
        \node[gate] (out)  at (1,-3.5)   {\textit{out}};
        \node[gate] (add4) at (1.5,-1.2)    {$+$};
        \node[gate] (eq1) at (1,-2.2) {$=$};

        %constant gates
        % \node[gate] (c1)   at (1,-1)   {-1};
        % \node[gate] (c2)   at (3, -1)  {-1};
        % \node[gate] (c3)  at (2,-4)   {-1};
        \node[gate] (c4)  at (1,-0.1) {-1};

        %wires
        % \draw[->, bend right] (x1) to (add1) {};
        \draw[-stealth] (x2) to (add4) {};
        % \draw[->] (add4) to (mul2) {};
        % \draw[->] (add4) to (add2) {};
         \draw[-stealth] (add4) to (eq1) {};
         \draw[-stealth] (x1) to (eq1) {};
        % \draw[->] (c2) to (mul2) {};
        \draw[-stealth] (c4) to (add4) {};
        % \draw[->] (mul1) to (add2) {};
        % \draw[->] (mul2) to (add1) {};
        % \draw[->] (add1) to (sig1) {};
        % \draw[->] (add2) to (sig2) {};
        % \draw[->] (c3) to (add3) {};
        % \draw[->] (sig1) to (add3) {};
        % \draw[->] (sig2) to (add3) {};
        \draw[-stealth] (eq1) to (out) {};

        %loops back
        % \draw[bend right, back] (add1) to (m1) {};
        % \draw[back] (add2) to (m2) {};
        % \draw[back, bend right] (add3) to (m3) {};
        % \draw[back, loop below] (x1) to (x1) {};
    \end{tikzpicture}}
    \end{multicols}
    \vspace{-.5cm}
    \caption{Illustration of recurrent circuit $\textit{rec-}C$ with halting circuit $C_\text{halt}$ computing the $x_1$-th Fibonacci number for $1 < x_1 \in \N$. The dashed arrows mark the recurrent edges and the bold gate $\textit{in}_1$ is the single halting gate, i.\,e.~the gate whose value forms the input for $C_\text{halt}$ along with the iteration number $i$.}
    \label{fig:examplereccirc}
\end{figure}
\end{example}
As the Fibonacci sequence cannot be expressed as a polynomial there is no non-recurrent arithmetic circuit family that computes it. 

Next, we define the notion of recurrent circuit families similar to Definition~\ref{def:extended_circuit_family}.

% \begin{remark}
%     For technical reasons we assume the arity of the halting functions of a recurrent circuit family to be strictly monotonically increasing with the input size of the corresponding circuit.
%     We achieve this by increasing the number of halting gates by adding additional gates to the circuit.
%     This ensures that we only have one halting function per arity.
%     The additional inputs are ignored in the computation of $\halt$ and can be given arbitrary input values.
% \end{remark}
\begin{definition} \label{def:rec_circ_family}
    A \emph{recurrent arithmetic circuit family} $\textit{rec-}\C$ is a sequence $(\textit{rec-}C_{n})_{n \in \N}$ of recurrent circuits, where each circuit $\textit{rec-}C_{n}$ has exactly $n$ input gates.
    The underlying circuits of $\textit{rec-}\C$ define an \emph{extended arithmetic circuit family} $\C = (C_{n})_{n \in \N}$, also called the \emph{underlying circuit family}.
    
    The $\depth$ and $\size$ of $\textit{rec-}\C$ are defined by the $\depth$ and $\size$ of the underlying circuit family $\C$.

    A recurrent arithmetic circuit family $\textit{rec-}\C = (\textit{rec-}C_{n})_{n \in \N}$ computes the function $f_{\textit{rec-}\C} \colon \R^* \to \R^*$ defined as 
    \(
        f_{\textit{rec-}\C}(\ol{x}) \coloneqq f_{{\textit{rec-}C}_{\lvert \ol{x} \rvert}}(\ol{x}).
    \)
    % It holds for the sequence of halting functions $(\halt^n)_{n \in \N}$ that the arity of $\halt^n$ is strictly monotonically increasing with $n$.
    % The same holds for the size $|V^n_\text{halt}|$, the number of halting gates of the recurrent circuit $\textit{rec-}C_n$.
    The set of halting functions of a recurrent circuit family $\textit{rec-}\C=(\textit{rec-}C_{n})_{n \in \N}$ is defined as $\mathcal{F} = \{\halt^n \mid \halt^n \text{ in } \textit{rec-}C_n,\,  n \in \N \}$, where $(\halt^n)_{n \in \N}$ is the sequence of halting functions of $\textit{rec-}\C$.    
\end{definition}

\begin{remark}
%    Let $p \in \N$.
    Without loss of generality, we stipulate that if two recurrent circuits $\textit{rec-}C$ and $\textit{rec-}C'$ in a recurrent circuit family have the same number of halting gates,
    %i.\,e. $|V_\text{halt}| = |V_\text{halt}'|$, 
    then they use the same halting functions, i.e.,  $|V_\text{halt}| = |V_\text{halt}'|$ implies $f_\text{halt}=f'_\text{halt}$.
    %and $f_\text{halt} \colon \N_{>0} \times \R^p \to \{0,1\}$ and $f'_\text{halt} \colon \N_{>0} \times \R^p \to \{0,1\}$ where there exist $i \in \N, \ol{v}\in \R^p$ such that $\halt(i, \ol{v}) \neq f'_\text{halt}(i, \ol{v})$.
    This is not a restriction as we can always increase the arity of a function by adding additional inputs that do not affect the output.
\end{remark}

%In the following definition, we will consider halting functions that are computed by circuits. 
The halting functions for each circuit are determined by the number of halting gates of the circuit.
Note that a recurrent circuit family might only use a subset of the functions of a circuit function family as its halting functions, as not all arities are necessarily utilised, and it might use the same halting function for circuits of different input size.
When expressing that the halting complexity of a recurrent circuit family is $\mathfrak{F}_s$, we mean that the respective set $\mathcal{F}$ is a subset of an $\mathfrak{F}_s$ function family, rather than an element of $\mathfrak{F}_s$.

\begin{definition}
    We say that a function $\R^* \to \R^*$ is in the \emph{recurrent circuit function class} rec$[\mathfrak{F}_\text{halt}]$-$\mathfrak{F}_C$ if it can be computed by a recurrent arithmetic circuit family where the function which its underlying arithmetic circuit family computes is in the circuit function class $\mathfrak{F}_C$ and its set of halting functions is a subset of a circuit family in the circuit function class $\mathfrak{F}_\text{halt}$.
\end{definition}

If not specified otherwise, we consider recurrent circuit function classes of the form $\recF{\mathfrak{F}_s}{\mathfrak{F}}$, i.\,e., recurrent circuit families where the underlying circuit family is in a circuit function class $\mathfrak{F}$ and the set of halting functions is a subset of a $\mathfrak{F}_s$-family, the same class extended by $\mathrm{sign}$. 
%of functions in the same class extended by $\mathrm{sign}$: $\mathfrak{F}_s$.
As every circuit family is also a recurrent circuit family that uses a constant halting function we have that $\mathfrak{F} \subseteq \recF{\mathfrak{F}_s}{\mathfrak{F}}$. 

%\vivian{difficult, only thing we can say is that if the halting depends on the iteration number the circuit halts (if it halts) in polynomial steps, as the halting function can only be a (piecewise) polynomial}

Note that our recurrent circuits are able to simulate halting by reaching a fixed point by adding an additional auxiliary memory gate for each output gate that saves the value of the last iteration of the circuit. 
% Those gates as well as the predecessors of the output gates need to be defined as halting gates. 
% The circuit computing the halting function then only checks for pairwise equality of the values.
The procedure is similar to that used in our Example~\ref{ex:fib_numbers} to save the previously computed Fibonacci numbers. 
Similarly it is able to simulate halting by some (input dependent) initial counter value reaching zero via adding an auxiliary memory gate to the underlying circuit that is decreased by one in each iteration.
The halting function is used to check whether the value of that gate reached zero.%\looseness=-1

For our later results we need to define the composition of two function families from a circuit function class $\recF{\mathfrak{F}_s}{\mathfrak{F}}$.
  Note that the circuit family computing the composition of those function families needs to have access to the sign gate.
%\jonni{In the theorem, we need to say that $\mathfrak{F}$ is a circuit function class. Then then one can simply say that $f$ is in $\recF{\mathfrak{F}_s}{\mathfrak{F}}$ (using Def. 3.14) }
%\jonni{Maybe a corollary that $\recF{\mathfrak{F}_s}{\mathfrak{F_s}} \circ \recF{\mathfrak{F}_s}{\mathfrak{F_s}} = \recF{\mathfrak{F}_s}{\mathfrak{F_s}}$.}

% \jonni{The below should be restricted to the FAC classes. And indeed, as Timon comments, we should concatenate function families and not functions.}
% \vivian{the whole paper is restricted to FAC classes, stated in the circuit section, maybe reformulate it there better so it is clearer here, that when we talk about any $\mathfrak{F}$ it is an FAC class, otherwise would also have problems in all proofs that are based on the composition result}
%\jonni{Let's see if CS people remember the wrongway arounf notation of function composition.}
\begin{restatable}{theorem}{ConcatRecCirc}
\label{thm:concat_rec_circ}
    %  Let $f \colon \R^n \to \R^m$ and $f' \colon \R^m \to \R^{m'}$ be functions from function families in a recurrent circuit function class $\recF{\mathfrak{F}_s}{\mathfrak{F}}$.
    % Then $f' \circ f$ is from a function family in $\recF{\mathfrak{F}_s}{\mathfrak{F}_s}$ where the circuit function class of the underlying circuit family is now required to use sign gates. 
    Let $(f_n)_{n\in \N}$, $(f'_n)_{n\in \N}$ be two function families in $\recF{\mathfrak{F}_s}{\mathfrak{F}}$.
    Then $(f'_n)_{n\in \N} \circ (f_n)_{n\in \N} = \left(f'_{n'}\circ f_n \right)_{n\in\N}$, where $n'\in\N$ is the output dimension of $f_n$, is a function family in $\recF{\mathfrak{F}_s}{\mathfrak{F}_s}$. %where the circuit function class of the underlying circuit family is now required to use sign gates. 
\end{restatable}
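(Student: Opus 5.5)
We build, for each $n$, one recurrent circuit that runs $\textit{rec-}C_n$ (for $f_n$) in a first phase and $\textit{rec-}C'_{n'}$ (for $f'_{n'}$) in a second phase. A single auxiliary memory gate $p$ records the current phase. The difficulty is that a recurrent circuit executes one fixed underlying circuit in every iteration. So both circuits are placed side by side in one extended circuit $D_n$, and sign-based selection decides which memory values are actually updated. This is why $\mathrm{sign}$ gates are needed inside the underlying circuit, and why the result lands in $\recF{\mathfrak{F}_s}{\mathfrak{F}_s}$ rather than $\recF{\mathfrak{F}_s}{\mathfrak{F}}$.

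\textbf{Construction of $D_n$.} The input gates of $D_n$ are those of $C_n$. The memory gates of $C'_{n'}$ (its inputs and auxiliary gates) become auxiliary memory gates of $D_n$ with the initial values of $C'_{n'}$, padded with zeros for the input part. We add:
\begin{itemize}[noitemsep,topsep=0pt]
\item the phase gate $p$, initialised to $0$;
\item counters $c$ and $c'$ for the iteration numbers of the two simulated circuits, initialised to $1$.
\end{itemize}
Each memory gate $g$ of $C_n$ receives the recurrent value $(1-p)\cdot \text{new}_g + p\cdot g$, so it is frozen once $p=1$. The counter $c$ is updated to $c+(1-p)$, and $c'$ to $c'+p$.

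\textbf{Phase switch.} Let $h$ be a gate computing $\halt^n(c,\text{val}(V_\text{halt}))$. The halting functions come from an $\mathfrak{F}_s$-family, so $h$ can be realised by a subcircuit of polynomial size and $\bO((\log n)^i)$ depth with sign gates. The new value of $p$ is $\mathrm{sign}(p+h)$, which becomes $1$ and stays $1$.

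\textbf{Phase 2.} In the iteration at which $p$ switches from $0$ to $1$, the input memory gates of $C'_{n'}$ are loaded with the output values of $C_n$ from that iteration. In later iterations they receive the values computed by $C'_{n'}$. This is a selection between the current phase (held in $p$) and the new value $\mathrm{sign}(p+h)$, expressible with products of $0/1$ values. Let $q=\mathrm{sign}(p+h)-p$ be the switch indicator. Then the input memory gates of $C'$ are updated to $q\cdot\text{out}_C + p\cdot \text{new}_{C'}$, and the auxiliary gates of $C'$ to $p\cdot\text{new}+(1-p)\cdot\text{old}$. In phase 1, $C'$ computes on garbage values, but these are never committed. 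The outputs of $D_n$ are the output gates of $C'_{n'}$.

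\textbf{Halting.} The halting gates are $p$, $c'$ and $V'_\text{halt}$. The halting function is $p\cdot \halt'^{n'}(c',\ol{v}')$, with $\halt'^{n'}$ taken from the $\mathfrak{F}_s$-family for the second circuit family. It ignores the outer iteration number, which Lemma~\ref{lem:hlt_fnc_equivalence_extra_parameter} permits. The second family's halting functions are indexed by arity. Different $n$ may yield the same number of halting gates while needing different functions, so we pad the halting gates with dummy gates to make the arity injective in $n$. The resulting combined halting functions form a family in $\mathfrak{F}_s$: the product with $p$ adds only constant size and depth.

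\textbf{Complexity and correctness.} Size stays polynomial, since $n'$ is bounded by the size of $C_n$, which is polynomial in $n$. Depth is a sum of $\bO((\log n)^i)$ terms plus a constant.

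Correctness goes by induction on iterations. During phase 1, the memory gates of $C_n$ follow exactly $f^i_\text{mem}$ of $\textit{rec-}C_n$. At the switch, $C'$ starts with input $f_n(\ol{x})$ and its own initial auxiliary values, and with $c'=1$. From then on, $D_n$ follows $\textit{rec-}C'_{n'}$ exactly.

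\textbf{Main obstacle.} The hardest point is the off-by-one timing at the switch. The $C'$ part must not be counted as already having run an iteration, and its halting must not be tested before its first genuine iteration. I would handle this by making phase 2 genuinely start in the iteration after $p$ becomes $1$, which is the reason for the gating by $p$ in the halting function. I would verify this timing carefully, including the degenerate case where $C_n$ halts immediately.
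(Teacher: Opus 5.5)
Your proof is correct and is essentially the paper's construction. The halting circuit of $\textit{rec-}C_n$ is built into the underlying circuit, which is why sign gates are needed there; a flag derived from it gates the memory updates and hands $f_n(\ol{x})$ to $C'_{n'}$; and the new halting function is a phase indicator times $\halt'$, like the paper's $\chi[\text{val}(S)=0]\cdot\halt'$. Your differences are in the details: you place the two circuits side by side with a persistent phase bit $p$ rather than in series with the one-shot flag $t$, and you give each halting function its own counter $c, c'$, which makes explicit the switch timing and iteration-number offset that the paper's gadget sketch leaves implicit.
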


    % \timon{Is the formulation of the theorem right? Shouldn't it be that we concatenate two function families?}
    % \laura{tried to reformulate, check if this is what we want}
    % \vivian{looks good, maybe just don't use $\mathcal{F}$ as we used that for a set of halting functions before?}
    % \laura{if you know of any better letter, feel free to edit}

\begin{proofidea}
The circuit computing $f'_{n'} \circ f_n$ is constructed by using the circuits computing $f_n$ and $f'_{n'}$, as well as the circuit computing the halting function of $f_n$.
They are connected in such a way, that the first circuit is iterated until the first halting condition is met. 
Only then the computations of the second circuit are started with meaningful values. 
% \jonni{The following sentenence are hard to read.}
% To directly have access to the output of the first halting function it needs to be incorporated into the the circuit.
% This results in the extension going from a circuit function class without sign for the underlying circuits of $f_n$ and $f'_{n'}$ to one with sign.
% The new halting function for $f'_{n'} \circ f_n$ is a conjunction of the halting functions of $f_n$ and $f'_{n'}$.
To be able to check whether halting function of the first circuit evaluated to 1 the computation of the halting function needs to be directly incorporated into the circuit computing the composition.
As the halting function is from a circuit function class that allows for sign gates the composed function is also in a such a circuit function class.
Our restriction to only $\faci$ classes to ensure the halting function is implementable in the same class.
\end{proofidea}
% \begin{corollary}
% \vivian{if tight on space remove? seems obvious given the result above}
%     The recurrent circuit function class $\recF{\mathfrak{F}_s}{\mathfrak{F}_s}$, where both the underlying circuit families and those computing the halting functions are allowed to use the sign gate, is closed under the composition of functions.
% \end{corollary}

\section{Network Model}
%\section{Model of Computation}
\label{sec:model_of_comp}

\subsection{Recurrent Graph Neural Networks}
\label{sec:gnns}
% \todo[inline]{define: GNNs, activation functions, recurrent GNN}
% \vivian{decide on a short version of a recurrent GNN we want to define here before we go on to define the circuit version, probably only in text form without going into to much detail, maybe intertwine it with the definitions from the next section if we were to change the order there}
In order to construct a circuit based model of recurrent graph neural networks (GNNs) in the next section, we fix the notion of a recurrent GNN first. 
Generally, recurrent GNNs like their non-recurrent counterpart can classify either individual nodes or whole graphs. 
For the purpose of this paper, we will only introduce node classification, since graph classification works analogously.
We focus on aggregate combine GNNs that aggregate the information of every neighbour of a node and then combine this information with the information of the node itself.
Typically GNNs are defined with a fixed number of layers of computations.
A recurrent GNN does not have this restriction.
Instead the feature values of a given labelled input graph are updated multiple times according to the defined computations of the GNN until a predefined halting condition is met.

\begin{definition} \label{def:labelled_graph}
    Let $G=(V,E)$ be a graph with an ordered set of vertices $V$ and a set of undirected edges $E$ on $V$. 
    Let $g_V \colon V \to \R$ be a function which labels the vertices with so called \emph{feature values}.
    We then call $\lgraph = (V, E, g_V)$ a \emph{labelled graph} and denote by $\Graph$ the class of all labelled graphs.
    We denote the multiset of all labels of a set of nodes $V'\subseteq V$ by $\text{val}(V') \coloneqq \lms g_V(v) \mid v \in V' \rms$.
    For a node $v \in V$, the \emph{neighbourhood of $v$} is defined as $N_{\lgraph}(v) \coloneqq \{w \in V \mid \{v, w\} \in E \}$. 
\end{definition}

As we did in the case of recurrent arithmetic circuits, we define recurrent GNNs with feature values in $\R$, but we remark that
all our results also hold in the more general setting of circuits and GNNs  over $\R^k$ for some $k \in \N$.
A recurrent GNN is defined by the following types of functions.

An \emph{aggregation function} is a permutation-invariant function $\text{AGG}: \R \times \dots \times \R \to \R$, a \emph{combination function} is a function $\text{COM}: \R \times \R \to \R$ and a (Boolean) \emph{classification function} is a function $\text{CLS}: \R \to \{0,1\}$ that classifies a real value as either true or false.
A \emph{halting function} is a function $\text{HLT}\colon \N_{>0} \times \R^* \to \{0,1\}$ that gets the current layer number and the unordered multiset of feature values as an input.
Finally, an \emph{activation function} is a function $\actfct\colon \R \to \R$.

\begin{definition}
    \label{def:ac_gnn}
    Let $d \in \N$.
    A \emph{recurrent aggregate combine graph neural network} (rec-AC-GNN) of period $d$ is a tuple $\mathcal{D}=(\{ \text{AGG}^{(i)}\}_{i=1}^d, \{ \text{COM}^{(i)} \}_{i=1}^d, \{\actfct^{(i)}\}_{i=1}^d, \text{HLT},\allowbreak\text{CLS})$, where $\{ \text{AGG}^{(i)}\}_{i=1}^d$ and $\{ \text{COM}^{(i)} \}_{i=1}^d$ are respectively sequences of aggregation and combination functions, $\{\actfct^{(i)}\}_{i=1}^d$ is a sequence of activation functions, HLT is a halting function and CLS is a classification function.

    % Given a labelled graph $\lgraph=(V,E, g_V)$,
    % the rec-AC-GNN model computes values $x_v^{(i)}$ for every $v \in V$ in every layer $0\leq i \leq  d$ as follows:
    % $x_v^{(0)}=g_V(v)$ is the initial feature value of $v$, and for $1 \leq i  \leq d$
    Given a labelled graph $\lgraph=(V,E, g_V)$, the initial feature values $x_v^{(0)}=g_V(v)$ are set for every $v \in V$ and for every layer $1\leq i \leq  d$ the rec-AC-GNN model computes $x_v^{(i)}$ for every $v \in V$ as follows:
    \begin{align*}   
     &x_v^{(i)} = \actfct^{(i)} \left( \text{COM}^{(i)}\left( x_v^{(i-1)}, y \right)\right)\text{,}\\ 
     &\text{where } y = \text{AGG}^{(i)} \left(\ms[\big]{ x_u^{(i-1)} \mid u \in N_\lgraph(v) } \right). %\\
     \end{align*}
    After every layer $i$ the halting function HLT$\left(i, \left(x^{(i)}_v \mid v \in V\right)\right)$ is computed on the current layer number and the tuple of resulting feature values of that layer.
    If HLT evaluates to $1$ in layer $i$, the computation of the recurrent GNN stops and the classification function CLS: $\R \to \{0,1\}$ is applied to the feature values $x_v^{(i)}$.
    Otherwise the computation continues with the next layer $i+1$ or once layer $d$ is reached, periodically starts again with layer 1. 
\end{definition} 

Non-recurrent GNNs can be recovered as the special case, where the value of the halting function is determined by the inputted layer number.

In this paper we focus on the real-valued computation part of GNNs and discard the classification function. 
We consider the feature values $x_v^{(\ell)}$ after the computation of layer $\ell$ that satisfy our halting condition as our output.
While we could also integrate CLS into our model, for our concerns this is not needed.
\subsection{Recurrent Circuit Graph Neural Networks}
\label{sec:cgnns}
%\laura{Possible rework for the section: maybe explain normal C-GNN as a special case first and then introduce the idea of using recurrent circuits in there? i.\,e., Non-re(current C-GNNs are a special case of the recurrent ones: if the halting function is a constant, we end up with a classical \CGNN with a fixed number of layers. If we use recurrent arithmetic circuits in such a GNN, we call the model (rec-C)-GNN to emphasize the use of recurrence in the internal circuits but the lack of an outer recurrence via the halting function.}

%We aim to analyze the expressivity of recurrent graph neural networks via a circuit based recurrent GNN model. 
%We call this model a \emph{recurrent circuit graph neural network} or in short a \emph{recurrent \CGNN}. 
%We assume the halting of our model to be dependent on a predefined halting function similar to that of our model of recurrent arithmetic circuits in Definition~\ref{def:halting_function}.
Our goal is to analyze the expressive power of recurrent graph neural networks, without restricting to aggregate-combine GNNs or any other particular type. Following \cite{DBLP:conf/nips/BarlagHSVV24}, we will study \emph{circuit GNNs} and their recurrent version.

A \emph{basis} of a recurrent C-GNN is a set of functions of a specific recurrent circuit function class along with a set of activation functions and a halting function which is also in a specific circuit function class via a circuit family which computes it.
The respective networks will be based on these functions.
As GNNs are permutation invariant we also assume that our halting functions for the recurrent C-GNNs have this property. 
We call such a function \emph{tail-symmetric} and write $\haltGNN \colon \N_{>0} \times \R^* \to \{0,1\}$ to emphasize it.
The first parameter of the halting function is the number of the current layer.
We can generally define tail-symmetry of a function as follows.\looseness=-1 
% \begin{definition}
%    \label{def:tail_symmetric_fnc}
%    A function $f \colon \N_{>0} \times \R^p \to \{0,1\}$ is \emph{tail-symmetric}, if
%     \[
%         f(i, (v_1 \dots, v_p)) = f(i, \pi(v_1, \dots, v_p))
%     \]
%     for all permutations $\pi$.
% \end{definition}
\begin{definition}
   \label{def:tail_symmetric_fnc}
   A function $f \colon \R^n \to \{0,1\}$ is \emph{tail-symmetric}, if
    \(
        f(v_1 \dots, v_n) = f(v_1, \pi(v_2, \dots, v_n))
    \)
    for all permutations $\pi$.
\end{definition}
% \jonni{Is the following true?}
% This additional parameter isn't required to be considered, we only need it to have a natural connection to a non-recurrent \CGNN.
% \begin{definition}
%     Let $\mathcal{S}$ be a non-empty set of 
%     functions from $\R^*$ to $\R^*$, 
%     $\mathcal{A}$ be a non-empty set of activation functions and let $\haltGNN \colon \N_{>0} \times \R^* \to \{0,1\}$ be a halting function.
%     We assume $\haltGNN$ to be tail-symmetric.
%     Then we call the set $\mathcal{S} \times \mathcal{A} \times \{\haltGNN\}$ a \emph{recurrent C-GNN-basis}.\looseness=-1
% \end{definition}
The same requirement of tail-symmetry holds for the functions that are computed in each layer of a GNN.
We write $t\mathfrak{F}$ to denote only the tail-symmetric functions of the circuit function class $\mathfrak{F}$ and call $t\mathfrak{F}$ a \emph{tail-symmetric class}.
%We call classes of the form $t\mathfrak{F}$ \emph{tail-symmetric classes}.
%\vivian{decide whether we also want to write something like rec-$\mathfrak{F}$, when we talk about recurrent circuit function classes to avoid confusion with plain circuit function classes}
\begin{definition}
     A \emph{recurrent C-GNN-basis} is a set $\mathcal{S} \times \mathcal{A} \times \{\haltGNN\}$, where $\mathcal{S}$ is a subset of some tail-symmetric recurrent circuit function class $t\mathfrak{F}$, $\mathcal{A}$ is set of activation functions, and $\haltGNN$ is from some tail-symmetric circuit function class $\mathfrak{F}_\textnormal{halt}$.
        We call bases of this kind as \emph{$(t\mathfrak{F}, t\mathfrak{F}_\textnormal{halt})$-bases}.
    %\timon{maybe different notation for circ function classes and recurrent circ function classes?}
\end{definition}

Recurrent C-GNNs of a particular basis essentially consist of a function assigning recurrent circuit families and activation functions from its basis to its different layers, and
in addition, a halting function that governs how often those layers are iterated.

% \begin{definition}
%     Let $B=\mathcal{S} \times \mathcal{A} \times \{\haltGNN\}$ be a recurrent C-GNN-basis. 
%     A \emph{recurrent circuit graph neural network (\recCGNN)} $\textit{rec-}\mathcal{N}=(\NN, \haltGNN)$ consists of a periodic function $\NN \colon \N \to \mathcal{S}\times \mathcal{A}$ and a tail-symmetric halting function $\haltGNN \colon \N_{>0} \times \R^* \to \{0,1\}$.
%     If all functions in $\mathcal{S}$ belong to a circuit function class $\mathfrak{F}_\mathcal{S}$ and the function $\haltGNN$ is computable by a $\mathfrak{F}_\text{halt}$-circuit family, we also say that $\textit{rec-}\mathcal{N}$ is a rec$[\mathfrak{F}_\text{halt}]$-$((\mathfrak{F}_S, \mathcal{A})$-GNN).\looseness=-1 
% \end{definition}

%\jonni{How do you pronounce the basis below?}
\begin{definition}
    Let $B = \mathcal{S} \times \mathcal{A} \times \{\haltGNN\}$ be a $(t\mathfrak{F}, t\mathfrak{F}_\textnormal{halt})$-basis. 
    A \emph{recurrent circuit graph neural network} $\textit{rec-}\mathcal{N}=(\NN, \haltGNN)$ of basis $B$ consists of a periodic function $\NN \colon \N \to \mathcal{S}\times \mathcal{A}$ and a tail-symmetric halting function $\haltGNN \colon \N_{>0} \times \R^* \to \{0,1\}$.
    We say that $\textit{rec-}\mathcal{N}$ is a rec$[t\mathfrak{F}_\text{halt}]$-$(t\mathfrak{F}, \mathcal{A})$-GNN.\looseness=-1 
\end{definition}
% \laura{do we still want to define them generally as in the last paper or go straight to circuit function classes only? In the end, that is what we're interested in in this paper and we probably won't even have space to discuss the generalized framework. I think it might be beneficial to cut unnecessary things, we have a lot of defs/concepts to convey already}
% \timon{I agree. It's probably a good idea to cut everything down to what we actually need. Afterwards, we can still think about adding more explanation/context.}
% \vivian{tried a first version of that}
% Notice that we now have a family of halting functions, as we define the halting to be dependent on all feature vectors of the input graph and therefore on the input size.

\begin{definition}
\label{def:rec_cgnn}
    Let $\textit{rec-}\mathcal{N}= \left(\NN, \haltGNN\right)$ be a recurrent \CGNN.
    The partial function $f_{\textit{rec-}\mathcal{N}} \colon \Graph \to \Graph$ computed by $\textit{rec-}\mathcal{N}$ is defined as follows: 
    Let $\lgraph = (V, E, g_V)$ be a labelled graph.
    The labelled graph $\lgraph'=f_{\textit{rec-}\mathcal{N}}(\lgraph)$ has the same structure as $\lgraph$, however, its nodes have different feature vectors.
    That is $\lgraph' = (V, E, h_V)$ and $h_V$ is defined 
    %inductively 
    as follows:
    \begin{align*}
       h_V^{(0)}(w) &\coloneqq~  g_V(w)\\ 
        h_V^{(i)}(w) &\coloneqq~  \actfct^{(i)} \left( f_{\mathcal{C}^{(i)}}\left(h_V^{(i-1)}(w), M \right) \right),\\
        &\text{ with } M = \lms h_V^{(i-1)}(u) \mid u \in N_{\lgraph}(w) \rms
    \end{align*}
    where ${\NN(i)=\left(\mathcal{C}^{(i)},\actfct ^{(i)}\right)}$.
    Then ${h_V=h_V^{(\ell)}}$ is the smallest value such that $\haltGNN\left(\ell, \lms h_V^{(\ell)}(v) \mid v \in V \rms \right)=1$ where ${\ell \in \N}$. 
    Otherwise $f_{\textit{rec-}\mathcal{N}}$ is undefined.
\end{definition}
The halting function of a recurrent C-GNN is applied to the feature vectors of all nodes contrary to recurrent arithmetic circuits where only the values of some of the gates were taken into consideration.

The recurrent \CGNN model that we have just defined computes functions from labelled graphs to labelled graphs (with the same graph structure) with recurrence in two places. 
To distinguish those two types of recurrence we call the recurrence of the complete network \emph{outer recurrence} and that of the internal circuits \emph{inner recurrence}.

% \jonni{Most of what is said in the next paragraphs below can be deleted, and replaced with "We call rec$[t\mathfrak{F}_\text{halt}]$-$(t\mathfrak{F}, \mathcal{A})$-GNNs where $t\mathfrak{F}$ is a non-recurrent circuit function class a \emph{circuit graph neural networks with outer recurrence}. On the other hand, we call $rec[t\mathfrak{F}_s]$-$(t\mathfrak{F}, \mathcal{A})$-GNNs, where   $t\mathfrak{F}$ is a recurrent circuit function class and $t\mathfrak{F}_\text{halt}$ is determined by the layer number, a \emph{circuit graph neural network with inner recurrence}".}
% As mentioned, every classical circuit family is a subset of its recurrent version.
We call rec$[t\mathfrak{F}_\text{halt}]$-$(t\mathfrak{F}, \mathcal{A})$-GNNs where $t\mathfrak{F}$ is a non-recurrent circuit function class a \emph{circuit graph neural networks with outer recurrence}. 
On the other hand, we call $(\text{rec}[\mathfrak{F}_s]\text{-}t\mathfrak{F}, \mathcal{A})$-GNNs, where $\text{rec}[\mathfrak{F}_s]$-$t\mathfrak{F}$ is a recurrent circuit function class and the halting is determined by a fixed layer number, a \emph{circuit graph neural network with inner recurrence}.
Finally we call $(t\mathfrak{F}, \mathcal{A})$-GNNs, where $t\mathfrak{F}$ is a non-recurrent circuit function class, \emph{circuit graph neural networks}.
This model is closest to that of a standard non-recurrent GNN and was already introduced in \cite{DBLP:conf/nips/BarlagHSVV24}.
By our definition of recurrent arithmetic circuits every function computable by a \CGNN without recurrence is also computable by any of the introduced recurrent \CGNN{}s.
An overview of the different models is given in Figure~\ref{fig:cgnn_models}. 
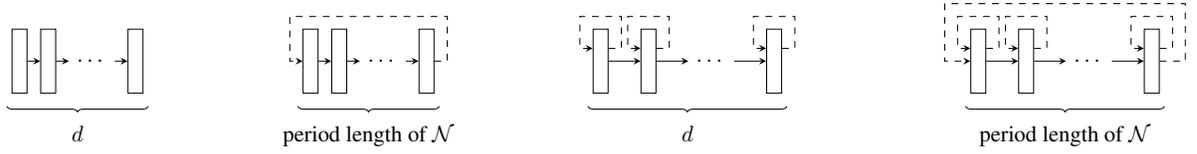
\begin{figure*}[tb]
    \centering
    \begin{subfigure}[t]{0.2\textwidth}
    \centering
        \scalebox{.85}{\begin{tikzpicture}[baseline, gate/.style={circle, minimum width=0.6cm, text width=6mm, align=center, draw}, brace/.style={
             decoration={brace, mirror},decorate}, position label/.style={
               below = 3pt,
               text height = 2ex,
               text depth = 1ex
            }]
            \node[rectangle, draw, minimum height = 1cm]    (l1)    {};
            \node[rectangle, draw, minimum height = 1cm]    (l2)    [right=0.2cm of l1]    {};
            \node[]                                         (dots)  [right=0.2cm of l2]     {$\dots$};
            \node[rectangle, draw, minimum height = 1cm]    (lk)    [right=0.2cm of dots]    {};
            \draw[brace] ($(l1.south)+ (-0.2,-0.2)$) -- node [position label, pos=0.5] {$d$} ($(lk.south)+ (0.2,-0.2)$);
            \draw[-stealth]     (l1) -- (l2);
            \draw[-stealth]     (l2)    --  (dots);
            \draw[-stealth]     (dots)  --  (lk);
        \end{tikzpicture}}
        \caption{C-GNN without recurrence and a fixed depth $d\in\N$ }
    \end{subfigure}
    \hspace{0.1cm}
    \begin{subfigure}[t]{0.2\textwidth}
    \centering
    \scalebox{.85}{
         \begin{tikzpicture}[baseline, gate/.style={circle, minimum width=0.6cm, text width=6mm, align=center, draw}, brace/.style={
             decoration={brace, mirror},decorate}, position label/.style={
               below = 3pt,
               text height = 2ex,
               text depth = 1ex
            }]
            \node[rectangle, draw, minimum height = 1cm]    (l1)    {};
            \node[rectangle, draw, minimum height = 1cm]    (l2)    [right=0.2cm of l1]    {};
            \node[]                                         (dots)  [right=0.2cm of l2]     {$\dots$};
            \node[rectangle, draw, minimum height = 1cm]    (lk)    [right=0.2cm of dots]    {};
            \draw[brace] ($(l1.south)+ (-0.2,-0.2)$) -- node [position label, pos=0.5] {period length of $\NN$ } ($(lk.south)+ (0.2,-0.2)$);
            \draw[-stealth]     (l1) -- (l2);
            \draw[-stealth]     (l2)    --  (dots);
            \draw[-stealth]     (dots)  --  (lk);
            \draw[dashed, -stealth] (lk.east) |- ++(0.2,0) |- ++ (0,.7 ) |- ($(l1.west)+ (-0.2,0.7)$) |- (l1.west) ;
        \end{tikzpicture}}
        \caption{\CGNN with only outer recurrence}
    \end{subfigure}
    \hspace{0.1cm}
   \begin{subfigure}[t]{0.25\textwidth}
   \centering
       \scalebox{.85}{\begin{tikzpicture}[baseline, gate/.style={circle, minimum width=0.6cm, text width=6mm, align=center, draw}, brace/.style={
             decoration={brace, mirror},decorate}, position label/.style={
               below = 3pt,
               text height = 2ex,
               text depth = 1ex
            }]
            \node[rectangle, draw, minimum height = 1cm]    (l1)    {};
            \node[rectangle, draw, minimum height = 1cm]    (l2)    [right=0.5cm of l1]    {};
            \node[]                                         (dots)  [right=0.5cm of l2]     {$\dots$};
            \node[rectangle, draw, minimum height = 1cm]    (lk)    [right=0.5cm of dots]    {};
            \draw[brace] ($(l1.south)+ (-0.2,-0.2)$) -- node [position label, pos=0.5] {$d$} ($(lk.south)+ (0.2,-0.2)$);
            \draw[-stealth]     (l1) -- (l2);
            \draw[-stealth]     (l2)    --  (dots);
            \draw[-stealth]     (dots)  --  (lk);
            \draw[dashed, -stealth] ($(l1.east)+(0,0.2)$) |- ++(0.2,0)  |- ($(l1.west)+ (-0.2,0.7)$) |- ($(l1.west) + (0,.2)$) ;
            \draw[dashed, -stealth] ($(l2.east)+(0,0.2)$) |- ++(0.2,0)  |- ($(l2.west)+ (-0.2,0.7)$) |- ($(l2.west) + (0,.2)$) ;
            \draw[dashed, -stealth] ($(lk.east)+(0,0.2)$) |- ++(0.2,0)  |- ($(lk.west)+ (-0.2,0.7)$) |- ($(lk.west) + (0,.2)$) ;
        \end{tikzpicture}}
    \caption{\CGNN with only inner recurrence and a fixed depth $d\in \N$}
   \end{subfigure}
   \hspace{0.4cm}
    \begin{subfigure}[t]{0.25\textwidth}
    \centering
        \scalebox{.85}{\begin{tikzpicture}[baseline, gate/.style={circle, minimum width=0.6cm, text width=6mm, align=center, draw}, brace/.style={
             decoration={brace, mirror},decorate}, position label/.style={
               below = 3pt,
               text height = 2ex,
               text depth = 1ex
            }]
            \node[rectangle, draw, minimum height = 1cm]    (l1)    {};
            \node[rectangle, draw, minimum height = 1cm]    (l2)    [right=0.5cm of l1]    {};
            \node[]                                         (dots)  [right=0.5cm of l2]     {$\dots$};
            \node[rectangle, draw, minimum height = 1cm]    (lk)    [right=0.5cm of dots]    {};
            \draw[brace] ($(l1.south)+ (-0.2,-0.2)$) -- node [position label, pos=0.5] {period length of $\NN$} ($(lk.south)+ (0.2,-0.2)$);
            \draw[-stealth]     (l1) -- (l2);
            \draw[-stealth]     (l2)    --  (dots);
            \draw[-stealth]     (dots)  --  (lk);
            \draw[dashed, -stealth] ($(l1.east)+(0,0.2)$) |- ++(0.2,0)  |- ($(l1.west)+ (-0.2,0.7)$) |- ($(l1.west) + (0,.2)$) ;
            \draw[dashed, -stealth] ($(l2.east)+(0,0.2)$) |- ++(0.2,0)  |- ($(l2.west)+ (-0.2,0.7)$) |- ($(l2.west) + (0,.2)$) ;
            \draw[dashed, -stealth] ($(lk.east)+(0,0.2)$) |- ++(0.2,0)  |- ($(lk.west)+ (-0.2,0.7)$) |- ($(lk.west) + (0,.2)$) ;
            \draw[dashed, -stealth] (lk.east) |- ++(0.4,0) |- ++ (0,.9 ) |- ($(l1.west)+ (-0.4,0.9)$) |- (l1.west) ;
        \end{tikzpicture}}
        \caption{\CGNN with inner and outer recurrence}
    \end{subfigure}
    \caption{The different models of (recurrent) \CGNN{}s. One block stands for one layer of the different (recurrent) \CGNN models $\NN$, dashed edges mark recurrence}
    \label{fig:cgnn_models}
\end{figure*}

An overview of the relations between the different \CGNN models is given in Figure~\ref{fig:cgnn_results}.
\begin{figure}[H]
    \centering
    \begin{tikzpicture}
        \node[align=center]     (doublerec)                                     {\CGNN\\ inner and\\ outer recurrence};
        \node[align=center]     (neq)           [below =0.2 of doublerec]       {$\neq$};
        \node[align=center]     (innerrec)      [below left =.1of doublerec]    {\CGNN\\ inner recurrence};
        \node[align=center]     (outerrec)      [below right=.1of doublerec]    {\CGNN\\ outer recurrence};
        \node[align=center]     (norec)         [below =1of doublerec]          {\CGNN\\ without recurrence};

        \draw[]     (doublerec.west)     --      (innerrec.north);
        \draw[]     (doublerec.east)     --      (outerrec.north);
        \draw[]     (outerrec.south)     --      (norec.east);
        \draw[]     (innerrec.south)     --      (norec.west);
    \end{tikzpicture}
    \caption{Assume the same circuit function classes for recurrence/ underlying circuit and the same set of activation functions, lines denote subclasses of computable functions}
    \label{fig:cgnn_results}
\end{figure}

\section{Relations Between Recurrent Circuits and Recurrent GNNs}\label{sec:circs_and_gnns}
%\laura{maybe add smth that we generally assume 'same' circuit families for both and therefore show similarities/differences in computational power in the models themselves?}
Our aim is to relate our different models of recurrent \CGNN{}s to the model of recurrent arithmetic circuits we introduced before, i.\,e.~to show which functions that can be computed by recurrent arithmetic circuits can also be computed using a recurrent \CGNN and vice versa.
We generally assume that the the circuits and \CGNN{}s have the same resources, i.e. the same circuit function classes on both sides and access to the same additional functions or activation functions. This allows us to investigate the similarities and differences in expressive power between the two models.

\subsection{Overview of Results} \label{sec:reoadmap}
We show that there exist recurrent arithmetic circuits that are able to simulate the computations of our recurrent \CGNN model in the sense that when given an encoding of a labelled graph they compute the updated feature values and output the updated (encoded) labelled graph.
We distinguish between recurrent arithmetic circuits where the underlying circuits have access to the sign function an can therefore compute non-continuous functions and those that do not.
An overview of the results of that direction is given in Table~\ref{tab:cgnn_to_circ_results}.
\begin{table}[tb]
    \centering
    \begin{tabular}{c|c|| c | c r}
    \toprule
         \multicolumn{2}{c||}{\CGNN with} &  \multicolumn{2}{c}{Recurrent circuit} &\\
        \multirow{2}{1.3cm}{\centering Outer recurrence} & \multirow{2}{1.3cm}{\centering Inner recurrence} & \multirow{2}{1cm}{\centering without sign}& \multirow{2}{1cm}{with sign}&\\
        & &  & &  \\
        \midrule
        \cmark & \xmark & \cmark &\xmark & (Thm. \ref{thm:recCGNN_in_circ})\\
          \xmark & \cmark &  \xmark &\cmark& (Thm. \ref{thm:CrecGNN_in_circ})\\
         \cmark & \cmark &  \xmark &\cmark & (Cor. \ref{cor:CrecrecGNN_in_circ})\\
         \bottomrule
         \end{tabular}
    \caption{Simulating Recurrent C-GNNs with Recurrent Circuits.\\
    One row in the table represents one result.}
    \label{tab:cgnn_to_circ_results}
\end{table}

% \begin{table*}[]
%     \centering
%     \begin{tabular}{c|p{10cm}}
%         Recurrent circuits &  Outer Recurrent C-GNNs with sign in the internal circuits and activation functions only in the internal circuits (only identity as actual activation functions)\\
%         Recurrent circuits (restricted to predecessor normal form) & Outer Recurrent C-GNNs with sign in the internal circuits and activation functions in the usual GNN sense\\
%         Recurrent circuits (restricted to symmetric functions) & Inner Recurrent C-GNNs without sign in the internal circuits and activation functions only in the internal circuits (only identity as actual activation functions)\\
%         \multicolumn{2}{p{18cm}}{Recurrent circuits (restricted to symmetric functions) “cannot be simulated by” Inner Recurrent C-GNNs using activation functions in the usual GNN sense.}
%     \end{tabular}
%     \caption{Simulating Recurrent Circuits with Recurrent C-GNNs}
%     \label{tab:placeholder1}
% \end{table*}

We moreover show that functions computed by recurrent circuits with different restrictions on the functions or the circuits themselves can be simulated by our recurrent \CGNN models in the sense that there exists a recurrent \CGNN for which there exists an input (labelled graph) such that the output (labelled graph) has the output of the recurrent circuit among its feature values. 
Here we differentiate between \CGNN{}s by their modes of recurrence --- inner or outer recurrence --- as well as how they are allowed to use activation functions --- either arbitrarily locally in their internal circuits  or globally on the whole graph as in the usual sense of activation functions in GNNs.
Also of interest is whether the internal circuits of the \CGNN have access to sign gates which allow them to compute non-continuous functions.
The results are shown in Table~\ref{tab:circ_to_cgnn_results}.
We can also show that recurrent circuits (even when restricted to symmetric functions) cannot be simulated by Inner Recurrent \CGNN{}s using activation functions in the usual GNN sense (Lemma~\ref{lem:circ_to_inner_rec_with_activation_fnc}). 
% \begin{table*}[]
%     \centering
%     \begin{tabular}{c||c|c|c}
%          Circuit model & Simulated by & sign in internal circuits & Activation functions \\
%          \hline
%          Recurrent Circuits & Outer Recurrent \CGNN & \cmark & only in internal circuits\\
%          Predecessor Form & Outer Recurrent \CGNN & \xmark & in usual GNN sense\\
%          Symmetric Functions & Inner Recurrent \CGNN & \xmark & only in internal circuits\\
%          \multicolumn{4}{p{18cm}}{Recurrent circuits (restricted to symmetric functions) “cannot be simulated by” Inner Recurrent C-GNNs using activation functions in the usual GNN sense.}
%     \end{tabular}
%     \caption{Simulating Recurrent Circuits with Recurrent \CGNN{}s}
%     \label{tab:circ_to_cgnn_results}
% \end{table*}
\begin{table*}[tb]
    \centering
    \begin{tabular}{c||c|c|c|c|c r}
    \toprule
         \multirow{3}*{Recurrent circuit} & \multicolumn{5}{c}{Recurrent \CGNN}    &\\
         & \multirow{2}*{Outer recurrence}& \multirow{2}*{Inner recurrence}& \multirow{2}{2.3cm}{\centering sign in internal circuits}& \multicolumn{2}{c}{Activation functions}&\\
         &  &  &  & local & global&\\
         %\cmidrule{1-6}
         \midrule
         no restrictions & \cmark & \xmark &\cmark & \cmark & \xmark &(Thm. \ref{thm:circ_to_outer_rec_gnn_with_activation_fnc})\\
         Predecessor Form & \cmark & \xmark &\xmark & \xmark & \cmark& (Thm. \ref{thm:circ_to_outer_rec_gnn_no_activation_fnc})\\
         Symmetric Functions & \xmark & \cmark &\xmark & \cmark & \xmark &(Thm. \ref{thm:circ_to_inner_rec})\\
         %Symmetric Functions & \multicolumn{2}{c|}{no simulation possible} &\cmark & \xmark & \cmark\\
         \bottomrule
    \end{tabular}
    \caption{Simulating Recurrent Circuits with Recurrent \CGNN{}s.\\
    One row in the table represents one result.}
    \label{tab:circ_to_cgnn_results}
\end{table*}

We get the correspondences of the two models under reasonable reductions/encodings that are necessary as \CGNN{}s and arithmetic circuits operate on different types of input, namely labelled graphs and real valued input strings. 
For the most general models (for \CGNN{}s and recurrent circuits) we have a one-to-one correspondence. 
Every recurrent circuit can be simulated with an outer and inner recurrent \CGNN and every outer and inner recurrent \CGNN can be simulated with a recurrent circuit. This also shows that under logspace reductions outer recurrent \CGNN{}s have the same expressivity as \CGNN{}s with both inner and outer recurrence. 
For the other simulations some restrictions are assumed. 

\subsection{Simulating Recurrent GNNs With Recurrent Arithmetic Circuits} \label{sec:gnn_to_circ}
While recurrent \CGNN{}s work on labelled graphs, arithmetic circuits only get an ordered tuple of $\R$-values as an input.
% To be able to talk about arithmetic circuits computing the same function as a recurrent \CGNN we need to define an encoding of labelled graphs that can be used as an input to an arithmetic circuit.
To be able to talk about arithmetic circuits computing the same functions as recurrent \CGNN{}s, we encode labelled graphs as real valued tuples that can then be used as an input to an arithmetic circuit.
We write $\langle \mathfrak{G} \rangle$ for the encoding of a labelled graph $\mathfrak{G}$ as its adjacency matrix followed by the respective feature values.\looseness=-1
%\jonni{Sentence about the encoding.}
% \begin{definition}
%     \vivian{candidate to be shortened/ replaced by explanatory sentence}
%     Let $\lgraph = (V, E, g_V)$ be a labelled graph, $n \coloneqq \lvert V \rvert$, and
%     % We assume $V$ to be of size $n$ and be ordered by $[ \,n ] \,$.
%     $M=\text{adj}(\lgraph)$ be the adjacency matrix of $(V, E)$, where the columns are ordered in accordance with the ordering of $V$.
%     We write $\enc{M}$ to denote the encoding of $M$ as the $n^2$ matrix entries $m_{ij}$, ordered in a row wise fashion. 
%     We write $\enc{\lgraph}$ to denote the encoding of $\lgraph$ as a tuple of real values, such that $\enc{\lgraph}= \left(\enc{M}, \text{val}(\lgraph)\right) \in \R^{n^2+n}$, which consists of the encoding of $M$ followed by $\text{val}(\lgraph)$, the feature values of $\lgraph$.
%     The feature values $\text{val}(\lgraph)$ are $g_V(v)$ for all $v \in V$ and ordered like $V$.
% \end{definition}

As defined in Section~\ref{sec:rec_arithmetic_circs}, we assume that our recurrent arithmetic circuits only have access to the sign function inside their halting function. 
However by a result from \cite[Lemma A.4]{DBLP:journals/corr/abs-2402-17805} we are still able to check for equality for a set of fixed values in the underlying circuits in constant depth.%\timonS{That Lemma is not in the published version, I believe, so we should probably cite the arxiv version. Also explicitly give the Lemma number.}.

The following theorem states that any \CGNN with outer recurrence can be simulated by a recurrent circuit.
Given an encoding of a labelled graph as an input the recurrent circuit computes the encoding of the labelled graph the \CGNN would output.
Using the reverse procedure of the encoding this can then be decoded back to a labelled graph.
%In this manner, outer recurrence \CGNN{}s can be simulated by arithmetic circuits:
%\jonni{Is $<G>$ defined somewhere?}
\begin{restatable}{theorem}{outerrectocirc} \label{thm:recCGNN_in_circ}

         Let $\textit{rec-}\mathcal{N}$ be a $\textnormal{rec}[t\mathfrak{F}_s]$-$(t\mathfrak{F}, \mathcal{A})$-GNN.
    Then there exists a function $f_\textnormal{circ} \in \recF{\mathfrak{F}_s}{\mathfrak{F}}[\mathcal{A}]$
    % \colon \R^* \to \R^*$ 
    s.\,t. for all labelled graphs $\mathfrak{G}$ we have
        ${f_\textnormal{circ}(\langle \mathfrak{G} \rangle) = \langle f_{\textit{rec-}\mathcal{N}}(\mathfrak{G}) \rangle.}$
        % \vivian{before: \[
        %     f_{\textit{rec-}\mathcal{N}}(\mathfrak{G}) = \mathfrak{G}' \Longleftrightarrow f_\textnormal{circ}(\langle \mathfrak{G} \rangle) = \langle \mathfrak{G}' \rangle.
        % \] }
\end{restatable}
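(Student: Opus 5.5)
The plan is to build, for each input length $N = n^2+n$ (so that $n$ can be read off from $N$), a single recurrent arithmetic circuit $\textit{rec-}C_N$. Its underlying extended circuit simulates one full period of $\NN$, that is, layers $1,\dots,d$. Its halting function is derived from $\haltGNN$. In the other case, where $N$ is not of the form $n^2+n$, we output anything, for instance a circuit that halts immediately.

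The memory gates are as follows. The $n^2$ input gates holding the adjacency matrix feed back to themselves via $E_\text{rec}$. The $n$ feature gates receive, via recurrent edges, the feature values after the last simulated layer. Within one period, layer $i$ is implemented for each node $w$ by a copy of the circuit $C^{(i)}_{n}$ from the family $\mathcal{C}^{(i)}$. This circuit takes the value of $w$ together with the multiset of neighbour values, followed by the activation gate $\actfct^{(i)}\in\mathcal{A}$.

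The main obstacle is that $N_\lgraph(w)$ depends on the input, whereas the circuit is fixed. I would first try to avoid any selection by exploiting tail-symmetry of $\mathcal{C}^{(i)}$ and feeding all $n-1$ other nodes. Each node $u$ would be masked by $m_{wu}\cdot h(u)$ together with a flag, since a circuit without sign cannot discard non-neighbours. To make this work, I would pass pairs $(m_{wu}, m_{wu}h(u))$ and use $\R^k$-circuits, which the paper states are equivalent to $\R$-circuits.

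The safer alternative is to build, for each possible degree $k\le n-1$ and each ordered choice, a selector. This works because entries are $0/1$, so polynomial indicator expressions such as $\prod_{u\in S} m_{wu}\prod_{u\notin S}(1-m_{wu})$ select subsets. However, it is exponential. I would therefore use the equality trick cited from \cite[Lemma A.4]{DBLP:journals/corr/abs-2402-17805}, or a sorting network based on arithmetic compaction, to place the neighbour values in the first $\deg(w)$ slots. I would then sum $\sum_k \chi[\deg(w)=k]\cdot f_{C^{(i)}_k}(\cdot)$ over $k$, where $\chi[\deg(w)=k]$ is a polynomial in the degree that is exactly $0/1$ on integers $0,\dots,n-1$ (Lagrange interpolation). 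This keeps the size polynomial and the depth within $\mathfrak{F}$, with $\bO(1)$ overhead for the $d$ layers. I must also check that the class $\mathfrak{F}$ absorbs the polynomially many parallel copies of $C^{(i)}_k$ and the constant-depth selection.

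For the halting function, the GNN may stop after any layer, not only at period boundaries. I therefore let every intermediate layer's feature gates be halting gates. I also add an auxiliary memory counter $c$ with $c\mapsto c+1$. The new halting function receives iteration $j$ and the $d$ blocks of feature values. It evaluates $\haltGNN((j-1)d+r,\cdot)$ for $r=1,\dots,d$, which it can do by $d$ copies of the $\mathfrak{F}_s$ halting circuit, and returns their disjunction, computed via sign. This function is in $\mathfrak{F}_s$, and its arity depends only on $n$, consistent with the family convention.

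Because the output must be the first halting layer $\ell$, the output gates copy the adjacency matrix and then select the feature block $r$ with minimal halting index. The selection uses products of $0/1$ indicators of the form $\prod_{r'<r}(1-b_{r'})\,b_r$, where $b_r$ is the halting bit. Computing these $b_r$ inside $C$ would require sign. To stay sign-free, I would instead restructure: simulate one layer per iteration, with a mod-$d$ phase counter stored in memory, and compute all $d$ layer variants in parallel. The variant is chosen by a polynomial $0/1$ selector on the integer phase. Then the halting function simply applies $\haltGNN$ to the current layer, and the output is the current encoding. This gives $f_\text{circ}(\enc{\lgraph})=\enc{f_{\textit{rec-}\mathcal{N}}(\lgraph)}$ for all $\lgraph$, including the case where both sides are undefined.
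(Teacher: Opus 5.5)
Your final restructured construction is correct, keeps sign out of the underlying circuit, and is essentially the paper's proof: one GNN layer per circuit iteration, a mod-$d$ phase counter updated and decoded by interpolation polynomials (the equality gadget of Lemma~\ref{lem:chiA}), the $d$ layer subcircuits run in parallel and selected by a polynomial $0/1$ switch, neighbour values compacted to the front with the correct-arity circuit chosen via $\chi[\deg(w)=k]$, and $\haltGNN$ applied to the current feature gates and the iteration number as the halting function. The detours you consider first are unnecessary: a whole period per iteration, or feeding masked pairs into one fixed circuit, which would not work since that circuit cannot be made to ignore flagged non-neighbours. Also, the per-node circuit for degree $k$ should be $C^{(i)}_{k+1}$ rather than $C^{(i)}_k$.
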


% \begin{proof}
%     % Let $B=(\recF{t\mathfrak{F_s}}{t\mathfrak{F}}) \times \mathcal{A}\times \{\haltGNN\}$ be a recurrent \CGNN basis and $\mathcal{N}_\textit{rec} = (\mathcal{N}, \haltGNN)$ with $\mathcal{N} \colon \N \to (\recF{t\mathfrak{F_s}}{t\mathfrak{F}}) \times \mathcal{A}$ a periodic function with period length $d\in\N$ and $\haltGNN(i, D) \coloneqq \chi[i=d]$ a $(\text{rec}[\mathfrak{F}_s]\text{-}\mathfrak{F})\text{-GNN}$. 
%     % W.l.o.g. let the depth of $\mathfrak{F}$-circuit families be bounded by $\bO((\log n)^i)$ and let their size be bounded by $\bO(n^{\bO(1)})$.
%     % For any size $n$ of a labelled graph $\mathfrak{G}$ (or, more precisely, for any encoding size $n$ of $\mathfrak{G}$), we create one circuit $C^{\mathcal{N}}_n$, such that the claim holds.
    
% \end{proof}
\begin{proofidea}
The main idea here is to build a circuit $C_n^{\mathcal{N}}$ for every input length $n$ where we have the $d$ different circuit families of $\recNN$ in parallel for period length $d$ of $\recNN$. We use a modulus $d$ counter to decide which result to feed back into the next iteration. 
    %as sketched in Figure \ref{fig:outerreccgnntoreccirc}.
\end{proofidea}
%\lauraS{we need some kind of spacing here otherwise it is absolutely confusing where the proof sketches actually end}

% Contrary to the previous result, in order to simulate the function computed by an inner recurrent \CGNN via a recurrent circuit, the underlying circuit of that recurrent circuit must have access to a sign gate., i.\,e.~that be from a circuit function class of the form $\recF{\mathfrak{F}_s}{\mathfrak{F}_s}[\mathcal{A}]$.
Contrary to the previous result, in order to simulate the function computed by an inner recurrent \CGNN via a recurrent circuit, we need the underlying circuit of that recurrent circuit to have access to a sign gate., i.\,e. to compute a function from a class of the form ${\mathfrak{F}_s}[\mathcal{A}]$.
This is due to the fact that the proof is based on the composition result of Theorem~\ref{thm:concat_rec_circ}.

% \jonni{The next paragraph is wierd.}
% We can therefore assume that there are functions that are computable by an inner recurrent \CGNN that an outer recurrent \CGNN cannot compute. \vivianS{is this the correct reasoning?}
% \timonS{I think this claim is a bit too ambitious if we are not actually able to show this rigorously.}

\begin{restatable}{theorem}{InnerRecToCirc}
    \label{thm:CrecGNN_in_circ}
    
    Let $\textit{rec-}\mathcal{N}$ be a $(\textnormal{rec}[t\mathfrak{F}_s]$-$t\mathfrak{F}, \mathcal{A})$-GNN.
    Then there exists a function $f_\textnormal{circ} \in \recF{\mathfrak{F}_s}{\mathfrak{F}_s}[\mathcal{A}]$
    % \colon \R^* \to \R^*$ 
    s.\,t. for all labelled graphs $\mathfrak{G}$ we have ${f_\textnormal{circ}(\langle \mathfrak{G} \rangle) = \langle f_{\textit{rec-}\mathcal{N}}(\mathfrak{G}) \rangle.}$
\end{restatable}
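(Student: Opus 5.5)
Since $\textit{rec-}\mathcal{N}$ has only inner recurrence, its halting is determined by a fixed layer number $d$, so $f_{\textit{rec-}\mathcal{N}}$ is the composition of $d$ layer maps $L^{(1)},\dots,L^{(d)}$ on labelled graphs. The plan is to show that, on encodings, each layer map is a function family in $\recF{\mathfrak{F}_s}{\mathfrak{F}}[\mathcal{A}]$. Applying Theorem~\ref{thm:concat_rec_circ} $d-1$ times then yields the composition in $\recF{\mathfrak{F}_s}{\mathfrak{F}_s}[\mathcal{A}]$. Because $d$ is a constant, the depth only grows by a constant factor and the size by a constant factor, so we stay inside the $\faci$-bounds.

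For a single layer $i$ with $\NN(i)=(\mathcal{C}^{(i)},\actfct^{(i)})$, where $\mathcal{C}^{(i)}$ is a recurrent circuit family computing a tail-symmetric function, I build for input length $n^2+n$ one recurrent circuit with the following parts.
\begin{itemize}
\item The adjacency part $\langle M\rangle$ is fed back unchanged by recurrent edges from each input gate to itself.
\item For every node $w$ there is a copy of the underlying circuit $C^{(i)}_{n}$ of $\mathcal{C}^{(i)}$, with arity $1+n$. It receives $h(w)$ in its first input and, in its remaining inputs, the products $m_{wu}\cdot h(u)$ for $u\neq w$.
\item The copy's own recurrent edges and auxiliary memory gates are kept, and the activation gate $\actfct^{(i)}$ is placed after its output.
\end{itemize}
The padding with zeros for non-neighbours is the delicate point. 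The neighbourhood multiset must have size $|N(w)|$, not $n-1$. I would therefore assume, or establish, that the family entry of the appropriate arity is selected by the degree. If that is not possible, I would instead use the multiset-padding convention of \cite{DBLP:conf/nips/BarlagHSVV24}. In the worst case, I would select among the $n$ possible arities using equality checks in constant depth via \cite[Lemma A.4]{DBLP:journals/corr/abs-2402-17805}. Tail-symmetry guarantees that the order of the neighbours does not matter.

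The main obstacle is synchronisation. The $n$ parallel copies of $\mathcal{C}^{(i)}$ halt after different numbers of iterations, since their halting depends on the data, while a recurrent circuit has a single global halting function. I would handle this by freezing each copy once it halts:
\begin{itemize}
\item For node $w$, add the computation of the halting function $\halt$ of $C^{(i)}$ on its halting gates. This is possible in the underlying circuit only with sign gates, which is why the class becomes $\mathfrak{F}_s$. Alternatively, one can argue that this step only appears inside the composition and reuse Theorem~\ref{thm:concat_rec_circ} directly.
\item Store a flag $s_w$ in an auxiliary memory gate. Each memory gate of copy $w$ is updated to $s_w\cdot(\text{old})+(1-s_w)\cdot(\text{new})$, and its output value is latched in the same way.
\item The global halting function returns the conjunction of all flags, which is the product of the $s_w$, or equivalently a sign of their sum minus $n-1$. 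This is in $\mathfrak{F}_s$ because it has polynomial size and constant extra depth.
\end{itemize}
The outputs are $\langle M\rangle$ followed by the latched values $\actfct^{(i)}(\cdot)$, which is exactly $\langle L^{(i)}(\mathfrak G)\rangle$. Partiality agrees with the network: if some copy never halts, neither does the network.

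Finally, I would check the resource bounds. The total size is $n$ copies of polynomial size plus $O(n^2)$ extra gates. The depth is the depth of $C^{(i)}_n$ plus a constant, plus the depth of the halting circuit, which is polylogarithmic. All of this stays in $\mathfrak{F}_s[\mathcal{A}]$. Composing the $d$ layers via Theorem~\ref{thm:concat_rec_circ} then gives $f_\text{circ}$.
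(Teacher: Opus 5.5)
Your proposal is correct and follows essentially the paper's route. The paper also builds, for each layer, a recurrent circuit that runs one copy of the layer's recurrent circuit per node and evaluates each copy's halting function inside the underlying circuit, which is what forces the sign gates. It likewise latches a copy's output in auxiliary memory gates once that copy halts, halts globally on the product of the per-node flags, and then chains the constantly many layers via Theorem~\ref{thm:concat_rec_circ}. One detail is worth writing out if you take the option of running all $n$ arities in parallel for each node. The per-node flag and latched output must then be the degree-selected combination $\sum_k [\deg(w)=k]\, s_{w,k}$, because wrong-arity copies need never halt and would otherwise block the global conjunction.
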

\begin{proofidea}
The inner recurrent \CGNN has a fixed number of layers. 
We therefore compose the functions computed by the recurrent circuits of each layer while ensuring the correct connections between nodes are represented by the recurrent circuit that simulates the computations of the recurrent \CGNN.
\end{proofidea}

From Theorems~\ref{thm:recCGNN_in_circ} and \ref{thm:CrecGNN_in_circ} follows that a recurrent \CGNN with both inner and outer recurrence can be simulated by a recurrent circuit.
The proof combines the techniques used in the proofs of the mentioned theorems.

\begin{restatable}{corollary}{DoubleRecToCirc}
    \label{cor:CrecrecGNN_in_circ}
    % Let $t\mathfrak{F}$ be a tail-symmetric circuit function class and let $\textit{rec-}\mathcal{N}$ be a $\textnormal{rec}[t\mathfrak{F}_s]$-$(\textnormal{rec}[t\mathfrak{F}_s]$-$t\mathfrak{F}, \mathcal{A})$-GNN.
    % Then there exists a function $f_\textnormal{circ} \colon \R^* \to \R^*$ defined such that for all labelled graphs $\mathfrak{G}$ it holds that
    %     \[
    %         f_{\textit{rec-}\mathcal{N}}(\mathfrak{G}) = \mathfrak{G}' \Longleftrightarrow f_\textnormal{circ}(\langle \mathfrak{G} \rangle) = \langle \mathfrak{G}' \rangle,
    %     \]
    %     and $f_\textnormal{circ}$ is contained in $\recF{\mathfrak{F}_s}{\mathfrak{F}_s}$.

    Let $\textit{rec-}\mathcal{N}$ be a $\textnormal{rec}[t\mathfrak{F}_s]$-$(\textnormal{rec}[t\mathfrak{F}_s]$-$t\mathfrak{F}, \mathcal{A})$-GNN.
    Then there exists a function $f_\textnormal{circ} \in \recF{\mathfrak{F}_s}{\mathfrak{F}_s}[\mathcal{A}]$
    % \colon \R^* \to \R^*$ 
    defined such that for all labelled graphs $\mathfrak{G}$ it holds that ${
            f_\textnormal{circ}(\langle \mathfrak{G} \rangle) = \langle f_{\textit{rec-}\mathcal{N}}(\mathfrak{G}) \rangle.}$
\end{restatable}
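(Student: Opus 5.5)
We will combine the constructions behind Theorems~\ref{thm:recCGNN_in_circ} and \ref{thm:CrecGNN_in_circ}. Let $\textit{rec-}\mathcal{N}=(\NN,\haltGNN)$ have period $d$, with $\NN(i)=(\mathcal{C}^{(i)},\actfct^{(i)})$ for $1\le i\le d$. Each $\mathcal{C}^{(i)}$ is now a recurrent circuit family in $\textnormal{rec}[t\mathfrak{F}_s]$-$t\mathfrak{F}$.

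\textbf{Step 1: one layer as a recurrent circuit family.} First we isolate, from the proof of Theorem~\ref{thm:CrecGNN_in_circ}, the construction for a single layer. For each $i\le d$ we obtain a recurrent circuit family $\textit{rec-}\C_i$ that maps $\langle\lgraph\rangle$ to the encoding of the graph after applying layer $i$ once, i.e.\ $\langle(V,E,h^{(i)}_V)\rangle$ from $\langle(V,E,h^{(i-1)}_V)\rangle$. Its underlying circuits compute all nodes in parallel. For node $w$, they feed $h(w)$ together with the neighbour multiset into a copy of the inner recurrent circuit. The neighbour multiset is realised by masking with the adjacency entries; tail-symmetry makes the padding harmless, as in that proof. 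The activation $\actfct^{(i)}$ is applied afterwards, and the adjacency matrix is copied through. Each node's inner circuit has its own halting condition, so the copies must be synchronised. This is exactly the composition/synchronisation technique of Theorem~\ref{thm:concat_rec_circ}: sign gates evaluate the inner halting circuits inside the underlying circuit, and a finished copy freezes its values via feedback. This is why the underlying class is $\mathfrak{F}_s[\mathcal{A}]$, and the single-layer family lies in $\recF{\mathfrak{F}_s}{\mathfrak{F}_s}[\mathcal{A}]$.

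\textbf{Step 2: outer recurrence.} Next we mimic the proof of Theorem~\ref{thm:recCGNN_in_circ}. The new underlying circuit contains the $d$ single-layer underlying circuits from Step~1 in parallel, together with a modulo-$d$ counter held in auxiliary memory gates. It also holds two flags. The first flag says whether the currently active layer has finished its inner recurrence, computed with sign gates from the inner halting gates. The second flag says whether the outer halting condition has been reached. Only the block selected by the counter receives meaningful data. Its outputs are written back into the graph-encoding memory gates by a selector built from the counter and the finished flag; the selector uses the equality test on the fixed values $1,\dots,d$ and multiplication. When a layer finishes, the counter advances and the inner auxiliary gates are reset to their initial constants.

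The outer halting function $\haltGNN$ must be evaluated only at layer boundaries, and its arguments are the iteration number and the feature values. We therefore keep an explicit layer counter in memory, using the trick of Lemma~\ref{lem:hlt_fnc_equivalence_extra_parameter}. The halting gates are then this counter, the feature-value gates and the ``layer just finished'' flag. The new halting function is the conjunction of that flag with $\haltGNN$ evaluated on the layer counter and the features, which remains in $\mathfrak{F}_s$. Size stays polynomial and depth stays within the class, since we only take a constant number $d$ of parallel copies plus constant-depth gadgets.

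\textbf{Main obstacle.} The delicate part is the timing in Step~2. After an inner recurrence halts, a fresh iteration must start the next layer with correctly reset auxiliary memory. The outer halting check must also see exactly the post-layer feature values, not intermediate inner-iteration values. I would handle this by adding one bookkeeping iteration per layer boundary, in which values are only copied and counters updated. This is possible because sign is available in the underlying circuit.
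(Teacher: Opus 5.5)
Your architecture is the paper's: $d$ layer-simulating blocks in parallel, a counter and switching gadget that advances only once the active layer's inner recurrence has halted, the reset gadgets of Theorem~\ref{thm:concat_rec_circ} for inner auxiliary memory and input gates, and an outer halting function in $\mathfrak{F}_s$. Your explicit layer counter (Lemma~\ref{lem:hlt_fnc_equivalence_extra_parameter}), gated by a ``layer just finished'' flag, is more careful than the paper. The paper simply builds the halting function ``as in Theorem~\ref{thm:recCGNN_in_circ}'', even though circuit iterations no longer coincide with GNN layers.

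\textbf{The gap is in Step~1.} Tail-symmetry does \emph{not} make zero-padding harmless. Layer $i$ computes $f_{\mathcal{C}^{(i)}}(h(w), M)$ with the circuit of arity $|M|+1$. If you mask non-neighbours to $0$ and feed everything into one circuit, you evaluate a circuit of a different arity on $M$ plus extra zeros, and in general its output is unrelated. For example, a layer outputting $|M|$ (the degree) is tail-symmetric, yet the padded version outputs the same constant at every node. Moreover, zeros from non-neighbours are indistinguishable from genuine neighbour values $0$. Tail-symmetry only makes the \emph{order} in which neighbour values are routed irrelevant.

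\textbf{The fix} is the construction the paper borrows in the proof of Theorem~\ref{thm:recCGNN_in_circ}:
\begin{itemize}
\item For each node $w$, keep one copy of the inner (recurrent) circuit for every arity $1,\dots,|V|$.
\item Route the neighbour values into the first $n_w$ slots, in any order.
\item Select the copy of arity $n_w+1$. The degree $n_w$ is a row sum of the adjacency matrix lying in $\{0,\dots,|V|-1\}$, so the equality gadget of Lemma~\ref{lem:chiA} can do the selection.
\end{itemize}
In the recurrent setting the selection must also cover halting: node $w$'s ``finished'' bit must be the halting value of the selected copy only, because the other copies may never halt. Relatedly, the inner halting circuits take an iteration number as argument. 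Each block therefore needs its own inner iteration counter in auxiliary memory, reset at every layer boundary together with the other inner auxiliary gates. With these repairs your construction goes through.
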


%\vivian{maybe go more into detail what the different circuit function classes for inner and outer rec mean}
% \vivian{We don't actually need this proof, can get rid of it}
% \begin{proof}
%     Proceed as in the proof of Theorem \ref{thm:CrecGNN_in_circ} by concatenating the circuits in one period.
%     Then have the halting function iterate over that.
%     Use the output gates of every circuit as the gates that the new halting function is dependent on. 
%     This allows for the computation to stop in the middle of the period of circuits. \vivianS{instead of writing that here use Theorem \ref{thm:recCGNN_in_circ} instead} 
% \end{proof}

\subsection{Simulating Recurrent Arithmetic Circuits With Recurrent C-GNNs} \label{sec:circ_to_gnn}
\vspace{2mm}\noindent \textbf{Simulation with Outer Recurrent C-GNNs}
%\subsubsection{Simulation with Outer Recurrent C-GNNs}
%\vivian{could shorten encoding description further if we leave out the 3 steps for the symbolic encoding}
%\vivian{do we need to add to our theorems what $\gre{C}$ is? currently we only write logspace symbolic encoding $\sgre{C}$. Or maybe make more clear how the 2 are directly connected}

\noindent Our goal is to construct recurrent \CGNN{}s which are able to simulate the computations of recurrent circuits, i.\,e.~they have the output of a given circuit among the feature vectors of the output labelled graph when given a labelled graph as input.
Those input labelled graphs are constructed given the recurrent arithmetic circuit. 
We therefore need to fix the notion of a graph encoding.
We bound them to be logspace Turing computabile to ensure that our encodings are not too powerful.
As recurrent arithmetic circuits are usually defined via their tuple as described in Definition~\ref{def:rec_circ} and then given an input, we use a so called \emph{symbolic logspace labelled graph encoding} $\sgre{C}$ first which, given a recurrent circuit $C$ as input, uses symbols for the input, auxiliary memory and constant gates instead of the values from $\R$ as node labels. 
This ensures that our encoding can be computed by a Turing machine which only works over a finite alphabet.
% The encoding consists of three components: the encoding of the given recurrent arithmetic circuit into an input for a Turing machine, the actual transformation of this encoded circuit into an encoded symbolic labelled graph and the decoding of this into a symbolic labelled graph.
% The encoding and decoding are fixed operations and the transformation is restricted to logarithmic space.
Afterwards we use the corresponding \emph{labelled graph encoding} $\gre{C}$ that, additionally given the input $\ol{x}$, replaces the symbols for inputs, auxiliary memory and constant values with their respective values and outputs a labelled graph with labels from $\R$.\looseness=-1

% To simplify the proofs of the theorems of this section, from now on we assume the recurrent circuits to be in \emph{path-length normal form}, meaning that the following property holds for both the underlying circuits and the halting circuits. 
% This assumption does not form a restriction as shown by \cite{DBLP:conf/wollic/BarlagV21}.
% \begin{definition}
%     A circuit $C$ is in \emph{path-length normal form} if every path from an input to an output gate has the same length and every non-input and non-output gate has exactly one successor.
% \end{definition}

To simplify the proofs of the theorems of this section, from now on we assume the recurrent circuits to be \emph{balanced DAGs}, meaning that the following property holds for both the underlying circuits and the halting circuits. 
%This assumption does not form a restriction as shown by \cite{DBLP:conf/wollic/BarlagV21}.

\begin{definition}
    A circuit $C$ is a \emph{balanced DAG} if for each gate $g$ in $C$ every path from an input to $g$ has the same length.
\end{definition}

It was shown that this assumption does not form a restriction in the non-recurrent setting~\cite{DBLP:journals/mscs/BarlagCG24} and the same argument can be applied here, since the property of being a balanced DAG only concerns the structure of the underlying extended circuit and has no bearing on the recurrent part of the circuit.

%can be seen by the following lemma. 
% \begin{lemma}[\protect\cite{DBLP:conf/wollic/BarlagV21}]
%     Let $\mathcal{C}$ be an $\mathfrak{F}$-circuit family.
%     Then for every circuit $C \in \mathcal{C}$ there exists a circuit $C'$ of the same depth in path-length normal form such that \(f_{C'}(\ol{x}) = f_{C}(\ol{x})\), for all $\ol{x} \in (\R^k)^*$.
% \end{lemma} \vivianS{maybe remark that dummy gates are used for this process? do readers need this information to understand the proofs?}

We show that recurrent arithmetic circuits can be simulated by \CGNN{}s with outer recurrence.
Here the halting of the recurrent circuit needs to be simulated by the outer recurrent \CGNN, which only has a tail-symmetric halting function available that operates on all feature values.
Therefore, we have to move some of the computations of the halting functions to the internal circuits of the \CGNN.
% For this reason they have to have access to the sign function but otherwise are of the same circuit function class as the underlying circuits of the recurrent circuit that is being simulated. \timonS{splitting hairs again: circuits are not in a circuit function class}
This means that they need to have access to the sign function, however their size and depth restrictions are the same as for the underlying circuit of the recurrent circuit which they are simulating.
% For this reason they have to have access to the sign function but otherwise are of the same circuit function class as the underlying circuits of the recurrent circuit that is being simulated. \timonS{splitting hairs again: circuits are not in a circuit function class}

The halting function of the \CGNN stays in the same circuit function class as that of the simulated circuit. %\vivianS{to be completely precise it is actually $\faco$, as we only need to check whether more than $n+k+1$ gates have value 2, $n$ and $k$ are given by the circuit}

\begin{restatable}{theorem}{circToOuterRecNoActivationFnc}
    \label{thm:circ_to_outer_rec_gnn_no_activation_fnc}
    Let $\mathcal{C} = (C_n)_{n \in \N}$ be a $\text{rec}[\mathfrak{F}_s]\text{-}\mathfrak{F}[\mathcal{A}]$-circuit family.
    Then there exists a $\text{rec}[\mathfrak{tF}_s]\text{-}(\mathfrak{tF}_s[\mathcal{A}], \text{id})$-GNN rec-$\NN = (\mathcal{N}, \haltGNN)$ and a symbolic logspace graph encoding \sgre{C_n}, such that for all $n \in \N$ and $\ol{x}\in \R^n$ 
    %the feature values of the nodes of $f_{\NN}(\vpl{C_n}(\ol{x}))$ corresponding to output gates in $C_n$ are exactly the components of $f_{C_n}(\ol{x})$ in the same order.
    the feature values of $f_{\NN}(\gre{C_n})$ include $f_{C_n}(\ol{x})$.
\end{restatable}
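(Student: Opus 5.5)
We encode the recurrent circuit $C_n$ as a labelled graph in which each gate is a node. We then let a period-$1$ outer recurrent \CGNN evaluate the underlying circuit one depth level per GNN layer, so that one iteration of $C_n$ corresponds to a fixed number of GNN layers. By the remark that $\R$- and $\R^k$-models are equivalent, we may use feature vectors in $\R^k$ for a constant $k$.

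\textbf{The encoding.} The symbolic encoding $\sgre{C_n}$ has one node for every gate of the underlying extended circuit $C_n$. It also has one node for every gate of the halting circuit of $C_n$, taken from the $\mathfrak{F}_s$-family computing $\halt$. Undirected edges are placed along the circuit wires and along the recurrent edges in $E_\text{rec}$. Each node carries the following components:
\begin{enumerate}
\item a numeric code for its gate type ($+$, $\times$, constant, input, auxiliary memory, output, halting-output, or $g_f$ for $f\in\mathcal{A}$);
\item its depth $\depth(g)$, which is well defined because we assumed balanced DAGs;
\item for a memory gate, the depth of its recurrent predecessor;
\item a value slot;
\item a local phase counter.
\end{enumerate}
The symbols for inputs, constants and initial auxiliary values are replaced by $\ol{x}$, $c$ and $\ol{a}$ in $\gre{C_n}$. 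Computing all of this needs only local information about the circuit (gate types, depths, wires), so the encoding is logspace. We add an auxiliary counter gate that holds the iteration number $i$ and feeds the halting circuit, as in Lemma~\ref{lem:hlt_fnc_equivalence_extra_parameter}. This gate is incremented in every circuit iteration.

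\textbf{The local update circuit.} Every node runs the same tail-symmetric update circuit on its own vector $v$ and the multiset $M$ of its neighbours' vectors. Each node increments its phase counter $t$ modulo $D+1$, where $D$ is the depth of the combined circuit. When $t=\depth(v)$, the node recomputes its value slot; otherwise it keeps it.

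The main difficulty is that the graph is undirected, so a node cannot see which neighbours are its predecessors. We recover this from the depth labels: the predecessors of $v$ are exactly the neighbours $u$ with $\depth(u)=\depth(v)-1$, since successors have larger depth. Using equality tests built from sign gates (as in Example~\ref{ex:fib_numbers}), we mask each neighbour value:
\begin{itemize}
\item for $+$ gates, $u$ contributes $\chi[\depth(u)=\depth(v)-1]\cdot \mathrm{val}(u)$ to a sum over $M$;
\item for $\times$ gates, $u$ contributes $\chi[\cdot]\cdot \mathrm{val}(u) + (1-\chi[\cdot])$ to a product over $M$;
\item for a $g_f$ gate, $f$ is applied to the masked sum.
\end{itemize}
These are symmetric in $M$, so the circuit is tail-symmetric. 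The final value is $\sum_\tau \chi[\text{type}=\tau]\cdot \mathrm{op}_\tau$, with one term per gate type $\tau$. Sum and product over a polynomial-size multiset use unbounded fan-in gates and constant extra depth, so the circuit stays within the size and depth bounds of $\mathfrak{F}$. It needs sign for the equality tests and contains the $g_f$ gates, which is why it is in $t\mathfrak{F}_s[\mathcal{A}]$ and the global activation is $\mathrm{id}$.

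\textbf{Memory gates.} When the phase counter wraps around, a memory gate copies the value of its recurrent predecessor. It identifies that neighbour through the stored predecessor depth. If several neighbours have that depth, the stored label also distinguishes them, since the encoding may tag the recurrent edge on both endpoints. The halting-circuit nodes are evaluated in the same depth-synchronous way on the halting gates and the counter gate. They may use sign, again within $\mathfrak{F}_s$.

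\textbf{Halting and correctness.} The halting-output node sets a marker component to a distinguished value (say $2$) exactly when $\halt$ evaluates to $1$ at the end of an iteration. The GNN halting function $\haltGNN$ only asks whether some node carries the marker, or equivalently whether more than a fixed number of nodes do. This is tail-symmetric and lies in $t\mathfrak{F}_s$ (in fact in $\faco$ with sign). The output nodes then hold $f_{C_n}(\ol{x})$ among the feature values. Correctness follows by induction on iterations and, within an iteration, on phases: after phase $t$ of iteration $i$, every gate of depth $\le t$ holds exactly its value in iteration $i$ of $C_n$.

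\textbf{Main obstacle.} I expect the hardest step to be the synchronisation. The phase counter and the masking must guarantee two things. First, predecessors hold current-iteration values when they are read. Second, memory gates are not overwritten before the nodes of the same iteration have read them. I would handle this by updating memory gates only at phase $D$, after all readers are finished, and checking that the induction invariant survives the wrap-around.
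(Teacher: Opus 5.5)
\textbf{There is a genuine gap, in your first step.} The remark you invoke concerns $\R$- versus $\R^k$-\emph{circuits}, where a $k$-tuple can be flattened into $k$ input gates. The paper only claims that its results also hold when circuits \emph{and} GNNs are both over $\R^k$. Here the circuit family is over $\R$, labelled graphs have $g_V\colon V\to\R$, and $\gre{C_n}$ puts a single real on each node. Your network instead carries a tuple of bookkeeping coordinates next to the value.

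Nothing turns such a network into one with one real per node. Aggregation only sees an unordered multiset, so the coordinates cannot be spread over several nodes, and in general there is no arithmetic way to pack several reals into one. This matters because every hard step of your argument lives in those extra coordinates: the type code, the mask $\chi[\depth(u)=\depth(v)-1]$ reading the neighbours' depth coordinates, the predecessor tag, the phase counter, and the separate halting marker. Once a node's only real must also be its gate value, the node cannot tell predecessors from successors or dummies. A marker value $2$ can also coincide with an ordinary gate value.

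\textbf{The missing idea} is the core of the paper's proof: move the static information into the graph structure, and replace masking by neutral values. Degree-one dummy nodes give every gate node a distinct degree. Since the internal circuit family is indexed by arity, its member used at nodes of degree $\deg(v_g)$ can hardcode the operation of $g$, while the member used at dummies is the identity. Nodes whose values need not be kept are reset to $1$, and constant nodes to their constant. A multiplication node then simply multiplies all its neighbours. An addition, $\mathcal{A}$-, input or memory node sums all of them and subtracts the known number $q_g$ of dummy and successor neighbours. For halting, the output of the halting circuit is replicated $n+m+\ell+1$ times and mapped to $2$. Then $\haltGNN$ only has to count feature values equal to $2$, instead of reading a dedicated marker.

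\textbf{A separate bug, even over $\R^k$:} your mask breaks on undirected recurrent edges. Suppose $g=\mathit{in}_1+\mathit{aux}_1$ has depth $1$ and there is a recurrent edge $g\to\mathit{in}_2$. Then $\mathit{in}_2$ is a depth-$0$ neighbour of $g$ and gets added into $g$'s sum. Your mask for $g$ looks only at depths, and your tag is used only by the memory gate, so as written nothing discards $\mathit{in}_2$. Placing relay nodes of a distinct type on recurrent edges would fix this. Constant predecessors, which need not have depth $\depth(v)-1$, need a similar fix.
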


\begin{proofidea}
The underlying circuit and the halting circuit of the recurrent circuit $C_n$ that is being simulated are encoded into one labelled graph ensuring the following criteria:
the nodes representing halting gates are connected to the nodes representing input gates of the halting circuit and all nodes that represent circuit gates have a distinct degree by the addition of dummy nodes.
%Additional dummy nodes are added and connected to each node that represents a circuit gate to ensure they each have a distinct degree.
In each layer of the recurrent \CGNN the operations of the circuit $C_n$ are now performed stepwise via arithmetic circuit families in the nodes of the labelled graph.
The degree of the nodes serves as an identifier which circuit of the internal circuit family of the \CGNN is executed, e.\,g. which operation is performed such that the feature value of each node is equal to the value of the gate that it takes during the execution of the circuit $C_n$.
The computations of the halting function are simulated in the same way.
By a special construction the halting of the \CGNN only has to check for a predefined value in a fixed number of feature values. 
Notably the halting function is even in ${\faco}[\text{sign}]$.
\end{proofidea}

In the previous result the recurrent \CGNN that simulated the computation of functions from a circuit function class rec$[\mathfrak{F}_s]$-$t\mathfrak{F}[\mathcal{A}]$ realised the additional functions from the set $\mathcal{A}$ in the internal circuits, as those were from the circuit function class $t\mathfrak{F}_s[\mathcal{A}]$ (with the additional sign gate).
Naturally in the GNN model additional functions like those from $\mathcal{A}$ are applied as activation functions only after the computations of a layer are completed. 
To be able to simulate functions by recurrent \CGNN{}s of this form, restrictions on the arithmetic circuit families computing them are necessary.
The following type of circuit form was originally introduced in \cite{DBLP:conf/nips/BarlagHSVV24}. 
We now use and extend this notion under the assumption of balanced DAGs to our model of recurrent circuits.\looseness=-1

% \begin{definition}
%     A circuit $C$ is said to be in \emph{function-layer form} if it is in path-length normal form and for each depth $d \leq \depth(C)$ all gates of $C$ at depth $d$ have the same gate type.
% \end{definition}

% \begin{definition}
%     A recurrent arithmetic circuit with additional function gates that is in function layer form, where the last function layer is at depth $d$, is in \emph{predecessor form} if all halting gates and predecessors of memory gates are at depth $>d$.
% \end{definition}

\begin{definition}
    A recurrent circuit is in \emph{predecessor form} if its underlying circuit $C$ is a balanced DAG, for each depth $d \leq \depth(C)$ all gates of $C$ at depth $d$ have the same gate type and the last function layer is at depth $d$, if all halting gates and predecessors of memory gates are at depth $>d$.
\end{definition}
% We write $p\mathfrak{F}$ to denote only the functions that are computable by circuit families in predecessor form of the circuit function class $\mathfrak{F}$.
We write $p\mathfrak{F}$ to denote only the functions that are computable by $\mathfrak{F}$-circuit families in predecessor form and call classes of the form $p\mathfrak{F}$ \emph{predecessor form classes}.
%\timon{Changed the definition of $\mathfrak{F}$-circuit family so that the above definition should be correct. Please doublecheck.}

% \timon{Depending on our definition of $\mathfrak{F}$-family, could maybe rephrase this to ``functions computable by $\mathfrak{F}$-families in p-form''}
% \vivian{we used the exact same formulation as for the other special classes if we need to change this here we probably need to change it for the others as well}
% \timon{For the other classes, we made a restriction on the function itself (tail-symmetric, symmetric). With the current definition, there might exists a function that is computable by a constant-depth family which is not in predecessor form, but the shallowest family in predecessor form that computes it might have e.g. linear depth. I think we would not want that family to be part of $p\mathrm{FAC}^0$, but with the current definition, it is.}
% \timon{go a little bit handwavy: functions computable by p-form families that adhere to restrictions of F}

Functions computed by arithmetic circuit families of this form can be simulated by \CGNN{}s with outer recurrence that use the additional functions from the set $\mathcal{A}$ only as activation functions.

\begin{restatable}{theorem}{circToOuterRecWithActivationFnc}
    \label{thm:circ_to_outer_rec_gnn_with_activation_fnc}
    Let $\mathcal{C} = (C_n)_{n \in \N}$ be a $\text{rec}[\mathfrak{F}_s]\text{-}p\mathfrak{F}[\mathcal{A}]$-circuit family.
    Then there exists a $\text{rec}[\mathfrak{tF}_s]\text{-}(\mathfrak{tF}_s, \mathcal{A} \cup \{\text{id}\})$-GNN rec-$\NN$ and a symbolic logspace graph encoding \sgre{C_n}, such that for all $n \in \N$ and $\ol{x}\in \R^n$ 
    %the feature values of the nodes of $f_{\NN}(\gre{C_n})$ corresponding to output gates in $C_n$ are exactly the components of $f_{C_n}(\ol{x})$ in the same order.
    the feature values of $f_{\NN}(\gre{C_n})$ include $f_{C_n}(\ol{x})$.
\end{restatable}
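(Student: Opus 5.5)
We reuse the graph encoding and layer-by-layer simulation from the proof of Theorem~\ref{thm:circ_to_outer_rec_gnn_no_activation_fnc}. The only change concerns the gates labelled with functions from $\mathcal{A}$. These are no longer evaluated inside the internal circuits. Instead, they are realised by the activation function applied after a GNN layer.

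\textbf{Step 1: the encoding.} Given $C_n$ in predecessor form, the symbolic logspace encoding $\sgre{C_n}$ creates one node per gate of the underlying circuit and one node per gate of the halting circuit. Recurrent edges and the connections from halting gates to the inputs of the halting circuit become graph edges. Dummy neighbours give every gate-node a distinct degree, which serves as its identifier. Because $C_n$ is a balanced DAG, every gate $g$ has a well-defined depth $\depth(g)$. Hence the value of $g$ in one iteration of $C_n$ is first available exactly in GNN layer $\depth(g)$ of the corresponding outer period.

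\textbf{Step 2: the periodic network.} I would let $\NN$ have period $D=\depth(C)+\depth(C_\text{halt})+c$ for a small constant $c$ used for bookkeeping, namely the halting flag and copying predecessor values into memory-gate nodes. In each layer $j$ the network applies one tail-symmetric $\mathfrak{tF}_s$-circuit family. Using the degree (via $\text{sign}$-based equality tests) and a layer-counter stored in a dummy node, this circuit updates exactly those nodes whose gate lies at depth $j$. Each such node computes $+$, $\times$ or a constant from the multiset of its neighbours' values, and every other node keeps its value.

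The predecessor-form condition is used here. All gates at depth $j$ have the same type. So if that type is some $f\in\mathcal{A}$, layer $j$ uses the pair $(\text{identity-copy circuit}, f)$: the internal circuit passes the predecessor value to the node, and the global activation $f$ then produces the correct gate value. The problem is that $f$ is also applied to every other node's feature value, destroying stored values. To avoid this, I would arrange the simulation so that all function layers occur at depth at most $d$. Predecessor form says memory predecessors and halting gates lie strictly deeper than $d$. Consequently, after depth $d$ no later computation needs anything except values of gates at depth $\ge d$. Nodes of shallower depth may therefore be garbled by $f$ and recomputed next period, because memory-gate nodes re-acquire their values from the memory predecessors at the end of the period. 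The only nodes whose values must survive the function layers are the memory-gate nodes themselves, and I would handle these as follows:
\begin{itemize}
\item in the layer before a function layer, each memory-gate value is redundantly copied into a dummy neighbour;
\item for a monotone injective $f$ it could instead be restored via $f^{-1}$;
\item otherwise, in general, before the first function layer all values needed later (the inputs and auxiliary memory values that feed gates at depth $>$ that layer) are moved into the nodes of the current depth front, so they are themselves ``gate values'' of the balanced DAG. Balancedness guarantees this: every path passes through the front, so a value needed later is carried by dummy identity gates at each depth, and predecessor form ensures no memory gate is read after depth $d$.
\end{itemize}

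\textbf{Step 3: halting.} After depth $d$ only $+$, $\times$ and constants occur, with activation $\text{id}$. In these layers the memory predecessors and halting gates are computed. The halting circuit $C_\text{halt}$ is then simulated internally with sign gates, writing a flag value such as $2$ into a fixed set of designated nodes. The tail-symmetric halting function only counts whether enough nodes carry the flag, exactly as in Theorem~\ref{thm:circ_to_outer_rec_gnn_no_activation_fnc}. The output-gate nodes then carry $f_{C_n}(\ol{x})$.

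\textbf{Main obstacle.} The hard part is ensuring that global activations in the function layers do not corrupt values still needed. I expect the carrying argument via balancedness together with predecessor form, with every needed value living on the current depth front when a function layer fires, to be the crux, including the correct handling of the memory-gate nodes at the start of each period.
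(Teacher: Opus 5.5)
There is a genuine gap: you place the danger in the wrong location, and the thing that actually breaks is never repaired. The memory-gate nodes need no protection. Since the underlying circuit is a balanced DAG, a depth-$0$ gate feeds only depth-$1$ gates, so the memory values are consumed in the first layer of the period. They are then rewritten from their predecessors, which predecessor form places at depth $>d$, i.e.\ after the last function layer. Your three patches are therefore superfluous, and as stated they would not work anyway:
\begin{itemize}
\item the dummy neighbour that receives the copy is hit by the global $f$ like every other node;
\item $f^{-1}$ is not available in the internal $\mathfrak{tF}_s$-circuits, which contain no $\mathcal{A}$-gates;
\item no identity ``carrier'' can sit at a function depth, because predecessor form forces every gate at that depth to be an $f$-gate.
\end{itemize}

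What the global activation does destroy is the bookkeeping that the \emph{next} computation layer relies on. A node can extract its gate's inputs from the unordered neighbour multiset only because all non-predecessor neighbours (dummies, idle successors) carry the known value $1$, and constant-gate nodes carry their constant $c_g$. This is what makes $\sum U_g - q_g$ and $\prod U_g$ correct. For the same reason, ``every other node keeps its value'' is incompatible with this extraction: idle nodes must be reset every layer, as in the proof of Theorem~\ref{thm:circ_to_outer_rec_gnn_no_activation_fnc}. A layer counter stored in a dummy node would also be garbled by $f$.

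After a global $f$ these nodes hold $f(1)$ and $f(c_g)$, so a $+$- or $\times$-gate directly below a function layer computes the wrong value. The paper fixes this by spending two GNN layers per activation gate. In the first layer, the $f$-gate nodes form their pre-activation value and $f$ is applied globally. In the second layer, every node except the $f$-gate nodes is reset by the reset procedure (to $1$, or to $c_g$ for constant gates). Only then is the next depth simulated, with the layer schedule shifted accordingly. Your use of predecessor form is correct: it guarantees that memory predecessors and halting gates are computed only after the last global activation. Combined with this reset layer, that is what makes the simulation go through.
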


\begin{proofidea}
The proof proceeds in the same way as that of Theorem~\ref{thm:circ_to_outer_rec_gnn_no_activation_fnc}.
The predecessor form ensures that during the simulation of the recurrent circuit no results are overwritten by the global application of the activation function.
\end{proofidea}
% \begin{remark}
%     if we have a constant number of iterations this basically corresponds to our function layer form in the neurips paper
% \end{remark}

\vspace{2mm}\noindent\textbf{Simulation with Inner Recurrent C-GNNs}

\noindent To be able to simulate recurrent arithmetic circuits with inner recurrent \CGNN{}s, a different restriction on the functions computed by the circuits is needed.
\begin{definition}
   \label{def:symmetric_fnc}
   A function $f \colon \R^n \to \{0,1\}$ is \emph{symmetric}, if
    \(
        f(v_1 \dots, v_n) = f(\pi(v_1, \dots, v_n))
    \)
    for all permutations~$\pi$.
\end{definition}
We write $s\mathfrak{F}$ to denote only the symmetric functions of the circuit function class $\mathfrak{F}$.
We call classes of the form $s\mathfrak{F}$ \emph{symmetric classes}.

Functions from recurrent function classes of this form can now be simulated by inner recurrent \CGNN{}s by essentially simulating the computations of the recurrent circuit computing them in parallel in each of the nodes of the input graph.
The need for the symmetry arises because computations done in a node of a \CGNN have no order on their inputs, i.\,e.~on the feature values of their neighbouring nodes. 
Notice that due to this method of simulation the underlying circuits of the internal circuits do not need to have access to the sign gate, as was required in the previous results.

\begin{restatable}{theorem}{circToInnerRec}
    \label{thm:circ_to_inner_rec}
    Let $\mathcal{C} = (C_n)_{n \in \N}$ be a $\text{rec}[\mathfrak{F}_s]\text{-}s\mathfrak{F}[\mathcal{A}]$-circuit family.
    Then there exists a $(\text{rec}[\mathfrak{F}_s]\text{-}\mathfrak{tF}[\mathcal{A}], \{\text{id}\})$-GNN rec-$\NN$ and a symbolic logspace graph encoding \sgre{C_n}, such that for all $n \in \N$ and $\ol{x}\in \R^n$ the feature values of $f_{\NN}( \gre{C_n})$ include $f_{C_n}(\ol{x})$.
\end{restatable}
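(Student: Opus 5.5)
The plan is to let every node of the encoding graph run the same recurrent circuit $C_n$ on the same data. The graph $\sgre{C_n}$ will be a star-like (or complete) graph on $n$ \emph{input nodes} $u_1,\dots,u_n$, whose labels in $\gre{C_n}$ are $x_1,\dots,x_n$. Each node will see all input values through its neighbourhood, so the neighbourhood multiset is $\lms x_1,\dots,x_n \rms$ minus its own value, which is known. Concretely, I would take the complete graph $K_n$. Then node $u_j$ receives its own value $x_j$ as the distinguished first argument and $\lms x_i \mid i\neq j\rms$ as the multiset. The encoding is trivially logspace, since it only outputs $n$ input symbols and the edges of $K_n$.

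The \CGNN will have a single layer, with identity activation and halting given by the fixed layer number $1$. Its internal function is the recurrent circuit family $(C'_{m})$, where $C'_m$ on input $(y_1,y_2,\dots,y_m)$ (first argument the own value, then the neighbours) simulates $C_m$ on $(y_1,\dots,y_m)$. Since $f_{C_m}$ is symmetric, the order in which the multiset is presented does not matter. Hence $C'_m$ can literally be $C_m$ with the same underlying circuit, memory gates, recurrent edges and halting function. The resulting function is symmetric, so in particular tail-symmetric, and lies in $\text{rec}[\mathfrak{F}_s]\text{-}t\mathfrak{F}[\mathcal{A}]$. The $\mathcal{A}$-gates stay inside the internal circuits, which is why only $\{\text{id}\}$ is needed as activation and no sign gate is added to the underlying circuit.

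The remaining issue is the output dimension. $C_n$ has $m$ outputs, while a node feature is a single real. Working over $\R$, I would choose the nodes to carry out the output coordinates. I would add $m$ extra groups, or better, index the nodes so that node $u_j$ outputs coordinate $j$. This requires the internal circuit to know its node's index, which is forbidden by symmetry. So instead I would use $n\cdot m$-style padding: for each output index $k\le m$, add a gadget of nodes whose degree (adding pendant dummy nodes, as in Theorem~\ref{thm:circ_to_outer_rec_gnn_no_activation_fnc}) differs. The internal family indexed by input length then selects output $k$ of $C_n$.

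Concretely, node $w_k$ would be adjacent to all $u_i$ and to $k$ dummy nodes labelled $0$. The internal circuit for arity $n+k$ computes the $k$th output of $C_n$ on the nonzero-position values. The difficulty is separating the $x_i$ from the dummy zeros without an order. My fix is to make input arity alone determine the role. The dummy values are all $0$ and the circuit ignores $k$ arguments, but it cannot tell which arguments are dummies. So I would instead use the remark that the paper allows $\R^k$ features: each label is a pair (flag, value), the circuit multiplies by flags, and it sums or symmetrically combines only flagged values.

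The main obstacle I expect is this extraction of the $k$th output by a symmetric internal circuit while staying tail-symmetric and in the same class. If the pair encoding is unavailable, I would fall back on having all outputs collapse to a single node when $m=1$, and handle general $m$ by the $\R^k$ equivalence from \cite{DBLP:conf/nips/BarlagHSVV24}.
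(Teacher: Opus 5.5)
Your $K_n$ construction is correct when $C_n$ has a single output. For general $m$, however, the proposal has a genuine gap, and it comes from a wrong premise. You say that a node knowing which output coordinate to produce ``is forbidden by symmetry''. But the internal functions of a \CGNN{} only have to be \emph{tail}-symmetric, so a node's own feature value is a distinguished argument that the circuit may use freely. On your input nodes that slot is already taken by $x_j$, but nothing stops you from adding dedicated output nodes whose own label is an index.

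The workarounds you propose instead do not prove the statement.
\begin{itemize}
\item The dummy-node gadget fails for the reason you note: a symmetric circuit cannot tell the $k$ zeros from the $x_i$. It also fails for a second reason. One internal family must serve all $n$ at once, while the arity $1+n+k$ does not determine $n$ and $k$; for instance $(n,k)=(2,1)$ and $(1,2)$ give the same arity. So the circuit of a given arity knows neither which $C_n$ to simulate nor which coordinate to return.
\item The (flag, value) pairs turn the network into one over $\R^2$, whereas the theorem concerns real-valued graphs $\gre{C_n}$ and a GNN over $\R$. The cited $\R$/$\R^k$ equivalence is about circuits. It does not turn an $\R^2$-feature network into a real-valued one on the same graph.
\end{itemize}

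The missing step is the paper's construction. Use the complete bipartite graph with $n$ input nodes labelled $x_1,\dots,x_n$ and $m$ output nodes $w_1,\dots,w_m$ labelled $1,\dots,m$, where each $w_k$ is adjacent exactly to the input nodes. Then $w_k$ receives head $k$ and tail $\lms x_1,\dots,x_n\rms$. The internal recurrent circuit of arity $n+1$ consists of:
\begin{itemize}
\item $C_n$ run on the tail, where order is irrelevant because $f_{C_n}$ is symmetric;
\item one extra input gate for the head $h$, which keeps its value through a recurrent self-edge;
\item the same halting gates and halting function as $C_n$;
\item a single output gate computing $\sum_{k=1}^{m} p_{[m],k}(h)\,o_k$. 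Here $o_k$ are the values $C_n$ feeds into its output gates and $p_{[m],k}$ are the polynomials of Lemma~\ref{lem:chiA}.
\end{itemize}
This selector is a polynomial, so no sign gate enters the underlying circuit and the function is tail-symmetric. One layer with identity activation then places $f_{C_n}(\ol{x})$ on $w_1,\dots,w_m$.

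There is one technicality the paper's proof also leaves implicit. The input nodes run the internal circuit too, with head $x_i$ and tail $\lms 1,\dots,m\rms$, which means running $C_m$ on $(1,\dots,m)$. These runs must halt, or $f_{\NN}(\gre{C_n})$ is undefined. Non-uniformity lets you fix this: in arities where $C_m$ diverges on $(1,\dots,m)$, add a first-iteration halting test for exactly that tail. This only affects inputs on which $f_{C_m}$ is undefined anyway. Your $K_n$ graph for $m=1$ avoids the issue, since every node there computes $f_{C_n}(\ol{x})$.
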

\begin{proofidea}
    For every $n$, a complete %\lauraS{a bipartite graph is not fully-connected, but complete (or biclique)} 
    bipartite labelled graph is constructed, such that it has one set of $n$ nodes with the $n$ input values as feature values and one set of $m$ nodes with feature values 1 to $m$ for the $m$ outputs of the circuit $C_n$.
    The \CGNN consists of one layer where in all nodes the computations of $C_n$ are executed. %\lauraS{I find this sentence to be a bit confusing; which nodes? computations \textbf{of} $C_n$?}
    For the $m$ nodes the inputs are the feature values of their $n$ neighbouring nodes, i.\,e. the input values of $C_n$.
    Each of the $m$ nodes then contains one of the outputs of $C_n$ determined by the initial feature value of the node. 
\end{proofidea}
% \vivian{somehow tie this back to our previous results, to be complete it would be nice to also have a result of a function that an outer recurrent \CGNN can't compute (but that is the GNN result that we couldn't prove yet)}

As in Theorem~\ref{thm:circ_to_outer_rec_gnn_no_activation_fnc} the model of \CGNN used here does not use any activation functions besides the identity function.
Additional functions from the simulated circuit family are again realised directly in the internal circuits.

It can be shown that there exist functions from some recurrent circuit function class without the previously mentioned restrictions that cannot be simulated by a \CGNN with only inner recurrence that simulates the additional functions form $\mathcal{A}$ only as activation functions. 
This is based on the fact that an inner recurrent \CGNN has a fixed number of layers that is determined during construction of the network.
The lemma also holds for any non-continuous activation function.
\begin{restatable}{lemma}{CircToInnerRecWithActivationFnc}
    \label{lem:circ_to_inner_rec_with_activation_fnc}
     There exists a $\text{rec}[\mathfrak{F}_s]\text{-}s\mathfrak{F}[\exp]$-circuit family $\mathcal{C} = (C_n)_{n \in \N}$, with $\exp$ the exponential function, 
    % Then there exists no $(\text{rec}[\mathfrak{F}_s]\text{-}\mathfrak{tF}, \{\exp\})$-GNN rec-$\NN$, such that for all $n\in \N$ and for all logspace encoded graphs $\vpl{C_n}$ the feature values of $f_{\NN}(\vpl{C_n})$ include $f_{C_n}(\ol{x})$.
    where there is no symbolic logspace graph encoding $\sgre{C_n}$ such that there exists a $(\text{rec}[\mathfrak{F}_s]\text{-}\mathfrak{tF}, \{\exp\})$-GNN rec-$\NN$, such that the feature values of $f_{\NN}(\gre{C_n}))$ include $f_{C_n}(\ol{x})$.
    %\timon{Is it necessary that we restrict to a symmetric class?}
\end{restatable}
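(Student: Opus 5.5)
Since an inner recurrent \CGNN has a fixed number $d$ of layers, each followed by a global application of $\exp$, I would exhibit a symmetric recurrent circuit family that applies $\exp$ an input-dependent, unbounded number of times. The concrete choice is the family where $C_n$ has one auxiliary memory gate initialised to $x_1+\dots+x_n$, which is symmetric. On each iteration this gate is fed through an $\exp$ gate, and an input counter (fed $x_1+\dots+x_n$ back unchanged) provides the iteration bound. The halting circuit, in $\mathfrak{F}_s$, compares the iteration number with a fixed quantity. For simplicity I would take the comparison to be against $\mathrm{sign}$-rounded data, or simply halt when $i = n$. Then $f_{C_n}(\ol{x}) = \exp^{(n)}(\sum_j x_j)$, the $n$-fold iterated exponential. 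The underlying circuit has constant size and depth, so the family lies in $\text{rec}[\mathfrak{F}_s]\text{-}s\mathfrak{F}[\exp]$.

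Next I would analyse what any $(\text{rec}[\mathfrak{F}_s]\text{-}\mathfrak{tF}, \{\exp\})$-GNN can compute on a graph $\gre{C_n}$. The internal circuits may recur, but they contain no $\exp$ gates; only the $d$ layer-wise activations apply $\exp$. The key lemma, proved by induction on the layer, is a growth bound. Every feature value after layer $i$ is bounded in absolute value by $\exp^{(i)}(P_i)$, where $P_i$ is a quantity built from the input values and the labels of $\gre{C_n}$ using the arithmetic allowed inside the internal circuits. I would try to make this uniform as follows: the internal recurrent circuits compute, on fixed inputs, values obtainable by finitely many $+$, $\times$ and halting-controlled iterations, but these add no exponential towers. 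Since $d$ is fixed, the tower height reachable is at most $d$ plus whatever the purely arithmetic recurrence produces.

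The main obstacle is exactly that last claim. Inner recurrence with unboundedly many iterations of multiplication can already produce doubly exponential growth, e.g.\ by repeated squaring $x\mapsto x^2$ for $k$ steps, and one must rule out reaching towers of height $n$. I would handle this by fixing the input $\ol{x}$ with $\sum_j x_j = 1$ and bounding the number of internal iterations. On a fixed graph, halting of each internal circuit is determined by an $\mathfrak{F}_s$ function of its gate values. Since the labels of the logspace encoding are symbolic, the values seen by the network depend only on $\ol{x}$ and finitely many constants. If that still fails, I would instead argue non-uniformly. Choose $n$ large and compare tower heights: polynomial-size circuits iterated $k$ times yield at most $\exp(c\cdot k\cdot\text{poly})$-type growth per layer, i.e.\ tower height increases by at most a bounded amount per layer, whereas $\exp^{(n)}(1)$ exceeds any tower of height $d+O(1)$ once $n > d+O(1)$.

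Finally, I would conclude by contradiction. If the feature values of $f_{\NN}(\gre{C_n})$ included $\exp^{(n)}(1)$ for all $n$, this would violate the height-$(d+c)$ bound for $n$ sufficiently large.
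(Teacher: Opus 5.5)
\textbf{Your construction is fine; the impossibility argument has a genuine gap.}

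Two small repairs are needed in the construction. Auxiliary memory gates must start with constants, so use for instance a flag gate $f$ initialised to $1$ and reset to $0$, with update $a\leftarrow\exp(f\cdot\sum_j x_j+(1-f)\,a)$. The halting condition $i=n$ needs the number of halting gates to grow with $n$, which the paper's convention on halting functions allows.

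The gap is in the impossibility half. The growth lemma you rely on is false in this model: you need that after $d$ layers every feature value lies below a tower of height $d+O(1)$. Consequently no comparison of magnitudes at a fixed input can succeed.

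The internal circuit families are non-uniform and may contain arbitrary real constants. Also, $\gre{C_n}$ may contain a node whose arity reveals $n$, for example a node adjacent to the $n$ input nodes. The internal circuit used at that arity can simply output the constant $\exp^{(n-1)}(1)$. After the activation, that node carries exactly $\exp^{(n)}(1)=f_{C_n}(\ol{x})$ for every $\ol{x}$ with $\sum_j x_j=1$. So the plan of fixing such an input is refuted outright.

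Internal halting conditions also defeat any iteration bound. Conditions such as $i\ge c_k$ with a non-uniform constant $c_k$, or ``a counter exceeds a feature value'', make the number of internal iterations unbounded in $n$ and in the inputs. Your per-layer estimate is also off: iterating a degree-$D$ polynomial map $k$ times gives growth like $M^{D^k}$, not $\exp(c\,k\cdot\mathrm{poly})$. The final contradiction with a height-$(d+c)$ bound is therefore unavailable.

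\textbf{What is missing is an argument about the \emph{functional form} of the output over a continuum of inputs, not its size.} The paper's proof appeals to such a shape invariant: the number of nested exponentials, $e\uparrow\uparrow d'\neq e\uparrow\uparrow d$. One way to make this rigorous for your family is as follows.
\begin{itemize}
\item Restrict to the line $\ol{x}=(t,0,\dots,0)$.
\item Let a \emph{computation path} $p$ be the tuple of iteration counts of all internal recurrent circuits at all nodes in all $d$ layers. There are only countably many paths.
\item For fixed $p$ and node $v$, the feature value is an explicit entire function $F_{p,v}(t)$: fixed polynomial maps composed a fixed number of times, interleaved with $d$ exponentials.
\item Because the counts are frozen, induction over the layers gives $|F_{p,v}(t)|\le\exp^{(d)}(t^{c})$ for large $t$, with $c$ depending on $p$.
\item For $n>d$ we have $\exp^{(n)}(t)\ge\exp^{(d)}(e^{t})$, so $F_{p,v}\not\equiv\exp^{(n)}$.
\item Zeros of nonzero analytic functions are isolated, so each set $\{t : F_{p,v}(t)=\exp^{(n)}(t)\}$ is countable.
\item Every $t$ at which the network's output contains $f_{C_n}(t,0,\dots,0)$ lies in the countable union of these sets. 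Since $\R$ is uncountable, some $t$ fails.
\end{itemize}
The growth comparison is legitimate only for these frozen formulas, never for the network itself. That is the step your proposal lacks.
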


% \vivian{definition is needed what it means that the number of iterations can (easily) be precomputed or not, goes back to the fixed point halting in the section of the circuit definitions}
% \vivian{but probably also BSS-computability... decide if we want to go down this route or only state that it is difficult}
% \vivian{also need to prove that there is no possible input preserving graph encoding that could do the same thing, like multiplication of all neighbours initialised with the constant $e$. In all those cases the number of iterations $d'$ needs to be known and that can't be computed in the encoding }
% \vivian{is it easier to go via another function than the exponential one, as this can be expressed by constants and unbounded multiplication?}

%\begin{remark}
    The circuit family in Lemma~\ref{lem:circ_to_inner_rec_with_activation_fnc} can be defined to be in predecessor form.
    By Theorem~\ref{thm:circ_to_outer_rec_gnn_with_activation_fnc} there exists an outer recurrent $\CGNN$ that has the output of the circuit family among its final feature vectors.
%\end{remark}

Lemma~\ref{lem:circ_to_inner_rec_with_activation_fnc} also holds for an inner recurrent \CGNN where the underlying circuits of the internal recurrent circuits have access to the sign function.
It follows that there exist functions that an outer recurrent \CGNN can compute that an inner recurrent \CGNN cannot.

% \vivian{For a function with only one input and one output inner and outer recurrent GNNs are identical. Inner recurrence GNNs would be able to compute a function with halting dependent on other gates than the out though.
% Also: outer recurrence GNNs can only do this if the input gates predecessor is the output gate and there are no auxiliary memory gates...
% \begin{lemma}
%     Let $\mathcal{C} = (C_n)_{n \in \N}$ be an $\text{rec}[\mathfrak{F}_s]\text{-}\mathfrak{F}$-circuit family.
%     Where $f_{C_n}\colon \R \to \R$ and halting is only dependent on the value of the output gate. 
%     Then there exist a $(\text{rec}[\mathfrak{tF}_s]\text{-}\mathfrak{tF}, \mathcal{A})$-GNN rec-$\NN$ as well as a a $\text{rec}[\mathfrak{tF}_s]\text{-}(\mathfrak{tF}, \mathcal{A})$-GNN rec-$\NN$', such that for all $n \in \N$ and $x\in \R$ and for graph $G$ with one node
%     the feature value of the node of $f_{\text{rec-}\NN}(G)= f_{\text{rec-}\NN'}(G)$ is exactly $f_{C_n}(x)$.
% \end{lemma}
% could technically get rid of the tail-symmetry here
% \begin{proof}
%     Both recurrent GNNs would only have one layer, for the inner recurrence the given circuit would be assigned in that layer, for outer recurrence only use the underlying circuit and iterate the GNN based on the halting function of the circuit.
% \end{proof}}
\section{Conclusion}
% \timon{is it tail-symmetric or tail symmetric? I think we have used both, should probably get consistent}
In this paper, we introduced a model of recurrent arithmetic circuits and a generalisation of recurrent graph neural networks based on arithmetic circuits. 
We defined different models of recurrence for GNNs and showed correspondence of those to arithmetic circuits. 
However, some restrictions needed to be imposed on the particular circuits used in our constructions.
In particular, the circuits used in our recurrent \CGNN{}s needed to be tail-symmetric.
We established that inner recurrent \CGNN{}s can simulate symmetric circuit functions.
To simulate recurrent circuits by an outer recurrent \CGNN with activation functions (as in Theorem~\ref{thm:circ_to_outer_rec_gnn_with_activation_fnc}), we restricted the circuits to be in the more restrictive \emph{predecessor form}. 

An interesting avenue for further research is the question, whether those restrictions can be relaxed or if it can be proven that they are necessary.

One concept in circuit complexity that we did not touch on is the notion of \emph{uniformity}. 
Without further restrictions, the individual circuits within one circuit family do not need to have anything in common.
This means that in general circuit families do not need to be finitely describable, though that is a desirable property for computational models. 
A circuit family which can be finitely described is called \emph{uniform}, since its circuits need to be of a uniform structure.
It is worth noting that all our proofs are constructive, which means that our circuit families are all uniform in some sense, though the exact nature of that uniformity has yet to be investigated.

A further area of research is to directly study the connections between the different recurrent \CGNN models we introduced and to confirm our conjectures based on our results in relation to arithmetic circuits. 
An interesting factor here is that all models work on labelled graphs that cannot be altered when directly comparing computations.
We conjecture the two models of inner and outer recurrence to be provably incomparable via bisimulation and memory capacity arguments.
Likewise, the study of capabilities between different models is interesting under reasonably simple reductions.

Further investigations are required to obtain practical implications of our theoretical results.
Once we fix the architecture of our recurrent \CGNN{}s (i.\,e., a circuit function class for the GNN nodes), we obtain a specific circuit class that characterises the computational power of the recurrent C-GNN model which can then be rigorously studied. 

\section*{AI Declaration}

The authors have not employed any Generative AI tools.

\bibliographystyle{kr}
\bibliography{references}
\cleardoublepage

\appendix
\section{Appendix}
We use $[n]$ to denote the first $n$ non-zero natural numbers $\{1, \dots, n\}$.
\subsection{Proofs of Section~\ref{sec:rec_arithmetic_circs}}
\begin{lemma}\label{lem:hlt_fnc_equivalence_extra_parameter}
    Circuit halting functions with additional parameter for the halting function $\halt \colon \N_{>0} \times \R^k \to \{0,1\}^*$ and without $\halt \colon \R^{k+1} \to \{0,1\}^*$ are equivalent.
\end{lemma}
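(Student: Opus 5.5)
We prove that the two variants of recurrent circuits (halting function with the iteration-number parameter, and halting function without it) compute the same functions, by one simulation in each direction. Throughout, we stay within a fixed circuit function class $\mathfrak{F}$ for the underlying circuits and the same class $\mathfrak{F}_s$ for the halting functions.

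\emph{From the parameter-free version to the version with iteration number.} Let $\halt' \colon \R^{k+1} \to \{0,1\}$ be a parameter-free halting function on $k+1$ halting gates. We take the same underlying circuit and the same halting gates, and define $\halt(i,\ol{v}) \dfn \halt'(\ol{v})$. This is computed by the circuit for $\halt'$ with an additional input gate $i$ that has no outgoing edges, so size and depth are unchanged.

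\emph{From the version with iteration number to the parameter-free version.} Let $\halt \colon \N_{>0}\times\R^k \to \{0,1\}$ act on $k$ halting gates. We add a counter to the underlying circuit $C$ as follows:
\begin{itemize}
  \item a new auxiliary memory gate $\textit{aux}_{\ell+1}$ with initial value $1$;
  \item a constant gate $1$;
  \item an addition gate $g = \textit{aux}_{\ell+1} + 1$;
  \item a recurrent edge from $g$ back to $\textit{aux}_{\ell+1}$.
\end{itemize}
By induction on the iteration number, $\textit{aux}_{\ell+1}$ holds the value $i$ during iteration $i$: in iteration $1$ it holds the initial value $1$, and each iteration feeds $i+1$ back through $g$. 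We then make $\textit{aux}_{\ell+1}$ the first halting gate, followed by the original halting gates, and set $\halt'(x_0,\ol{v}) \dfn \halt(x_0,\ol{v})$. The circuit computing $\halt'$ is literally the circuit computing $\halt$, with its iteration input now fed by a real-valued halting gate. Since the counter only ever takes positive integer values, $\halt'$ evaluates $\halt$ exactly on arguments in its original domain.

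The modification of $C$ adds only a constant number of gates and increases depth by at most $1$. For a balanced-DAG setting, we insert $\depth(C)-1$ chained ``$+0$'' dummy gates in front of $g$, which keeps size polynomial and depth within $\bO((\log n)^i)$. Hence the new circuit family remains in $\mathfrak{F}$.

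The remaining issue is the remark that circuits of a family with the same number of halting gates must share a halting function. We handle it in the family-level version of the construction, where every halting arity grows by exactly one. Since arity is shifted uniformly across the whole family, circuits that previously had equal numbers of halting gates still do after the shift, and their halting functions coincide. So the convention is preserved and the new set of halting functions is still a subset of the same $\mathfrak{F}_s$-family, just reindexed.

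This bookkeeping, keeping the halting functions inside a single $\mathfrak{F}_s$ family under the arity convention, is the only delicate point. The behavioural equivalence follows by induction on the iteration number using the recursive definition of $f^i_{\text{mem}}$: at every iteration both circuits hold identical values on all original gates, and both halting functions receive the same arguments.
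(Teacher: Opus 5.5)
Your proof is correct and uses the paper's construction: a counter auxiliary memory gate, incremented by an addition gate with a constant $1$ via a recurrent edge, is fed into the halting circuit in place of the iteration number. The paper makes the addition gate itself the new halting gate and leaves the counter's initial value implicit, and it omits the trivial converse and the family-level bookkeeping that you spell out; your balanced-DAG padding is unnecessary here, since the three-gate counter component is already balanced on its own.
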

\begin{proof}
    Add the orange part to the circuit, where the new addition gate is an additional halting gate. 
    Change the halting circuit such that the first parameter is now the value of this gate, scrap the iteration number.
    % \begin{tikzpicture}[gate/.style={circle, minimum width=0.9cm, draw}]
    %     \node[]   (i)     {$i$};
    %     \node[]     (v1)    [right=0.5 of i]    {$v_1$};
    %     \node[]     (dots)    [right=0.5 of v1]    {$\dots$};
    %     \node[]     (vn)    [right=0.5 of dots]    {$v_k$};
    %     \node[rectangle, draw]  (C) [below= of dots]    {$C_\text{halt}$};
    %     \node[gate]  (o) [below= of C]    {$o$};

    %     \draw[-stealth] (i)     --      (C);
    %     \draw[-stealth] (v1)     --      (C);
    %     \draw[-stealth] (vn)     --      (C);
    %     \draw[-stealth] (C)     --      (o);
    % \end{tikzpicture}
    % \begin{tikzpicture}[gate/.style={circle, minimum width=0.9cm, draw}]
    %     \node[]   (i)     {$\ol{i}$};
    %     \node[]   (a)     [right=0.5 of i] {$\ol{a}$};
    %     \node[rectangle, draw]  (C) [below= of $(i)!0.5!(a)$]    {$C$};

    %     \draw[thick, -stealth] (i)     --      (C);
    %     \draw[thick,-stealth] (a)     --      (C);
    %     \draw[thick,dashed, -stealth] (C)     -|      (i);
    %     \draw[thick,dashed, -stealth] (C)     -|      (a);
    % \end{tikzpicture}
    \centering
    \begin{tikzpicture}[gate/.style={circle, minimum width=0.5cm, draw}]
        \node[]   (i)     {$\ol{i}$};
        \node[orange]     (count)     [left=0.5 of i]     {$c$};
        \node[orange]     (1)     [left=0.5 of count]     {1};
        \node[gate, orange, thick]     (add)     [below =0.5 of $(count)!0.5!(1)$]     {$+$};
        \node[]   (a)     [right=0.5 of i] {$\ol{a}$};
        \node[rectangle, draw]  (C) [below= of $(i)!0.5!(a)$]    {$C$};

        \draw[thick, -stealth] (i)     --      (C);
        \draw[thick,-stealth] (a)     --      (C);
        \draw[-stealth, orange] (count)     --    (add);
        \draw[dashed, -stealth, orange] (add)     to[bend right]      (count);
        \draw[-stealth, orange] (1)     --      (add);
        \draw[thick,dashed, -stealth] (C)     -|      (i);
        \draw[thick,dashed, -stealth] (C)     -|      (a);
    \end{tikzpicture}
    
\end{proof}

%\equalityGate*
\begin{remark}[Equality of 2 values]
    \label{rem:eq}
    The equality of 2 values can be computed using a (non-recurrent) circuit of depth 7 facilitating the following formula:
    $x_1 = x_2 \coloneqq ((\text{sign}(x_1-x_2) + \text{sign}(x_2-x_1)) \times -1) +1$
    where each subtraction is computed as $x_1 - x_2 \coloneqq x_1 + (x_2 \times (-1))$. 
\end{remark}
The circuit $C_=$ computing the function can be seen in Figure~\ref{fig:equality}.
        \begin{figure}[tb]
    \centering
    \begin{tikzpicture}
        every node/.style={minimum size =.8cm, outer sep=0pt}, every path/.style={-stealth, ->}
        \tikzset{basic/.style={draw,fill=none,
                       text badly centered,minimum width=2em}}
        \tikzset{gate/.style={basic,circle,minimum width=2em}}
        \tikzset{back/.style={basic, color=blue, ->, dashed}}

        \node[] (name) at (0,0.25) {$C_{=}$:};

        %input and aux memory gates
        \node[] (x1)    at (0,-0.5)    {$\textit{in}_1$};
        \node[] (x2)    at (5,-0.5)    {$\textit{in}_2$};

        %arithmetic gates
         \node[gate] (mul1)  at (1,-2) {$\times$};
         \node[gate] (mul2)  at (4,-2) {$\times$};
         \node[gate] (add1)  at (0,-3){$+$};
         \node[gate] (add2)  at (5,-3) {$+$};
         \node[gate] (sig1)  at (0,-4.5)   {s};
         \node[gate] (sig2)  at (5,-4.5)   {s};
         \node[gate] (add3)  at (2.5,-6)   {$+$};
         \node[gate] (mul3) at (2.5, -7.5)   {$\times$};
         \node[gate] (add4) at (2.5, -9)    {$+$};
        \node[] (out)  at (2.5,-10.5)   {\textit{out}};
        %\node[gate] (add4) at (2,-1.5)    {$+$};
        %\node[gate] (eq1) at (1.5,-3) {$=$};

        %constant gates
         \node[gate] (c1)   at (2,-1)   {-1};
         \node[gate] (c2)   at (3, -1)  {-1};
         \node[gate] (c3)  at (1,-6.5)   {-1};
         \node[gate] (c4)   at (4, -8)   {1};
        % \node[gate] (c4)  at (1.5,-0.5) {-1};

        %wires
        % \draw[->, bend right] (x1) to (add1) {};
        \draw[->] (x1) to (mul1) {};
        \draw[->] (x2) to (mul2) {};
        \draw[->] (c2) to (mul2) {};
        \draw[->] (c1) to (mul1) {};
        \draw[->] (x1) to (add1) {};
        \draw[->] (x2) to (add2) {};
        \draw[->] (mul1) to (add2) {};
        \draw[->] (mul2) to (add1) {};
        \draw[->] (add1) to (sig1) {};
        \draw[->] (add2) to (sig2) {};
        %\draw[->] (c3) to (sig2) {};
        \draw[->] (c3) to (mul3) {};
        \draw[->] (c4) to (add4) {};
        \draw[->] (sig1) to (add3) {};
        \draw[->] (sig2) to (add3) {};
        \draw[->] (add3) to (mul3) {};
        \draw[->] (mul3) to (add4) {};
        \draw[->] (add4) to (out) {};
    \end{tikzpicture}
        \caption{Circuit $C_=$ computing $x_1 = x_2$ for inputs $x_1, x_2$ with depth 7 and size 16.}
        \label{fig:equality}
    \end{figure}
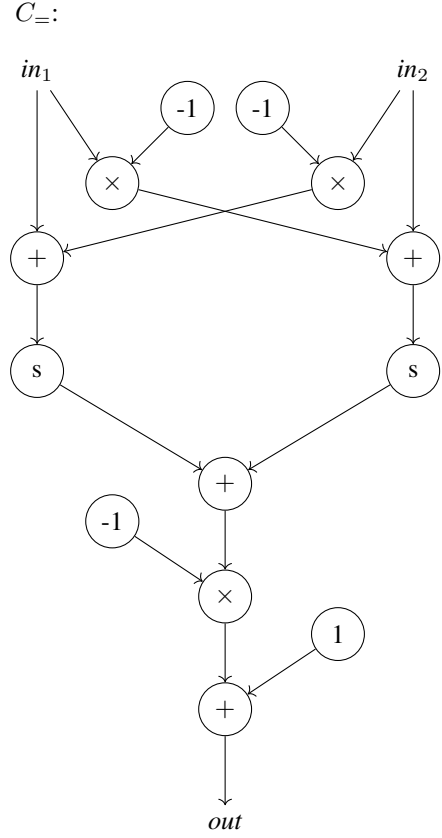

\ConcatRecCirc*

\begin{proof}
    Let $\ell, \ell', n,m,m' \in \N$.
    %Let $\mathcal{C}, \mathcal{C'}$ be the recurrent circuit families that compute $f, f'$.
    Let $\textit{rec-}C = (C,\ol{a} \in \R^\ell, E_\text{rec}, \halt, V_\text{halt})$ and $ \textit{rec-}C' =(C,\ol{a}' \in \R^{\ell'}, E_\text{rec}', \halt, V_\text{halt})$ be the the recurrent circuits with input dimension $n$, output dimension $m$ that compute the functions $f$ and $f'$.
    Let $p = |V_\text{halt}|$ and $ p'=|V_\text{halt}'| $ be the number of halting gates of each circuit.
    The functions ${\halt} \colon \N_{>0} \times \R^p \to \{0,1\}$ and $\halt' \colon \N_{>0} \times \R^{p'} \to \{0,1\}$ are the respective halting functions which are computed by circuits $C_{\text{halt}}$ and $C_{\text{halt}}'$. 
    The goal is to construct a recurrent circuit which computes $f' \circ f \colon \R^n \to \R^{m'}$.

    \begin{figure}[tb]
        \centering
        \begin{tikzpicture}[gate/.style={circle, minimum width=0.6cm, inner sep=0pt, text width=6mm, align=center, draw}, circ/.style={rectangle, minimum width=0.8cm, minimum height=0.6cm, draw}]
            \node[circ]  (Cf1)                                       {$C_\text{halt}^\text{mod}$};
            \node[]                 (in)    [above=0.5cm of Cf1]                      {val$(V_\text{halt})$};
            \node[]                 (it)    [above left=0.5cm of Cf1]                      {$i$};
            \node[circ]     (inv)   [below right=0.5cm of Cf1]          {$C_\text{inv}$};
            \node[circle, draw, minimum width=0.5cm]     (mult)  [right=0.5cm of inv]                      {$\times$};
            \node[circle, draw, minimum width=0.5cm]     (K)     [above=0.5cm of mult]                    {$S$};
            \node[gate, thick]     (invout)    [below=0.5cm of inv]            {${t}^{-1}$};
            \node[gate, thick]     (out)       [below=1.5cm of Cf1]           {$t$};
            \draw[-stealth, thick]  (in)        to              (Cf1.north);
            \draw[-stealth]         (it)        to              (Cf1);
            \draw[-stealth]         (mult)      to [bend right] (K);
            \draw[-stealth]         (K)         to [bend right] (mult);
            \draw[-stealth]         (K)         to [bend right] ($(Cf1.north) + (0.2,0)$);
            \draw[-stealth]         (inv)       to              (mult);
            \draw[-stealth]         (Cf1) -- ++(0,-0.5cm) coordinate(a) -- (out);
            \draw[-stealth]         (a) -- (inv);
            \draw[-stealth]         (inv) -- (invout);
        \end{tikzpicture}
        \caption{Circuit $C_\text{flag}$ that computes the flag whether circuit rec-$C$ has halted, dependent on the current iteration number $i$ and the values of the halting gates val$(V_\text{halt})$.}
        \label{fig:flag_circ}
    \end{figure}
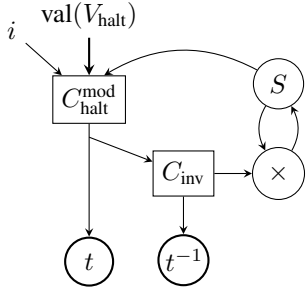
    The circuit $C_\text{flag}$ depicted in Figure~\ref{fig:flag_circ} is used to compute a flag $t$ whether ${\halt} = 1$, i.\,e. the first circuit rec-$C$ has finished its computation.
    It uses a modified version $C_{\text{halt}}^\text{mod}$ of the circuit $C_{\text{halt}}$ that is now also dependent on the value of a new auxiliary memory gate $S$ and extends the computed function by additionally checking if the value of $S$ is unequal to 0.
    A circuit $C_\text{inv}$ is used to invert the binary output of $C_{\text{halt}}$.
    Recurrent edges to and from $S$ are placed in such a way that $C_\text{flag}$ only outputs $t=1$ (respectively ${t}^{-1}=0$) the first time ${\halt} = 1$.

    \begin{figure}[tb]
        \centering
        \begin{tikzpicture}[gate/.style={circle, minimum width=0.5cm, inner sep=0pt, text width=5mm, align=center, draw}, circ/.style={rectangle, minimum width=0.5cm, minimum height=0.8cm, draw}]
            \node[]     (in1)   {$y_1$};
            \node[]     (dots)  [right=0.5cm of in1]    {$\dots$};
            \node[]     (in2)   [right=0.5cm of dots]   {$y_m$};
            \node[]     (f)     [right=0.5cm of in2]    {$t$};
            \node[]     (f1)     [right=0.5cm of f]    {$t^{-1}$};
            \node[gate] (mult1) [below=0.5cm of in1]    {$\times$};
            \node[]     (dots2)  [right=0.5cm of mult1]   {$\dots$};
            \node[gate] (mult2) [below=0.5cm of in2]    {$\times$};
            \node[gate] (add1) [below=0.5cm of mult1]    {$+$};
            \node[]     (dots3)  [right=0.5cm of add1]   {$\dots$};
            \node[gate] (add2) [below=0.5cm of mult2]    {$+$};
            \node[align=left]     (V)     [right =0.5 of f1]      {input gate\\ predecessors};
            \node[]     (out1)   [below=0.5cm of add1]  {$y_1'$};
            \node[]     (dots3)  [right=0.5cm of out1]    {$\dots$};
            \node[]     (out2)   [below=0.5cm of add2]   {$y_n'$};
            % \node[circ]     (C2)    [below = 0.5cm of dots3]  {$C_2$};
            \node[gate] (mult3)  [below = 0.5cm of {$(f1) !0.5! (V)$}]  {$\bm{\times}$};
            \draw[-stealth]     (in1)   to      (mult1);
            \draw[-stealth]     (in2)   to      (mult2);
            \draw[-stealth]     (mult1)   to      (add1);
            \draw[-stealth]     (mult2)   to      (add2);
            \draw[-stealth]     (f)   to      (mult1);
            \draw[-stealth]     (f)   to      (mult2);
            \draw[-stealth]     (add1)   to      (out1);
            \draw[-stealth]     (add2)   to      (out2);
            \draw[-stealth, thick]     (V)   to      (mult3);
            \draw[-stealth]     (f1)   to      (mult3);
            \draw[-stealth]     (mult3)   to      (add1);
            \draw[-stealth]     (mult3)   to      (add2);
            % \draw[-stealth]     (add1)   to      (C2.north);
            % \draw[-stealth]     (add2)   to      (C2.north);
        \end{tikzpicture}
        \caption{Circuit $C_\text{in}$ that sets the input $y_1', \dots, y_m'$ to the second circuit correctly dependent on the flag $t$ and given the output $y_1, \dots, y_m$ of the first circuit}
        \label{fig:input_circ}
    \end{figure}
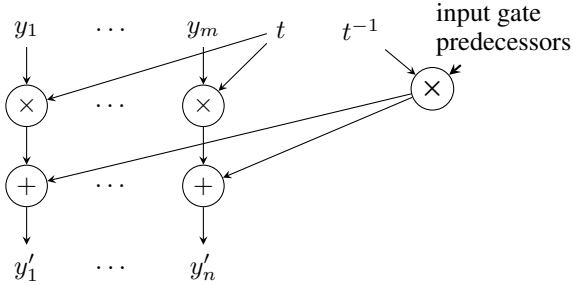
    The circuit $C_\text{in}$ depicted in Figure~\ref{fig:input_circ} sets the input of the circuit rec-$C'$ to the output values of rec-$C$ if and only if $t=1$ which indicates that circuit rec-$C$ has finished its computation.
    This is achieved by multiplying the output $y_1, \dots, y_m$ of rec-$C$ with $t$.
    The value of the predecessors of the original input gates of rec-$C$ is only taken into consideration once $t^{-1}=1$. 

    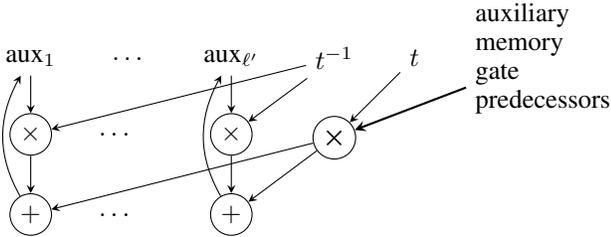
\begin{figure}[tb]
        \centering
        \begin{tikzpicture}[gate/.style={circle, minimum width=0.5cm, inner sep=0pt, text width=5mm, align=center,, draw}, circ/.style={rectangle, minimum width=0.5cm, minimum height=0.8cm, draw}]
            \node[]     (b1)        {$\text{aux}_1$};
            \node[]     (dots)  [right=0.5cm of b1] {$\dots$};
            \node[]     (b2)    [right=0.5cm of dots]   {$\text{aux}_{\ell'}$};
            \node[]     (f1)     [right=0.5cm of b2]    {$t^{-1}$};
            \node[]     (f)     [right=0.5cm of f1]    {$t$};
            \node[gate] (mult1) [below=0.5cm of b1]    {$\times$};
            \node[]     (dots2)  [right=0.5cm of mult1]   {$\dots$};
            \node[gate] (mult2) [below=0.5cm of b2]    {$\times$};
            \node[gate] (add1) [below=0.5cm of mult1]    {$+$};
            \node[]     (dots3)  [right=0.5cm of add1]   {$\dots$};
            \node[gate] (add2) [below=0.5cm of mult2]    {$+$};
            \node[align=left]     (V)     [right =0.5 of f]      {auxiliary\\ memory\\ gate\\ predecessors};
            \node[gate] (mult3) [below=0.5cm of f1]     {$\bm{\times}$};
            \draw[-stealth]     (mult1)   to      (add1);
            \draw[-stealth]     (mult2)   to      (add2);
            \draw[-stealth, thick]     (V)   to      (mult3);
            \draw[-stealth]     (f)   to      (mult3);
            \draw[-stealth]     (mult3)   to  (add1);
            \draw[-stealth]     (mult3)   to  (add2);
            \draw[-stealth]     (b1)   to      (mult1);
            \draw[-stealth]     (b2)   to      (mult2);
            \draw[-stealth]     (add1)   to[bend left]      (b1);
            \draw[-stealth]     (add2)   to[bend left]      (b2);
            \draw[-stealth]     (f1)   to      (mult1);
            \draw[-stealth]     (f1)   to      (mult2);
        \end{tikzpicture}
        \caption{The circuit $C_{\text{aux\_mem}}$ which ensures that the values of the auxiliary memory gates are not changed until the flag $t$ is set}
        \label{fig:circ_aux_mem}
    \end{figure}
    The circuit $C_{\text{aux}}$ depicted in Figure~\ref{fig:circ_aux_mem} updates the values of the auxiliary memory gates $\text{aux}_1, \dots, \text{aux}_{\ell'}$.
    While $t\neq 1$ the originally assigned constant values are fed back to the gates.
    Once $t=1$, i.\,e. circuit rec-$C$ has finished its computation the auxiliary memory gates are set to the values of their predecessors.
    By using those two gadgets rec-$C$ is iterated on arbitrary values until ${\halt}=1$ for the first time.

    \begin{figure}[tb]
        \centering
        \begin{tikzpicture}[gate/.style={circle, minimum width=0.5cm, inner sep=0pt, text width=5mm, align=center, draw}, circ/.style={rectangle, minimum width=0.5cm, minimum height=0.8cm, draw}]
            \node[]     (in)        {input};
            \node[circ] (C1)    [below=0.5cm of in]     {$C$};
            \node[circ] (Cin)   [below=0.5cm of C1]     {$C_\text{in}$};
            \node[circ] (C2)    [below=0.5cm of Cin]    {$C'$};
            \node[]     (out)   [below=0.5cm of C2]     {output};
            \node[circ] (Cmem)  [right=0.5cm of Cin]    {$C_\text{aux\_mem}$};
            \node[circ] (Cflag) [right=0.5cm of C1]     {$C_\text{flag}$};
            \node[]     (i)     [above=0.5cm of Cflag]     {$i$};
            \draw[-stealth] (in)    to      (C1);
            \draw[-stealth] (C1)    to      (Cin);
            \draw[-stealth] (Cin)    to      (C2);
            \draw[-stealth] (Cmem)    to     ($(C2.north)+(0.2, 0)$);
            \draw[-stealth] (C2)    to      (out);
            \draw[-stealth] (Cflag)    to      (Cmem);
            \draw[-stealth] (Cflag)    to      ($(Cin.north)+(0.2, 0)$);
            \draw[-stealth, dotted] (C1)    to      (Cflag);
            \draw[-stealth, dotted] (i)    to      (Cflag);
            \draw[-stealth, dashed]  (C2.west) |- ++(-0.5,0cm) |- (Cin.west);
            \draw[-stealth, dashed]  (C1.west) |- ++(-0.5,0cm) |- (in.west);
            \draw[stealth-, dashed] (Cmem.east)    -- ++(0.5,0cm) |-      (C2.east);
        \end{tikzpicture}
        \caption{The construction of the composed circuit, dashed lines denote values from the respective predecessors, the dotted line denotes the connection from the halting gates}
        \label{fig:concat_circ}
    \end{figure}
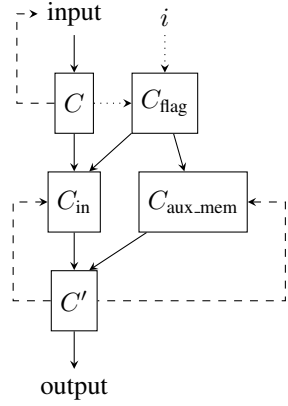
    By combining the previously mentioned gadget circuits and the circuits $C$ and $C'$ we get the composed circuit as depicted in Figure~\ref{fig:concat_circ}.
    The family of underlying circuits is in $\mathfrak{F} \cup \mathfrak{F}_s$
    The new halting function is defined as $\halt^\text{concat} \coloneqq \chi[\text{val}(S)=0]{\halt}'$.
    Therefore, we have $\halt^\text{concat}$ is a subset of a circuit family in $\mathfrak{F}_s$.
    This results in $f^\text{concat}=f'\circ f$ being from a family in the circuit function class $\recF{\mathfrak{F}_s}{\left(\mathfrak{F} \cup \mathfrak{F}_s\right)} = \recF{\mathfrak{F}_s}{\mathfrak{F}_s}$.
\end{proof}

\begin{corollary}
    The recurrent circuit function class $\recF{\mathfrak{F}_s}{\mathfrak{F}_s}$, where both the underlying circuit families and those computing the halting functions are allowed to use the sign gate, is closed under the composition of functions.
\end{corollary}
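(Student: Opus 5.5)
The statement to prove is the final Corollary: $\recF{\mathfrak{F}_s}{\mathfrak{F}_s}$ is closed under composition. The plan is to apply Theorem~\ref{thm:concat_rec_circ} with the circuit function class instantiated as $\mathfrak{F}_s$ itself, not $\mathfrak{F}$. Concretely, $\mathfrak{F} = \faci[\mathcal{A}]$ gives $\mathfrak{F}_s = \faci[\mathcal{A}\cup\{\text{sign}\}]$. Hence $\mathfrak{F}_s$ is again a circuit function class of the form $\faci[\mathcal{A}']$, and the theorem applies to it verbatim.

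First I check that $(\mathfrak{F}_s)_s = \mathfrak{F}_s[\text{sign}] = \faci[\mathcal{A}\cup\{\text{sign}\}] = \mathfrak{F}_s$. This holds because adding an already available gate type changes nothing. Now take two families $(f_n)_{n\in\N}, (f'_n)_{n\in\N}$ in $\recF{\mathfrak{F}_s}{\mathfrak{F}_s}$. Read with $\mathfrak{F}' \dfn \mathfrak{F}_s$, they are families in $\recF{\mathfrak{F}'_s}{\mathfrak{F}'}$. Theorem~\ref{thm:concat_rec_circ} then gives that $(f'_{n'}\circ f_n)_{n\in\N}$ lies in $\recF{\mathfrak{F}'_s}{\mathfrak{F}'_s} = \recF{\mathfrak{F}_s}{\mathfrak{F}_s}$.

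The one point needing care is that the proof of Theorem~\ref{thm:concat_rec_circ} should not use sign-freeness of the underlying circuits. Its construction only adds the gadgets $C_\text{flag}$, $C_\text{in}$ and $C_{\text{aux\_mem}}$ around $C$ and $C'$. It also uses the halting circuit $C_\text{halt}$, which already lies in the sign-extended class. The resulting underlying family lies in $\mathfrak{F}'\cup\mathfrak{F}'_s = \mathfrak{F}'_s$.

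I expect the main obstacle to be resource bookkeeping. In $\faci$, the composed circuit's size must stay polynomial in the input length $n$, even though $f'$'s circuit is indexed by the intermediate dimension $n'$. Since $n'$ is the output size of a polynomial-size circuit, $n' \le \size(C_n) = n^{\bO(1)}$. Therefore $\size(C'_{n'})$ is polynomial in $n$. Likewise $\depth(C'_{n'}) = \bO((\log n^{\bO(1)})^i) = \bO((\log n)^i)$.

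The halting function $\chi[\text{val}(S)=0]\cdot \halt'$ is built from an equality test (Remark~\ref{rem:eq}) together with the halting circuit of $f'$. It therefore stays in $\mathfrak{F}_s$ with the same size and depth bounds. With these checks the corollary follows directly from the theorem.
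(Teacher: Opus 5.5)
Your proposal is correct and follows the paper's own route. The paper states the corollary directly after the proof of Theorem~\ref{thm:concat_rec_circ} as an immediate consequence: you instantiate the theorem at the circuit function class $\mathfrak{F}_s$ and use $(\mathfrak{F}_s)_s=\mathfrak{F}_s$, which matches the theorem's proof ending with $\recF{\mathfrak{F}_s}{(\mathfrak{F}\cup\mathfrak{F}_s)}=\recF{\mathfrak{F}_s}{\mathfrak{F}_s}$. Your extra check that $n'\le\size(C_n)=n^{\bO(1)}$ keeps $C'_{n'}$ within polynomial size and $\bO((\log n)^i)$ depth is a useful detail that the paper leaves implicit.
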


\subsection{Proofs of Section~\ref{sec:gnn_to_circ}}
\begin{definition}\label{def:labelled_graph_encoding}
    Let $\lgraph = (V, E, g_V)$ be a labelled graph, $n \coloneqq \lvert V \rvert$, and
    % We assume $V$ to be of size $n$ and be ordered by $[ \,n ] \,$.
    $M=\text{adj}(\lgraph)$ be the adjacency matrix of $(V, E)$, where the columns are ordered in accordance with the ordering of $V$.
    We write $\enc{M}$ to denote the encoding of $M$ as the $n^2$ matrix entries $m_{ij}$, ordered in a row wise fashion. 
    We write $\enc{\lgraph}$ to denote the encoding of $\lgraph$ as a tuple of real values, such that $\enc{\lgraph}= \left(\enc{M}, \text{val}(\lgraph)\right) \in \R^{n^2+n}$, which consists of the encoding of $M$ followed by $\text{val}(\lgraph)$, the feature values of $\lgraph$.
    The feature values $\text{val}(\lgraph)$ are $g_V(v)$ for all $v \in V$ and ordered like $V$.
    The reversed procedure is the decoding.
\end{definition}
Note that the decoding could be applied to the output of the recurrent circuits, which is an encoding of a labelled graph, back to the labelled graph itself.

To still be able to check for equality for fixed values we use the following function.
\begin{lemma}[\protect\cite{DBLP:conf/nips/BarlagHSVV24}]
        Let $A \subseteq \R$ be finite and let $a \in A$.
        Then the function $\chi_{A, a} \colon A \to \{0, 1\}$ defined as
        \[
            \chi_{A, a}(x) \coloneqq 
            \begin{cases}
                1, \textnormal{ if $ = a$}\\
                0, \textnormal{ otherwise}
            \end{cases}
        \]
        is computable by an arithmetic circuit whose size only depends on $\lvert A \rvert$ and whose depth is constant.
    \label{lem:chiA}
    \end{lemma}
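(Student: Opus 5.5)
The natural construction is interpolation over the finite set $A$. Write $A=\{a=a_0,a_1,\dots,a_k\}$ with $k=\lvert A\rvert-1$, all $a_j$ distinct. Take the Lagrange basis polynomial
\[
  p_a(x) \coloneqq \prod_{j=1}^{k} \frac{x-a_j}{a-a_j}.
\]
For every $x\in A$, $p_a(x)=1$ if $x=a$ and $p_a(x)=0$ otherwise, because exactly one factor vanishes when $x=a_j$ with $j\geq 1$. So $p_a$ agrees with $\chi_{A,a}$ on all of $A$, which is the only domain the statement asks about. No sign gate is needed, so the circuit stays in the plain arithmetic model, as the lemma requires.

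The polynomial is realised by a circuit with one input gate $x$. For each $j\in[k]$ take a constant gate $-a_j$ and a constant gate $c_j\coloneqq (a-a_j)^{-1}$, which is a well-defined real since the elements of $A$ are distinct. An addition gate computes $x+(-a_j)$, and a multiplication gate then computes $c_j\cdot(x-a_j)$. This yields $k$ factor gates, each at depth $2$ above the input. A single multiplication gate of fan-in $k$ multiplies them, followed by the output gate. Definition~\ref{def:circuit} places no bound on the fan-in of arithmetic gates, so this is a legal circuit of depth $4$ including the output gate. Its size is $1+4k+2 = \bO(\lvert A\rvert)$, depending only on $\lvert A\rvert$.

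The only point needing care is whether unbounded fan-in multiplication is allowed. The definition does not restrict it, and the size/depth accounting of the circuit classes in the paper is consistent with unbounded fan-in, so I expect no real obstacle. If bounded fan-in were required, one would use a balanced binary tree of multiplications of depth $\bO(\log\lvert A\rvert)$. Since $A$ is fixed, this is still a constant, again depending only on $\lvert A\rvert$, so the claim would hold either way.
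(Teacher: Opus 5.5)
Your proposal is correct and is essentially the paper's proof: the paper uses the same interpolating polynomial $\left(\prod_{i}(a_i - x)\right)\cdot\left(\prod_i (a_i-a)\right)^{-1}$, which is identical to your Lagrange basis polynomial. Your explicit depth-$4$ unbounded fan-in circuit is more precise than the paper's brief remark that a polynomial of constant degree can be evaluated in constant depth. One caveat on your bounded fan-in fallback: it gives depth depending on $\lvert A\rvert$, which is weaker than the $A$-independent constant depth your main construction achieves.
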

    \begin{proof}
        Let $A = \{a, a_1, \dots, a_\ell\} \subseteq \R$ be a finite set and let the univariate polynomial $p_{A, a} \colon A \to \R$ be defined as follows:
        \[
            p_{A, a}(x) \coloneqq \left( \prod_{1 \leq i \leq \ell} (a_i - x) \right) \cdot \frac{1}{\prod_{1 \leq i \leq \ell} (a_i - a)},
        \]
        Now, for every element $e \in A \setminus \{a\}$, $p_{A, a}(e)$ yields $0$, since one factor in the left multiplication will be $0$.
        On the other hand, $p_{A, a}(a)$ evaluates to $1$, since the left product yields exactly $\prod_{1 \leq i \leq \ell} (a_i - a)$, which is what it gets divided by on the right. 
        Thus, $p_{A, a} = \chi_{A, a}$, and since for any given $A$ and $a \in A$, $p_{A, a}$ is a polynomial of constant degree, it can be evaluated by a circuit of constant depth and polynomial size in $\lvert A \rvert$.
    \end{proof}

    We will make use of this Lemma in our proofs and refer to the gadget constructed as the equality gadget, as it enables us to check for equality in fixed sets.

\outerrectocirc*
    \begin{figure}
        \centering
        \begin{tikzpicture}
        every node/.style={minimum size =1cm, outer sep=0pt}, every path/.style={-stealth, ->}
        \tikzset{basic/.style={draw,fill=none,
                       text badly centered,minimum width=2em}}
        \tikzset{gate/.style={basic,circle,minimum width=2.3em, inner sep=0pt}}
        \tikzset{circ/.style={basic,circle,minimum width=1em, inner sep=0pt, color=red}}
        \tikzset{back/.style={basic, ->, dashed}}

                    \node[]     (i1)   at (1,-0.5)    {$\textit{in}_1$};
                    \node[]     (dots1) at (2.5,-0.5)    {$\dots$};
                    \node[]     (in)    at (4,-0.5)    {$\textit{in}_n$};
                    \node[]     (m1)    at (6,-0.5)    {$\textit{aux}_1$};
                    %\node[]     (dots2)     [right =0.1 of m1]      {$\dots$};
                    %\node[]     (ml)        [right =0.1 of dots2]     {$\textit{aux}_\ell$};
                    \node[]     (o1)    at (2,-6)  {$\textit{out}_1$};
                    \node[]     (dots3) at (3.5,-6){$\dots$};
                    \node[]     (om)    at (5,-6)  {$\textit{out}_m$};
                    \node[fill=gray!15, rectangle, minimum width =7cm, minimum height=4cm, draw] (rectangle) [below = of $(i1)!0.5!(m1)$] {
                    };
                    \node[rectangle, fill=white, draw, minimum height=2cm, minimum width=0.75cm] (sk1) [below =1.4 of i1] {$\mathcal{C}^1$};
                    \node[]     (dots2)     [below= 2.4 of dots1]      {$\dots$};
                    \node[rectangle, fill=white, draw, minimum height=2cm, minimum width=0.75cm] (skd) [below = 1.4 of in] {$\mathcal{C}^d$};
                    \node[rectangle, fill=white, draw, minimum height=2cm, minimum width=0.75cm] (cnt) [below = 1.4 of m1] {cnt};
                    \node[rectangle, fill=white, draw, minimum height=0.5cm, minimum width = 6cm] (switch) [below = 1.5 of $(sk1)!0.5!(cnt)$] {switch};

                    %\node[circ, color=red] (vh1)   [below=2.75 of i1] {};
                    %\node[circ, color=red] (vhd)   [below=2.75 of in] {};
                    %\node[circ, color=red] (vhc)   [below=2.75 of m1] {};

                    \draw[->]     (i1)      --       (sk1);
                    \draw[->]     (in)      --       (skd);
                    \draw[->]     (i1)      --       (skd.north);
                    \draw[->]     (in)      --       (sk1.north);
                    \draw[->]     (m1)      --       (cnt);
                    \draw[->]     (sk1.south)      --      (sk1.south|-switch.north);
                    \draw[->]     (skd.south)      --      (skd.south|-switch.north);
                    \draw[->]     (cnt.south)      --      (cnt.south|-switch.north);
                    \draw[<-]      (o1.north)  --   (o1.north|-switch.south);
                    \draw[<-]      (om)  --   (om.north|-switch.south);

                    \node[] (right) at (7.5,0) {};
                    \node[] (left)  at (-0.5,0){};
                    \node[] (up)    at (0,0.25) {};
                    \draw[->, back] (cnt.east) -- (right|-cnt.east) -- (right|-up) -- (m1.north|-up)  --  (m1.north);
                    \draw[->, back] (switch.west) -- (left|-switch.west) -- (left|-up) -- (in.north|-up) -- (in.north);
                    \draw[->, back] (i1.north|-up) -- (i1.north);

    \end{tikzpicture}
    \caption{Schema of the recurrent circuit $C_{n}^{\mathcal{N}}$ where recurrent edges are dashed.}
        \label{fig:outerreccgnntoreccirc}
    \end{figure}
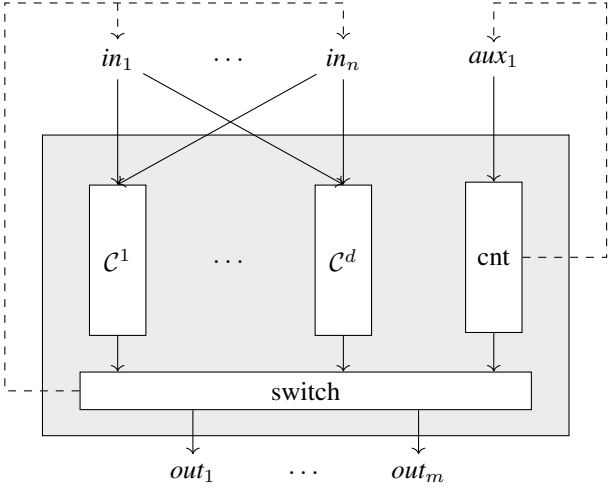
\begin{proof}
    Let %$\mathfrak{G}=(V,E, g_V)$ be the input graph, 
    $B=({t\mathfrak{F}}) \times \mathcal{A}\times \{\haltGNN\}$ be a recurrent \CGNN basis and $\mathcal{N}_\textit{rec} = (\mathcal{N}, \haltGNN)$ a $\text{rec}[\mathfrak{F}_s]\text{-}(t\mathfrak{F}. \mathcal{A})\text{-GNN}$ with halting function $\haltGNN$ and $\mathcal{N} \colon \N \to ({t\mathfrak{F}}) \times \mathcal{A}$ a periodic function with period length $d\in\N$.
    Then the image of \NN consists of $d$ possibly distinct arithmetic circuit families from $t\mathfrak{F}$ and activation functions from $A$. Let $\mathcal{C}^{\mathcal{N}(\ell)}$ for $\ell \in [d]$ be those circuit families composed with the accompanying activation function gate $a \in \mathcal{A}$.
    For any size $n$ of a labelled graph $\mathfrak{G}=(V,E, g_V)$ (or, more precisely, for any encoding size $n$ of $\mathfrak{G}$), we create one circuit $C^{\mathcal{N}}_n$, such that the claim holds.
    Since the graph structure does not change, we essentially only need to compute the updates of the feature values.

    We construct $C^{\mathcal{N}}_n$ as follows:
    For every family $\mathcal{C}^{\mathcal{N}(\ell)}$ of \NN, we have a subcircuit in parallel as shown in Figure~\ref{fig:outerreccgnntoreccirc}.% where we add an additional gate at the end of every circuit family that computes the activation function $a \in \mathcal{A}$ that is used in layer $\ell$. 
    In each of these subcircuits, we have $m=|V|$ copies of the circuits of that family again in parallel, one for each possible arity of neighbourhood relation, i.\,e. $C^{\mathcal{N}(i)}_1$ to $C^{\mathcal{N}(i)}_{|V|}$.
    For every node $v \in V$, we now want to compute the updated feature value that $\recNN$ would compute in layer $\ell$.
    We use the same construction as in \cite{DBLP:conf/nips/BarlagHSVV24} to essentially choose the correct arity circuit, where we reorder the inputs such that the feature values of the neighbours of $v$ are in the front and the remaining feature values get set to 0. 
    
    %\laura{Define the connections of feature vectors to each circuit based on the graph (i.\,e. based on the neighbourhood relations). (same as NeurIPS paper).}
    In addition to that, we have a modulus $d$ counter, that counts the current layer $\ell$ of the simulation of $\recNN$ starting at 1 and making use of Lemma~\ref{lem:chiA} to reset the counter every $d$ steps.
    Using this counter we build a switching gadget, such that we use the results (i.\,e. the updated feature values for all $m$ nodes of the input graph) of the correct circuit $C^{\mathcal{N}(\ell)}_i$ to feed back into the next iteration of the circuit or, if the circuit halts, to the output gates as sketched in Figure \ref{fig:switch}. 
    This construction again makes use of the equality gadget described in Lemma~\ref{lem:chiA}. Please note that we cannot use the construction described in Remark~\ref{rem:eq}, as this uses sign gates.
    
    The halting function of $C^{\mathcal{N}}_n$ is essentially the same as for the recurrent \CGNN: it is computed on the updated feature values for every node which are determined in the switching gadget and the current iteration number of the circuit.
    For an encoded input graph with $m$ nodes, we use the function from $\haltGNN$ with arity $m+1$.  
    The functions computed by the resulting sequence of recurrent circuits then make up $f_\textnormal{circ}$. 
    %\laura{halting function from GNN - need to choose right arity for subcircuit used - no, its always number of nodes of G}
    %\laura{argue why this is in line with size/depth constraints}

    It is left to show that the construction does not exceed the depth and size constraints of the circuit function classes used in $\recNN$.
    For depth, the reordering of the feature values and the switching gadget each take constant depth.
    The circuit families for every layer take polylogarithmic depth and as they are in parallel; the construction overall takes polylogarithmic depth.
    For size, it takes a linear overhead to simulate each individual circuit family (i.\,e. the circuit family for one layer of $\recNN$). 
    This is done again in parallel for a constant number of families.
    So overall, the size stays polynomial.
        \begin{figure}
        \centering
        \begin{tikzpicture}
            every node/.style={minimum size =1cm, outer sep=0pt}, every path/.style={-stealth, ->}
        \tikzset{basic/.style={draw,fill=none,
                       text badly centered,minimum width=2em}}
        \tikzset{gate/.style={basic,circle,minimum width=2.3em, inner sep=0pt}}
        \tikzset{circ/.style={basic,circle,minimum width=1em, inner sep=0pt, color=red}}
        \tikzset{back/.style={basic, color=blue, ->, dashed}}
        \node[] (ci)    at (0.5,0)    {$\mathcal{C}^{\mathcal{N}(\ell)}$};
        \node[] (cnt)   at (2,0)    {cnt};
        \node[gate] (mul)   at (1,-2)   {$\times$};
        \node[gate] (eq)  at (1.5,-1) {$=\ell$};
        \node[gate] (add)   at (1,-3.5) {$+$};
        \node[gate] (out)   at (1,-5)   {out};

        %other circuits
        \node[] (cnt2) at (-0.5,0)  {cnt};
        \node[] (c-1)   at (-2,0) {$\mathcal{C}^{\mathcal{N}(\ell-1)}$};
        \node[] (c+1)   at (4,0) {$\mathcal{C}^{\mathcal{N}(\ell+1)}$};
        \node[gate] (ch) at (-1.5, -2) {$\times$};
        \node[gate] (cj) at (3.5,-2)   {$\times$};
        \node[] (dot1) at (-2.5,-2) {$\dots$};
        \node[] (dot2) at (4.5,-2) {$\dots$};
        \node[gate] (eq2) at (3,-1)  {$=\ell$};
        \node[gate] (eq3) at (-1,-1) {$=\ell$};

        \draw[->] (ch) -- (add);
        \draw[->] (cj) -- (add);
        \draw[->] (cnt) -- (eq2);
        \draw[->] (eq2) -- (cj);
        \draw[->] (eq3) -- (ch);
        \draw[->] (c-1) -- (ch);
        \draw[->] (c+1) -- (cj);
        \draw[->] (cnt2) -- (eq3);

        \draw[->] (ci) -- (mul);
        \draw[->] (cnt) -- (eq);
        \draw[->] (eq) -- (mul);
        \draw[->] (mul) -- (add);
        \draw[->] (add) -- (out);
        \end{tikzpicture}
        \caption{Depiction of switching gadget for layer $\ell$. The inputs to this circuit are the outputs of the counting subcircuit (cnt) and of the subcircuits for each circuit family in $\mathcal{N}$ ($\mathcal{C}^{\mathcal{N}(\ell)}$). The gates labelled $=\ell$ are shorthand for the circuit computing equality with constant $\ell$.}
        \label{fig:switch}
    \end{figure}
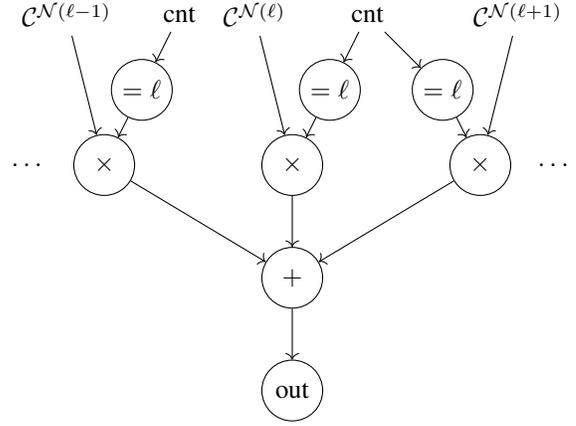
\end{proof}
\InnerRecToCirc*

\begin{proof}
    Let $B=(\recF{t\mathfrak{F_s}}{t\mathfrak{F}}) \times \mathcal{A}\times \{\haltGNN\}$ be a recurrent \CGNN basis and $\mathcal{N}_\textit{rec} = (\mathcal{N}, \haltGNN)$ with $\mathcal{N} \colon \N \to (\recF{t\mathfrak{F_s}}{t\mathfrak{F}}) \times \mathcal{A}$ a periodic function with period length $d\in\N$ and $\haltGNN(i, D) \coloneqq \chi[i=d]$ a $(\text{rec}[\mathfrak{F}_s]\text{-}\mathfrak{F})\text{-GNN}$. 
    W.l.o.g. let the depth of $\mathfrak{F}$-circuit families be bounded by $\bO((\log n)^i)$ and let their size be bounded by $\bO(n^{\bO(1)})$.
    For any size $n$ of a labelled graph $\mathfrak{G}$ (or, more precisely, for any encoding size $n$ of $\mathfrak{G}$), we create one circuit $C^{\mathcal{N}}_n$, such that the claim holds.
    
    % We construct a circuit that computes the same function as $\mathcal{N}$ by concatenating the recurrent circuits of the the $d$ layers of the \CGNN.
    For each layer of $\mathcal{N}_\textit{rec}$, we create a circuit $C$ that simulates that layer and we afterwards concatenate those circuits into $C^{\mathcal{N}}_n$, making use of Theorem~\ref{thm:concat_rec_circ}.
    % We construct a circuit $C$ that simulates a single layer of $\mathcal{N}_\textit{rec}$ for every layer and then concatenate  those circuits, making use of Theorem~\ref{thm:concat_rec_circ}.
    
    For $\ell \in \N$, let $\mathcal{N}(\ell) = (C^{(\ell)}, \mathcal{A})$, where $C^{(\ell)}$ computes the function $f_\ell \in \recF{t\mathfrak{F_s}}{t\mathfrak{F}}$ via the recurrent circuit family $\mathcal{C}^{(\ell)} = (C_n)_{n \in \N}^{(\ell)}$.
    % To simulate the layer $\ell$, $C$ first determines the neighbours of each node $g$ of its encoded input graph and then simulates the underlying circuit of $(C^\ell_{n+1})$, where $g$ has $n$ neighbours.
    To simulate the layer $\ell$, for each node $v$ of its encoded input graph, $C$ first determines the neighbours of $v$ and then simulates the underlying circuit of $C^{(\ell)}_{n_v+1}$, where $n_v$ is the number of neighbours of $v$.
    Afterwards, $C$ simulates the halting function of $C^{(\ell)}_{n_v + 1}$ (on the halting set of the simulation of $C^{(\ell)}_{n_v + 1}$ and the iteration number (initially $1$)).
    Both, the output gates of the simulation of $C^{(\ell)}_{n_v + 1}$ and of its halting function, have recurrent edges to additional auxiliary memory gates in $C$.
    If the latter have value $1$, then instead of doing any computation, the values of the former immediately overwrite the aforementioned output gates to ensure that the output does not change after the halting condition is met. 
    %\vivianS{this isn't clear, why the auxiliary memory gates, what is the idea behind it T: rephrased the previous sentence a bit.} 
    Furthermore, the simulation of the halting function is then just replaced by a constant $1$, to make sure that this value no longer changes, either. %\vivianS{maybe mention that this is still in the one layer T: Have not found a good way to do that.}
    %\vivianS{this is necessary because some node might be done with their computation earlier than others? T: Yes, added a sentence.}
    Ensuring that the output does not change after the halting condition is met is necessary, since the computations in the different nodes in the input graph might have different recursion depth.
    
    The halting set of the entire circuit $C$ consists only of a single multiplication gate, that gets the output values of all the halting function simulations as predecessors, so that it only has the value $1$ after all halting function simulations are outputting $1$.
    The halting function of $C$ is then simply the function $f(n, x) = \mathrm{sign}(x)$. 
    The resulting values of the simulations of the $C^{(\ell)}_{n_v + 1}$ for all nodes of the labelled input graph are taken as the new values for the respective nodes after applying the activation function. 

    This results in a recurrent circuit $C$, where the respective recurrent circuit of the recurrent C-GNN is simulated for each node until its halting condition is met, after which the value no longer changes.
    Since each node in the input graph receives the thus computed value as its new value, the output of $C$ is exactly the encoding of the labelled input graph with the labels updated according to one layer of $\mathcal{N}_\textit{rec}$.

    The circuit $C$ needs to simulate the halting functions of the recurrent circuits of $\mathcal{N}_\textit{rec}$.
    For that it needs access to gates for activation functions, which is why it needs to be extended by $\mathrm{sign}$ and $\mathcal{A}$.

    The simulation of $C^{(\ell)}_{n_v+1}$ for any one node requires polynomial size and polylogarithmic depth in $n_v$.
    Furthermore, the halting set of $C^{(\ell)}_{n_v+1}$ is at most polynomially large in $n_v$, thus the size of the simulation of the halting function of $C^{(\ell)}_{n_v+1}$ is a polynomial in a polynomial, which remains polynomial, as $(n^c)^k \in \bO(n^{\bO(1)})$ for all fixed $c, k \in \N$.
    Additionally, since $\log(n^c)^i = (c \cdot \log n)^i \in \bO((\log n)^i)$ for all fixed $c \in \N$, the depth of the halting function simulation is also bounded by $\bO((\log n)^i)$.
    Finally, if $n$ is the number of nodes in $\mathfrak{G}$, then we perform $n$ such simulations in parallel.
    This incurs a linear overhead in size, which thus stays polynomial.
    The overhead in depth is only constant, which means that in total, the size of $C$ is bounded by $\bO(n^{\bO(1)})$ and the depth of $C$ is bounded by $\bO((\log n)^i)$.

    Since $\mathcal{N}_\textit{rec}$ is a $(\textnormal{rec}[t\mathfrak{F}_s]$-$t\mathfrak{F}, \mathcal{A})$-GNN, we know that it only has a constant number of layers, and we can therefore create a circuit of the above form for each such layer and concatenate them as per Theorem~\ref{thm:concat_rec_circ}.

    % \timon{Argue about size and depth. But need to settle on circuit function classes first.}
    
    % \timon{Simulating a single $\CGNN$ layer is somewhat nontrivial, but should work.}
    % This behaviour is shown in the proof of Theorem \ref{thm:concat_rec_circ}.
    % As the number of layers is fixed we only have a constant factor on the depth of the resulting circuit. \vivianS{argue about size, maybe do this in the concat proof, as it increases by the size of the circuit of the halting function as well}
    % \vivian{todo: describe the technicalities of encoding the graph, choosing the correct circuit from the family etc. like in our NeurIPs paper}
\end{proof}

\DoubleRecToCirc*

\begin{proof}
    The proof combines the techniques of the proofs of Theorems~\ref{thm:recCGNN_in_circ} and \ref{thm:CrecGNN_in_circ} and constructs a circuit family $\mathcal{C}$ which computes the function $f_\textnormal{circ}$. 
    We assume the same simplifications as before.
    Let $d \in \N$ be the period length of $\textit{rec-}\mathcal{N}$.
    Similar to the proof of Theorem~\ref{thm:recCGNN_in_circ} the circuit families $\mathcal{C}^{(\ell)}$ for $\ell \in [d]$ simulating layers of $\textit{rec-}\mathcal{N}$ are put in parallel with a switching gadget that decides based on a counter value which circuit family is to be executed.
    As the internal circuit families of $\textit{rec-}\mathcal{N}$ are now recurrent we additionally need to incorporate the techniques from the proof of Theorem~\ref{thm:CrecGNN_in_circ} in the construction of the circuit families $\mathcal{C}^{(\ell)}$.
    The counter of $\mathcal{C}$ introduced in the proof of Theorem~\ref{thm:recCGNN_in_circ} is extended in such a way that it is only increased once the recurrent circuit family of the currently simulated layer has halted, i.\,e. it is dependent on the halting function of the recurrent circuit family of that layer.
    We also use the gadgets from the composition result Theorem~\ref{thm:concat_rec_circ} to ensure that the auxiliary memory gates of the circuit families $\mathcal{C}^{(\ell)}$ are correctly set at the right point in time.
    The same is done for the input gates. 
    As we use those gadgets the function computed by the underlying circuit is in $\mathfrak{F}_s$ as well.
    The circuit families $\mathcal{C}^{(\ell)}$ of the different layers are connected correctly such that they can either be executed again with the correct input (whilst their halting condition is not met) or their output is fed to the circuit families $\mathcal{C}^{(\ell+1)}$ of the next layer.
    The new halting function is constructed as in the proof of Theorem~\ref{thm:recCGNN_in_circ} and therefore is in the circuit function class $\mathfrak{F}_s$.
    The procedure is visualised in Figure~\ref{fig:double_rec_to_circ_proof};

    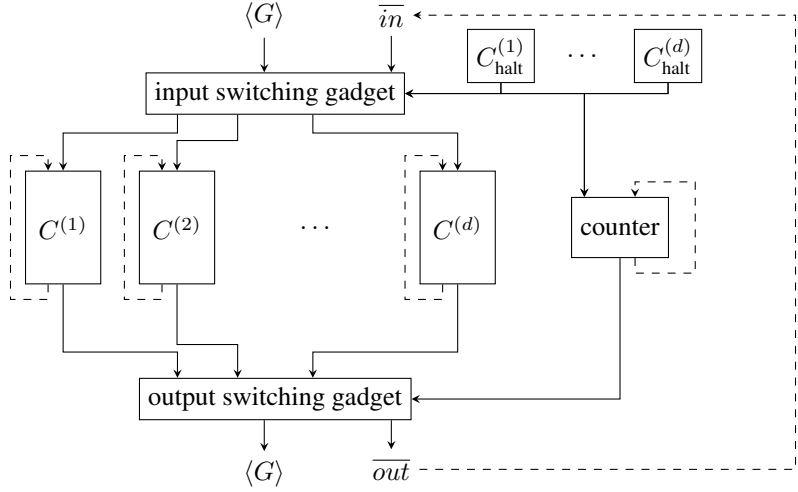
\begin{figure*}[tb]
    \centering
        \begin{tikzpicture}[block/.style={rectangle, draw, minimum height=1.5cm, minimum width=1cm}]
            \node[]             (G)                                     {$\enc{G}$};
            \node[]             (in)        [right=of G]                {$\ol{in}$};
            \node[block]        (C1)        [below left=2.5 of G]       {$C^{(1)}$};
            \node[block]        (C2)        [right= 0.5 cm of C1]       {$C^{(2)}$};
            \node[]             (dots)      [right = of C2]             {$\dots$};
            \node[block]        (Cd)        [right=of dots]             {$C^{(d)}$};
            \node[rectangle, draw, minimum height=0.8cm]  (cnt)       [right=of Cd]   {counter};
            \node[rectangle, draw]  (switch)        [below = 2cm of $(C2)!0.7!(dots)$]      {output switching gadget};  
            \node[rectangle, draw]  (INswitch)      [above = 3.5cm of switch]               {input switching gadget};  
            \node[]     (G')    [below=5.5cm of G]      {$\enc{G}$};
            \node[]     (out)   [below=5.5cm of in]     {$\ol{out}$};  
    
            \node[rectangle, draw]      (h1)        at ($(INswitch)+(3,0.5)$)           {$C_\text{halt}^{(1)}$};
            \node[]                     (dots2)     [right=0.3 of h1]                   {$\dots$};
            \node[rectangle, draw]      (hd)        [right=0.3 of dots2]                {$C_\text{halt}^{(d)}$};

            \draw[-stealth]    (switch.south-|G'.north)    --      (G'.north);
            \draw[-stealth]    (switch.south-|out.north)    --      (out.north);
            \draw[-stealth]     (cnt.south)      |-         (switch.east);
            \draw[-stealth]     (Cd.south)  -- ($(Cd.south)+(0,-0.9)$)      -| ($(switch.north)+(0.5,0)$); 
            \draw[-stealth]     (C2.south)  -- ($(C2.south)+(0,-0.8)$)      -| ($(switch.north)+(-0.5,0)$);
            \draw[-stealth]     (C1.south)  -- ($(C1.south)+(0,-0.9)$)      -| ($(switch.north)+(-1.3,0)$);

            \draw[-stealth]     (G.south)  --  (INswitch.north-|G.south);
            \draw[-stealth]     (in.south)  --  (INswitch.north-|in.south);
            \draw[stealth-]     (Cd.north)  -- ($(Cd.north)+(0,0.5)$)      -| ($(INswitch.south)+(0.5,0)$); 
            \draw[stealth-]     (C2.north)  -- ($(C2.north)+(0,0.4)$)      -| ($(INswitch.south)+(-0.5,0)$);
            \draw[stealth-]     (C1.north)  -- ($(C1.north)+(0,0.5)$)      -| ($(INswitch.south)+(-1.3,0)$);
    
            \draw[-stealth, dashed] ($(C1.south)+(-0.2,0)$)     |-   ++(-0.5,-0.2)    |-  ($(C1.north)+(-0.2,0.2)$) --($(C1.north)+(-0.2,0)$);
            \draw[-stealth, dashed] ($(C2.south)+(-0.2,0)$)     |-   ++(-0.5,-0.2)    |-  ($(C2.north)+(-0.2,0.2)$) --($(C2.north)+(-0.2,0)$);
            \draw[-stealth, dashed] ($(Cd.south)+(-0.2,0)$)     |-   ++(-0.5,-0.2)    |-  ($(Cd.north)+(-0.2,0.2)$) --($(Cd.north)+(-0.2,0)$);
            \draw[-stealth, dashed] ($(cnt.south)+(0.2,0)$)     |-   ++(0.8,-0.2)    |-  ($(cnt.north)+(0.2,0.2)$) --($(cnt.north)+(0.2,0)$);
    
            \draw[-stealth, dashed] (out.east)     |-   ++(5,0)    |-  ($(in.east)+(5,0)$) -- (in.east);
    
            \draw[-stealth]         (h1.south)      |-      (INswitch.east);
            \draw[-stealth]         (hd.south)      |-      (INswitch.east);
            \draw[-stealth]         (h1.south)      |- ($(dots2)+(0,-0.5)$.south) -- (dots2.south|-cnt.north);
            \draw[-stealth]         (hd.south)      |- ($(dots2)+(0,-0.5)$.south) -- (dots2.south|-cnt.north);
        \end{tikzpicture}
        \caption{Proof of Corollary~\ref{cor:CrecrecGNN_in_circ}: Circuit simulating a \CGNN with both methods of recurrence, $\enc{G}$ is the encoded input graph of the \CGNN, $\ol{in}$ the ordered vector of input values and $\ol{out}$ the ordered vector of output values, $C^{(i)}$ are the circuits that simulate each of the $d$ layers of the \CGNN and $C_\text{halt}^{(i)}$ are the circuits computing the respective halting functions.}
        \label{fig:double_rec_to_circ_proof}
    \end{figure*}
\end{proof}

\subsection{Proofs of Section~\ref{sec:circ_to_gnn}}
\subsubsection{Encoding}\label{sec:encoding}
%\todo[inline]{Makro $\backslash{}$sgre for symbolic labelled graph $\sgre{C}$ and $\backslash{}$gre for non symbolic $\gre{C}$}
Our goal is to construct recurrent \CGNN{}s which are able to simulate the computations of recurrent circuits, i.\,e.~have the output of a given circuit among the feature vectors of the output labelled graph when given a labelled graph as input.
Those input labelled graphs are constructed given the recurrent arithmetic circuit. 
We therefore need to fix the notion of a graph encoding.
We restrict the complexity to logspace Turing computability to ensure that our encodings are not to powerful.
As recurrent arithmetic circuits are usually defined via their tuple as described in Definition~\ref{def:rec_circ} and then given an input, we define a so called \emph{symbolic logspace labelled graph encoding} $\sgre{C}$ first which, given a recurrent circuit $C$ as input, uses symbols for the input, auxiliary memory and constant gates instead of the values from $\R$ as node labels. 
This ensures that our encoding can be computed by a Turing machine which only works over a finite alphabet.
The encoding consists of three components: the encoding of the given recurrent arithmetic circuit into an input for a Turing machine, the actual transformation of this encoded circuit into an encoded symbolic labelled graph and the decoding of this into a symbolic labelled graph.

\begin{definition}
    We write $\mathfrak{Circ}$ to denote the class of all recurrent arithmetic circuits.
\end{definition}

\begin{definition}
    Let $C$ be a recurrent arithmetic circuit with additional function gates from a set $\mathcal{A}$ and w.\,l.\,o.\,g. let the gates of $C$ (i,\,e. those of the underlying circuit and those of the circuit computing the halting function) be ordered.
    We then write $\langle C \rangle$ to denote a fixed binary encoding $\langle \cdot \rangle \colon \mathfrak{Circ} \to \mathbb{B}*$ of $C$, which includes the encoding of the underlying circuit and the circuit computing the halting function, where constant, input and auxiliary memory gates are encoded by their index in the gate ordering instead of their (real) values.
    Additional function gates from $\mathcal{A}$ are encoded analogously.
    The other gates are encoded b their input while also encoding their respective gate type.
    %\vivian{this also does not describe what from the circuit is given to the encoding, i.\,e. adjacency matrices, halting gates etc.}
\end{definition}

The goal is to construct what we call a \emph{symbolically labelled graph}, a version of a labelled graph where the labels are from a fixed predefined set of symbols.
As we are interested in encoding a recurrent circuit as such we directly define it with symbols representing input, auxiliary memory and constant gates.
\begin{definition} \label{def:binary_enc_circ}
    Let $(V, E)$ be a graph with $V$ being ordered, let $n, \ell, k\leq \lvert V \rvert$ and let
    \begin{alignat*}{2}
        g_V \colon V \to  \, &\{\textit{in}_i \mid i \in [n]\} &&\cup \\ % \cup \{\textit{out}_i \mid i \in [m]\} \cup \phantom{a} \\
        & \{\textit{aux}_i \mid i \in [\ell]\} &&\cup\\
        & \{\textit{const}_i \mid i \in [k]\} &&\cup \\
        & \{x \in \N \mid x \leq \lvert V \rvert \}
        % & \{0\} \cup \left[\lvert V \rvert\right]
        % & \{f_i \mid i \in [\ell]\} \cup \{+, \times\} \cup \{\textit{dummy}_0, \textit{dummy}_1\}
    \end{alignat*}
    %\timon{once this is fine, fix linebreaks}
    be a function which labels the nodes in $V$ with symbols.
    We then call $G = (V, E, g_V)$ a \emph{symbolically labelled graph}.

    % Let furthermore $C$ be a recurrent arithmetic circuit with ordered gates and $n$ input gates, $m$ output gates, $a$ auxiliary memory gates, $k$ constant gates and $\ell$ gates for additional functions and let $\ol{x} \in \R^n$. 
    % We then additionally write $G_{C, \ol{x}}$ to denote the labelled graph $(V, E, g_V')$ where $g_V'$ maps 
    % \begin{itemize}
    %     \item the nodes labelled $\textit{const}_i$ or $\textit{aux}_i$ to the respective initial constant values of the $i$th constant (resp. auxiliary memory) gate, 
    %     \item the nodes labelled $\textit{in}_i$ to $x_i$,
    %     \item the nodes labelled $\textit{dummy}_i$ to $i$
    %     \item and all remaining nodes to the index of their corresponding gate in the gate ordering of $C$. \vivianS{we need the gate types of the other gates, i.\,e. +,mult, function}
    % \end{itemize}
    % \timon{dont technically need out and $f$ for this definition}
\end{definition}

\begin{definition}
    %\timon{keep this?}
    We write $\mathfrak{sGraph}$ to denote the class of all symbolically labelled graphs.
    
\end{definition}

We want to define the complexity of the encoding with respect to a logspace transducer that works in binary and therefore outputs a binary encoding of a symbolically labelled graph, when given the binary encoding of a recurrent circuit as an input. 
Hence we need to define a notion of decoding a symbolically labelled graph from a binary encoding.
This process is similar to that of Definition~\ref{def:labelled_graph_encoding}, where we described the real valued encoding (and decoding) of a labelled graph with labels from $\R$ but changed such that the symbols of the labels are encoded in binary as well.
\begin{definition}\label{def:decoding_symb_labelled_graph}
    % Let $G$ be a symbolically labelled graph.
    % We write $\textit{enc}(G)$ to denote a fixed binary encoding of $G$.
    % We furthermore write $\textit{dec}$ for the respective decoding, such that $\textit{dec}(\textit{enc}(G)) = G$ for all symbolically labelled graphs $G$.

    Let $\textit{enc} \colon \mathfrak{sGraph} \to \mathbb{B}^*$ be a fixed binary encoding function of symbolically labelled graphs.
    Let furthermore $\textit{dec}$ be the respective decoding function, i.\,e., $\textit{dec} = \textit{enc}^{-1}$.
\end{definition}

In order to meaningfully talk about logarithmic space computation, we make use of \emph{logspace transducers} \cite[Chapter 5]{DBLP:series/txcs/Kozen06}, which are Turing machines that use only logarithmic space and output a solution bit by bit.%\timonS{maybe find a reference for logspace transducers?}

With that at hand, we are now able to turn to our desired logspace encodings.
% Given the binary encoding of a recurrent circuit from Definition~\ref{def:binary_enc_circ} and the decoding of a symbolically labelled graph from binary from Definition~\ref{def:decoding_symb_labelled_graph} we now define the complete procedure of encoding of a recurrent arithmetic circuit as a symbolically labelled graph. 

\begin{definition}
    We say that a function $\mathfrak{G} \colon \mathfrak{Circ} \to \mathfrak{sGraph}$ is a \emph{symbolic logspace graph encoding}, if there exists a logspace transducer computing a function $f \colon \mathbb{B}^* \to \mathbb{B}^*$, such that
    \[
        \sgre{C} = \textit{dec}(f(\langle C \rangle)).
    \] %\lauraS{say that $C$ is a circuit? And that $\mathfrak{G}(C)$ is a slg?}
    %\timonS{Should be clear from the function signature, I'd say}
\end{definition}
We now extend the definition of a symbolic logspace encoding and define the notion of a logspace encoding, meaning that the previously used symbols are replaced by actual values from $\R$.
This procedure is done for all nodes labelled with symbols.
Note that those may be nodes corresponding to gates from the underlying circuit as well as the halting circuit.
\begin{definition}
    Let $\mathfrak{G} \colon \mathfrak{Circ} \to \mathfrak{sGraph}$ be a symbolic logspace encoding.
    % We then define the function $\mathfrak{G'} \colon \mathfrak{Circ} \times \R^* \to \mathfrak{Graph}$ such that $\mathfrak{G'}(C, \ol{x}) = G_{C, \ol{x}}$, where $G = \mathfrak{G}(C)$.
    We then define a \emph{logspace graph encoding} as the function $\mathfrak{G'} \colon \mathfrak{Circ} \times \R^* \to \mathfrak{Graph}$ as
    \[
        \gre{C} = (V, E, g_V'),
    \]
    where $\sgre{C} = (V, E, g_V)$ for some symbolic labeling $g_V$ and where $g_V'$ maps 
    \begin{itemize}
        \item the nodes labelled $\textit{const}_i$ or $\textit{aux}_i$ by $g_V$ to the respective initial constant values of the $i$th constant (resp. $i$th auxiliary memory) gate in $C$ for all $i \in \N$, 
        \item the nodes labelled $\textit{in}_i$ by $g_V$ to $x_i$ for all $i \in \N$ and
        \item the nodes labelled $y \in \N$ by $g_V$ to $y$. %\vivianS{or for clarity write that the rest of the labels are not changed?}
        % \item and all remaining nodes to the index of their corresponding gate in the gate ordering of $C$. \vivianS{we need the gate types of the other gates, i.\,e. +,mult, function}
    \end{itemize}
    %\timon{mention that this goes for the underlying circ and the halting circ}
    
    % \timon{todo: try to make this notation nicer, possibly offload $G_{C, \ol{x}}$ stuff here? Anyways fix $G_{C, \ol{x}}$ stuff to work with our proofs}
    % \timon{Overloading $\mathfrak{G}$ made my toenails curl up, so I added a '.}
\end{definition}

\subsubsection{Simulation with Outer Recurrent \CGNN{}s}
\circToOuterRecNoActivationFnc*
\begin{proof}
    Let $C_n$ be a recurrent arithmetic circuit with $n$ inputs, $m$ outputs, and $\ell$ auxiliary memory gates. Let $r$ be the maximum degree of all the gates.
    The gates of $C_n$ are assumed to be ordered.

    Let $C_\text{halt}$ be a the circuit that computes the halting function $\halt \colon \N_{>0} \times \R^* \to \{0,1\}$ of $C_n$.
    Let $C'_\text{halt}$ be a modified version where the output gate is replaced by $n+m+\ell+1$ copies of the gate. 
    The function $\halt'$ computed by $C'_\text{halt}$ is defined like $\halt$ but modified such that $\halt' \colon \R^*  \to \left\{\ol{0},\ol{2}\right\}$, replacing 1 with 2 and outputting 0/2-vectors of dimension $n+m+\ell+1$ instead.
    The dependence on the iteration number is implemented via an additional counting gate in $C'_\text{halt}$.
    
    Let \sgre{C_n} be the symbolically labelled graph $(G, g_{V})$, where $G=(V, E)$ consists of the undirected version of the graph of $C_n$ extended by so called dummy nodes of degree one in such a way that all non-dummy nodes have a distinct degree.
    Each node is connected to $i \times r -\text{deg}(g)$ dummy nodes where $i$ is the number of the circuit gate $g$ the node corresponds to. 
    The degree can be computed by summing over the entries for the $g$th row in the adjacency matrix.
    The undirected version of the graph of the halting circuit $C'_\text{halt}$ is encoded in the same way.
    The nodes representing halting gates of the circuit $C_n$ are called \emph{halting nodes}.
    All halting nodes are connected to the corresponding nodes representing input gates of the halting circuit $C'_\text{halt}$.
    We call the nodes corresponding to the output gates of $C'_\text{halt}$ \emph{halting output nodes}.
    
    The function $g_{V}$ is defined as follows.
    % Every dummy node connected to a node that represents a multiplication gate in the circuits has the initial feature value 1.
    % Every other dummy node has the initial feature value 0.
    Nodes corresponding to input and auxiliary memory gates of the underlying circuit are initialised with their respective symbolic values. 
    All other nodes are assigned 1 as a feature value.
    % The initial values of the other nodes are assigned such that nodes corresponding to successors of multiplication gates have value 1 the others value 0.
    % To be able to correctly assign feature values to the nodes corresponding to the successors of a gate we require the path-length normal form of the circuit.
    % This ensures there are no ambiguities in which feature value needs to be assigned to each of the nodes of $\sgre{C_n}$. 
    For $\gre{C_n}$ the symbols of the constant, auxiliary memory and input gates are replaced with their respective values.
    An example of such an encoding is given in Figure~\ref{fig:circ_to_cgnn_feature_graph_construction}.

    The encoding is computable in logspace.
    The new adjacency matrix is constructed node-wise, as the number of dummy nodes for each node is determined one after the other.
    This is possible because the dummy nodes are only connected to one node.
    The same holds for the symbolic feature values. 
\begin{figure*}[tb]
    \centering
    \begin{tikzpicture}[gate/.style={circle, minimum width=0.3cm, draw}, baseline=(mult1.base)]
        \node[]     (in1)       {in};
        \node[]     (aux1)      [right=of in1]     {aux};
        \node[gate, thick] (mult1)     [below=of $(in1)!0.5!(aux1)$]       {$\times$};
       
        \node[] (out)       [below=of mult1]     {out};

        \draw[-stealth]     (in1)    --     (mult1);
        \draw[-stealth]     (aux1)   --     (mult1);
        \draw[-stealth]     (mult1)   --     (out);
        \draw[-stealth, dashed]     (mult1)     to[bend left]  (in1);
        \draw[-stealth, dashed]     (mult1)     to [bend right] (aux1);

        \node[]     (h1)        [right=2cm of aux1]     {$i$};
        \node[]     (h2)        [right=of h1]       {$v_\times$};
        \node[rectangle, draw, minimum height=7mm]      (C)     [below=of $(h1)!0.5!(h2)$]   {$C_\text{halt}$};
        \node[]     (o1)        [below=of C]        {$\text{out}_\text{halt}$};
        
        \draw[-stealth]     (h1)    --     (C);
        \draw[-stealth]     (h2)   --     (C);
        \draw[-stealth]     (C)   --     (o1);

        \draw[dotted, -{Stealth[flex']}]  (mult1)    ..controls ($(mult1)+(2,0)$) and ($(mult1)+(2.5,3.5)$)..      (h2);
        
    \end{tikzpicture}
    \hspace{2cm}
    \centering
    \begin{tikzpicture}[gate/.style={circle, minimum width=0.9cm, draw}, dummyMult/.style={circle, minimum width=0.1cm, draw, orange}, dummy/.style={circle, minimum width=0.1cm, draw, teal}, halt/.style={circle, minimum width=0.2cm, draw, purple}, brace/.style={
             decoration={brace, mirror},decorate}, position label/.style={
               below = 10pt,
               text height = 2ex,
               text depth = 1ex
            }, baseline=(mult1.base)]
        \node[gate]     (in1)      [label={east:$(i_1)$}] {$v_\text{in}$};
        \node[gate]     (aux1)      [right=of in1, label={east:$(a_1)$}]     {$v_\text{aux}$};
        \node[gate, thick]     (mult1)     [below=of $(in1)!0.5!(aux1)$, label={south east:$(1)$}]       {$v_\times$};
        \node[gate]     (out)       [below=of mult1, label={east:$(1)$}]     {$v_\text{out}$};

        \node[dummy]    (d1)    [above=0.5cm of in1]      {};
        \node[dummy]    (d2)    [left=0.1cm of d1, label={[text=teal]west:$(1)$}]  {};
        \node[dummy]    (d3)    [right=0.1cm of d1]  {};

        \node[dummy]    (d4)    [above=0.5cm of aux1]      {};
        \node[dummy]    (d5)    [left=0.1cm of d4]  {};
        \node[dummy]    (d6)    [right=0.1cm of d4]  {};
        \node[dummy]    (d7)    [right=0.1cm of d6, label={[text=teal]east:$(1)$}]  {};

        \node[dummy]    (d8)    [left= of mult1]   {};
        \node[dummy]    (d9)    [above=0.1cm of d8]   {};
        \node[dummy]    (d10)    [above=0.1cm of d9]   {};
        \node[dummy]    (d11)    [below=0.1cm of d8]   {};
        \node[dummy]    (d12)    [below=0.1cm of d11, label={[text=teal]south:$(1)$}]   {};

        \node[teal]         (dots)  [below=0.5cm of out]    {$\dots$};
        \node[dummy]    (d13)   [left= 0.1cm of dots]       {};
        \node[dummy]    (d14)   [right = 0.1cm of dots, label={[text=teal]east:$(1)$}]    {};

        \draw[brace, teal] ($(d13.south)+ (-0.2,-0.2)$) -- node [position label, pos=0.5, align=center] {$6$\\ dummy nodes} ($(d14.south)+ (0.2,-0.2)$);

        \node[rectangle, draw, purple, minimum height=7mm]      (H)         [right=2cm of mult1]            {$C'_\text{halt}$};
        \node[purple]                       (dots2)     [below=0.5cm of H]              {$\dots$};
        \node[halt]                         (h1)        [left=0.1cm of dots2]           {};
        \node[halt]                         (h2)        [right=0.1cm of dots2, label={[text=purple]east:$(1)$}]          {};
        \draw[brace, purple] ($(h1.south)+ (-0.2,-0.2)$) -- node [position label, pos=0.5, align=center] {$n+m+\ell+1$\\ halting output nodes} ($(h2.south)+ (0.2,-0.2)$);
        
        % \node[halt]     (h3)    [right=2cm of mult1]     {};
        % \node[halt]     (h4)    [above=0.1cm of h3]    {};
        % \node[halt]     (h1)    [above=0.1cm of h4]    {};
        % \node[halt]     (h5)    [below=0.1cm of h3]    {};   
        % \node[halt]     (h6)    [below=0.1cm of h5]    {};   
        % \node[halt]     (h2)    [below=0.1cm of h5, label={[text=purple]south:$(1)$}]    {};   

        %\draw[brace, purple]    ($(h2.south east)+ (0.2,0)$) -- node [right = 3pt,
               % text height = 2ex,
               % text depth = 1ex, pos=0.5] {$n+k+m=3$} ($(h1.north east)+ (0.2,0)$);

        % \node[dummy]    (d15)   [above right=0.5cm of h1]   {};
        % \node[dummy]    (d16)   [below right=0.5cm of h2, label={south:$(0)$}]   {};
        % \node[align=center, teal]         (dots2) at  ($(d15)!0.5!(d16)$)  {\rotatebox{90}{$\cdots$}};

        \draw[]     (in1)    --     (mult1);
        \draw[]     (aux1)   --     (mult1);
        \draw[]     (mult1)   --     (out);

        \draw[teal]     (d1)    --      (in1);
        \draw[teal]     (d2)    --      (in1);
        \draw[teal]     (d3)    --      (in1);

        \draw[teal]     (d4)    --      (aux1);
        \draw[teal]     (d5)    --      (aux1);
        \draw[teal]     (d6)    --      (aux1);
        \draw[teal]     (d7)    --      (aux1);

        \draw[teal]   (d8)    --      (mult1);    
        \draw[teal]   (d9)    --      (mult1);    
        \draw[teal]   (d10)    --      (mult1);    
        \draw[teal]   (d11)    --      (mult1);    
        \draw[teal]   (d12)    --      (mult1);    

        \draw[teal]    (d13)   --  (out);    
        \draw[teal]    (d14)   --  (out);

        \draw[purple]      (H)  --  (mult1);
        \draw[purple]      (H)  --  (h1);
        \draw[purple]      (H)  --  (h2);
        % \draw[purple]   (h1)    -- (mult1);
        % \draw[purple]   (h2)    -- (mult1);
        % \draw[purple]   (h3)    -- (mult1);

        % \draw[teal]     (d15)   --  (h1);
        % \draw[teal]     (d16)   --  (h2);
        % \draw[teal]     (h3)    --  (dots2);
        % \draw[teal]     (h1)    --  (dots2);
        % \draw[teal]     (h2)    --  (dots2);
    \end{tikzpicture}

    \caption{Circuit $C$ on the left with halting circuit $C_\text{halt}$, encoded feature graph on the right. Teal dummy nodes have feature value 1, purple halting circuit representation, dummy nodes left out for halting circuit for better readability}
    \label{fig:circ_to_cgnn_feature_graph_construction}
\end{figure*}
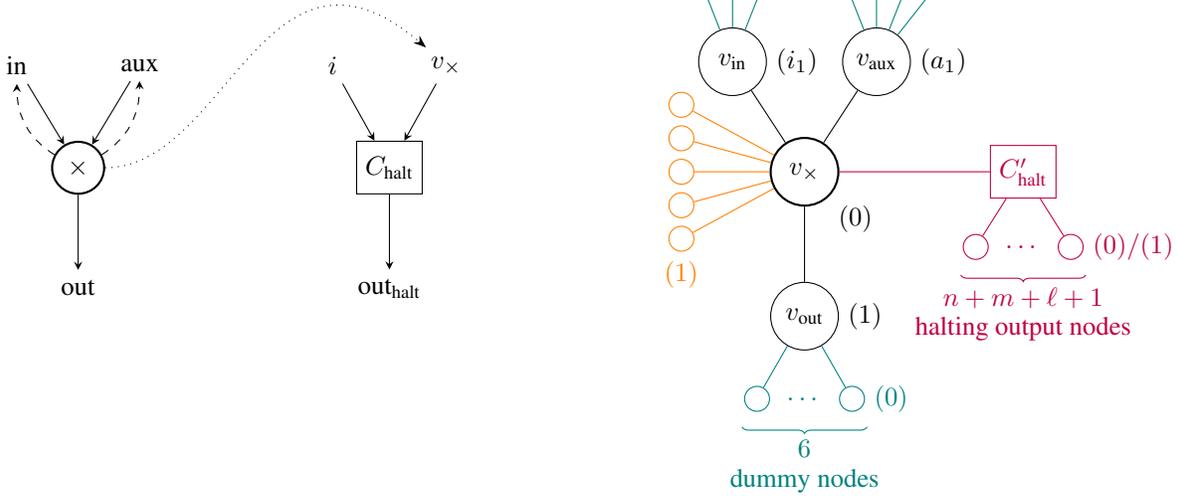

% \vivian{see if that is to superficial, leave it out completely or describe in more detail:}
% Given the adjacency matrices of the recurrent circuit the new adjacency matrix of $\sgre{C_n}$ can be constructed row wise in parallel for each node. 
% The recurrent edges are added by simply adding the matrices for $C$ and $E_rec$. 
% The connection to $C_\text{halt}$ is done similarly by adding additional zero entries for all nodes except those form the set of halting nodes $V_\text{halt}$ that are connected to the input nodes of the halting circuit.
% The same procedure is used for adding the dummy nodes. 
% Depending on the number of gates in $C$ and $C_\text{halt}$ entries are added to each row of the new matrix.
% The number of the gate the node corresponds to is used to determine which entries are 1.
% The labeling function $g_V$ is constructed by using a lookup table for the gate types and then assigning the correct feature values in parallel for each node.
% The finite set of symbols for the input, auxiliary memory and constant gates can also be encoded in $\mathbb{B}$.
% Therefore $\sgre{C_n}$ is $\acoB$ encoded.
    
The stepwise execution of the circuit is simulated by the layers of the \CGNN with outer recurrence rec-$\NN$.
There the degree of a node $v_g$ in $\gre{C_n}$ is used as an identifier to determine the exact circuit gate $g$ the node corresponds to.
The circuit families for each layer of rec-$\NN$ are defined in such a way that the circuit for input size $1$ is always the identity which ensures that the values of the dummy nodes are never changed.
Each layer $i$ of rec-$\NN$ simulates the gates of the corresponding depth of the circuit.
The nodes corresponding to the gates that are due to be executed at that depth of the circuit are identified by their degree. 
The circuit family of a layer of rec-$\NN$ is defined accordingly such that the circuits of the family where the input size is equal to the degree of a node $v_g$ execute the operation of the respective gate $g$. 
For addition and multiplication gates the neighbours are aggregated and added or respectively multiplied, the current feature value of the node is ignored. 
Nodes corresponding to input and auxiliary memory gates are treated the same as nodes corresponding to addition gates.
The same happens for nodes that correspond to gates that compute functions from the set $\mathcal{A}$ with the addition that after the summation of the neighbours the respective function from $\mathcal{A}$ is applied directly by the circuit in the node.
For all nodes that correspond to non-multiplication gates neighbouring nodes with feature value 1, like the dummy and successor nodes would influence the computations of that gate.
We therefore subtract the number of dummy and successor nodes.
% As the feature values of the dummy nodes are set to the neutral element of both operations they do not effect the resulting value.
% The same holds for the feature values of nodes that represent successor gates.
In every layer all nodes that do not correspond to predecessors of memory gates, the output gates or halting gates are reset to feature value 1.
As a \CGNN with outer recurrence can only compute tail-symmetric halting functions that are dependent on all nodes of the graph a special treatment is needed to simulate the halting function $\halt$ of the circuit $C_n$ that might only depend on a subset of the gate values.
To achieve this the modified halting circuit $C'_\text{halt}$ computing $\halt$ with multiplied outputs is also simulated by rec-$\NN$ after the simulation of one iteration of $C_n$ is finished. 
For that the same procedure of evaluating a circuit is applied to the part of the graph $\gre{C_n}$ representing the circuit $C_\text{halt}$ computing the halting function.
The values of the nodes corresponding to halting gates in the circuit are the input for the evaluation of the halting circuit.
Unless the halting gates are also predecessors of memory gates they are also reset after this step is completed.
Afterwards the global halting function $\haltGNN$ of rec-$\NN$ is applied to the set $V^{(i)}$ of all feature values $v^{(i)}$ of $\gre{C_n}$, where
\[
\haltGNN(i, X) \coloneqq \sum_{ x \in X}\chi[x=2] > (n+m+\ell),
\]
which checks whether at least $n+m+\ell+1$ nodes have value 2.
All halting output nodes, i.\,e. the nodes corresponding to the outputs of $C'_\text{halt}$ always have the same value, the dummy nodes have value 1 and all nodes but those corresponding to predecessors of memory gates (at most $n+\ell$) and output gates ($m$ gates) are reset to 1 as well. 
Therefore, for the halting function $\haltGNN$ to be satisfied at least one of the halting nodes has to have feature value 2 which means that the halting function $\halt$ of the circuit $C_n$ is satisfied.
If that is the case the \CGNN  is not iterated again and the feature values in the nodes corresponding to output gates are the exactly the output of the circuit.
Otherwise in the next layer the memory gates add all their neighbours and their own feature value together to update their value.
As their unique predecessor is among their neighbours and the feature values of all other neighbours are set to 1 and subtracted from the result they now have the value of the predecessor as their feature value.
Afterwards the predecessor, halting and output nodes are reset and the simulation of the circuit by the \CGNN starts again with the updated input values.
The procedure is described by Algorithm~\ref{alg:circ_outer_rec_def_circ_family}.
The circuits of the circuit family for layer $i$ are defined accordingly.
\end{proof}

\circToOuterRecWithActivationFnc*
\begin{proof}
    The proof follows the same concept as the proof of Theorem~\ref{thm:circ_to_outer_rec_gnn_no_activation_fnc} but changes the way gates computing the functions from $\mathcal{A}$ are simulated.
    Instead of applying the respective function directly in the nodes the function is used as a global activation function of rec-$\NN$ and therefore is applied to all nodes.
    This procedure is described by Algorithm~\ref{alg:circ_outer_rec_def_activation_function}.
    In the layer following the application of an activation function the reset procedure of Algorithm~\ref{alg:reset_alg} is applied to all nodes except the nodes corresponding to the activation function gates.
    % As we assume the functions $\sigma \in \mathcal{A}$ to be non constant there exist two values $c_0, c_1$ such that $\sigma(c_0) \neq \sigma(c_1)$. 
    % The dummy nodes can be set to $c_0/ c_1$ before the application of the activation function and reset to $0/1$ based on the feature value of the node, $\sigma(c_0)/\sigma(c_1)$. 
    % As those are fixed values Lemma~\ref{lem:chiA} can be used for the comparisons.
    Note that the simulation of an activation function gate takes two layers.
    When defining the layer numbers of operations to be executed this has to be taken into consideration during initialisation of the procedure.
    Afterwards the simulation of the circuit proceeds as previously described.
    As the circuit that is being simulated is in predecessor form the application of the global activation function and the reset of feature values of nodes does not affect the computations. 
    Feature values that need to be saved (i.\,e. those of nodes corresponding to output and halting gates and predecessors of memory gates only occur after the last global activation function has been applied.
    The simulation of the halting circuit is conducted without any changes to the proof of Theorem~\ref{thm:circ_to_outer_rec_gnn_no_activation_fnc}.
    
\end{proof}

\subsubsection{Simulation with Inner Recurrent \CGNN{}s}
\circToInnerRec*
\begin{proof}
    Let rec-$\NN = (\NN, \haltGNN)$ be a $(\text{rec}[\mathfrak{F}_s]\text{-}\mathfrak{tF}[\mathcal{A}],\{\text{id}\} )$-GNN, where $\NN(i) = (\mathcal{C}', \mathcal{A})$ and $\haltGNN(i, D) = \chi[i=1]$. 
    This means that rec-$\NN$ only consists of one layer in which a recurrent circuit family $\mathcal{C}'$ is executed in every node.
    Let $m \in \N$ be the number of output gates of the circuit $C_n$ in $\mathcal{C}$.
    Let $ \gre{C_n} = (V, E, g_v)$ be a labelled bipartite graph with $n+m$ nodes, where $\sgre{C_n}$ was the corresponding symbolically labelled graph.
    There are $m$ nodes that later represent the $m$ outputs of $C_n$ and $n$ nodes that represent the inputs.
    The function $g_V$ assigns the values $1, \dots, m$ to the $m$ nodes.
    The $n$ other nodes are each assigned one of the $n$ input values of $C_n$. 
    Each of the $m$ nodes is connected to all $n$ nodes. 
    The graph $ \gre{C_n}$ is depicted in Figure~\ref{fig:graph_circ_to_inner_rec_gnn}.
    The constructions can be done by a logspace transducer. 
    The $n$ nodes for the input are copied from the encoding of the recurrent circuit.
    Using a counter the $m$ additional nodes are added and initialised with the values from $[m]$. 
    The new adjacency matrix is constructed row-wise. 
    The entries are predefined to be $(\underbrace{0 \dots 0}_n, \underbrace{1 \dots 1}_m)$ for the $n$ nodes representing inputs and $(\underbrace{1 \dots 1}_m, \underbrace{0 \dots 0}_n)$ for the $m$ nodes representing outputs to achieve the complete bipartite graph.
   
    Let $C_n^m$ be a tail-symmetric version of the symmetric circuit $C_n$ with only one output gate and $n+1$ input gates. 
    The circuit $C_n$ is extended with a switching gadget that decides based on the new head of the inputs which of the original $m$ outputs is forwarded to the new output gate (see proof of Theorem~\ref{thm:recCGNN_in_circ}).  
    Let $\mathcal{C}' = (C_n^m)_{n\in \N}$.
    
    In the single layer of rec-$\NN$, that gets $ \gre{C_n}$ as an input, the circuit $C_n^m$ is then executed on the feature value of the $m$ nodes (i.\,e. the number signifying which output is to be computed) and the feature values of the neighbours (i.\,e. the $n$ inputs of $C_n$. 
    After the execution the feature values of the $m$ nodes equal the outputs of $C_n$.

   \begin{figure}
       \centering
        \begin{tikzpicture}[gate/.style={circle, minimum width=0.9cm, draw}]
            \node[gate]     (i1)    [label={$(i_1)$}]                        {$v_{\text{i}_1}$};
            \node[gate]     (i2)    [right=0.5 of i1, label={$(i_2)$}]       {$v_{\text{i}_2}$};
            \node[]     (dots)  [right=0.5 of i2]     {$\dots$};
            \node[gate]     (in)    [right=0.5 of dots, label={$(i_n)$}]       {$v_{\text{i}_n}$};
            
            \node[gate]     (o1)    [below=of i1, label={below:$(1)$}]               {$v_{\text{o}_1}$};
            \node[gate]     (o2)    [right=0.5 of o1, label={below:$(2)$}]       {$v_{\text{o}_2}$};
            \node[]     (dots2) [right=0.5 of o2]     {$\dots$};
            \node[gate]     (om)    [right=0.5 of dots2, label={below:$(m)$}]       {$v_{\text{o}_m}$};

            \draw[]     (i1)    --  (o1);
            \draw[]     (i2)    --  (o1);
            \draw[]     (in)    --  (o1);
            \draw[]     (i1)    --  (o2);
            \draw[]     (i2)    --  (o2);
            \draw[]     (in)    --  (o2);
            \draw[]     (i1)    --  (om);
            \draw[]     (i2)    --  (om);
            \draw[]     (in)    --  (om);
        \end{tikzpicture}
        \caption{The bipartite labelled graph $ \gre{C_n}$. The nodes $v_{i_1}, \dots, v_{i_n}$ represent the inputs, $v_{o_1}, \dots, v_{o_m}$ represent the outputs, $\ol{x}=i_1, \dots, i_n$ are the inputs to the circuit $C_n$.}
        \label{fig:graph_circ_to_inner_rec_gnn}
   \end{figure}
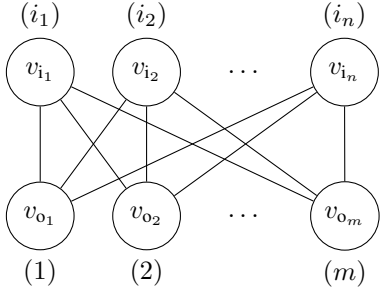
\end{proof}

\CircToInnerRecWithActivationFnc*

\begin{proof}
% \vivian{add this in: circuit with a non-constant halting function dependent on the memory gates, where $\exp$ is the exponential function and at least one gate labelled with the exponential function is the predecessor of both a memory and an output gate.}
    Let $C_n$ be the recurrent circuit depicted in Figure~\ref{fig:exp_circ} with some non-constant halting function that is dependent on the gates labelled with the exponential function (e.\,g. $\forall v_e$ val$(v_e) > 600$).
    Those gates are the predecessors of memory gates.
    Arithmetic circuits without additional functions can only compute polynomials.
    The gates of the circuits $C_n$ that compute the function $\exp$ can therefore only be simulated by activation functions in a $(\text{rec}[\mathfrak{F}_s]\text{-}\mathfrak{tF}, \{\exp\})$-GNN. 
    A \CGNN with inner recurrence has a fixed number of layers and is therefore only able to executed a fixed number of activation functions.
    Activation functions can only be applied after the computations of the recurrent circuit of that respective layer are done.
    Meaning that for an arbitrary node of the input graph $\gre{C_n}$ a \CGNN with inner recurrence of depth $d$ can at most compute the feature value $(e\uparrow \uparrow d)^{i}$ for some $i \in \R$.
    \def\rddots#1{\cdot^{\cdot^{\cdot^{#1}}}}
    Here the notation $e \uparrow \uparrow d$ stands for $\underbrace{e^{\rddots e}}_d$.
    As the functions computed by $\mathcal{C}$ are from a recurrent function class its circuits $C_n$ are executed iteratively until a halting condition is met.
    As the halting condition itself is from a family of functions in a circuit function class the number of iterations is dependent on the input $\ol{x}$ and therefore cannot be determined in the symbolic construction of the \CGNN (see Section~\ref{sec:encoding}). 
    %\vivian{the BSS halting problem is undecidable over $\R$, recurrent circuit is basically a BSS machine, when underlying circuit and halting circuit are connected directly. Can't decide whether that machine halts on a given input, therefore the number of steps=iterations machine does is also undecidable, only semi-decidable, which means they can only be computed by actually running the machine. For that we need the actual inputs, which we don't have, as it is only a symbolic encoding from circuit to graph}
    Let $d'$ the number of iterations of $C_n$ given an arbitrary input $(i_1, \dots, i_n) \in \R^n$. 
    Then $(e \uparrow \uparrow d')^{\prod_{j \in [n]} i_j}$ is the output of $C_n$.
    It holds that $e \uparrow \uparrow d' \neq e \uparrow \uparrow d$ for all $d' \neq d$.
    Therefore there exists no inner recurrent \CGNN without additional functions in its internal circuits and without any outer recurrence that has the output of $f_{C_n}(\ol{x})$ among its feature values.

    Here a circuit with constant depth was chosen. 
    Notice that the proof would work for any symmetric circuit of depth $\log^i(n)$ for any $i$ that has the circuit $C_n$ as a subcircuit and therefore for any circuit function class $\mathfrak{F}$.
\begin{figure}
    \centering
    \begin{tikzpicture}[gate/.style={circle, minimum width=0.7cm, draw}]
            \node[circle, minimum width=0.9cm]     (i1)                            {$i_1$};
            \node[circle, minimum width=0.9cm]     (dots)  [right=0.5 of i1]     {$\dots$};
            \node[circle, minimum width=0.9cm]     (in)    [right=0.5 of dots]       {$i_n$};

            \node[gate, thick]     (exp)   [below=0.5 of i1]       {$e$};
            \node[gate, thick]     (exp2)    [below=0.5 of in]         {$e$};
            \node[gate]     (mult)      [below=0.5 of $(exp)!.5!(exp2)$]    {$\times$};
            \node[]     (out)       [below=0.5 of mult]     {out};

            \draw[-stealth]     (i1)    to[bend left]      (exp);
            \draw[-stealth, dashed]     (exp)    to[bend left]      (i1);
            \draw[-stealth]     (exp)    --      (mult);
            \draw[-stealth]     (in)    to[bend left]      (exp2);
            \draw[-stealth]     (mult)    --      (out);       
            \draw[-stealth]     (exp2)    --      (mult);       
            \draw[-stealth, dashed]     (exp2)    to[bend left]       (in);       
    \end{tikzpicture}
    \caption{$C_n$, the gate $e$ computes the exponential function and is the halting gate}
    \label{fig:exp_circ}
\end{figure}
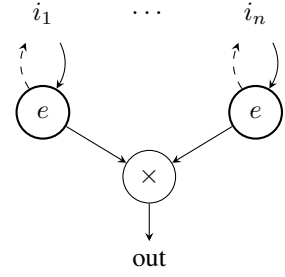
\end{proof}

\begin{algorithm}[htb]
    \caption{Algorithm for activation function layer handling}
    \label{alg:circ_outer_rec_def_activation_function}
    \begin{algorithmic}
        \STATE {\bfseries Input:} $v_g$, $v_g^{(i-1)}$, $U_g = \{u^{(i-1)} \mid u \in \mathcal{N}_G(v_g)\}$\
        \STATE $A = \{ a \mid a \text{ circuit depth with activation functions}\}$\; \COMMENT{The numbers of layers that correspond to activation functions} 
        \STATE $X_\sigma = \{ deg(v_g) \mid g \text{ is $\sigma$ gate} \}$\;
        \IF[$i$ is activation function layer]{$i \in A$}
            \IF[If $v_g$ corresponds to an activation function gate]{$deg(v_g)\in X_{\sigma}$}
                \STATE $v_g^{(i)} = \sum U_g$\;
                \STATE Apply activation function $\sigma$ globally.\;
            \ENDIF
        \ELSIF[$i-1$ was activation function layer]{$i-1 \in A$}
            \IF[If $v_g$ corresponds to activation gate]{$deg(v_g)\in X_\sigma$}
                \STATE $v_g^{(i)}=v_g^{(i-1)}$\;
            \ELSE
                \STATE $v_g^{(i)}= \text{Reset}(v_g)$\;
            \ENDIF
        \ENDIF
        
        % \IF[If $v_g$ is dummy node]{$deg(v_g)=1$}
        %     \STATE $v_g^{(i)} = v_g^{(i-1)}$
        % \ENDIF
        % \IF[If a node needs to be evaluated in this layer] {$deg(v_g) \in Y^{(i)}$} 
        %     \IF[If $v_g$ corresponds to an sign gate] {$deg(v_g)\in X_\text{sign}$}
        %     \STATE $v_g^{(i)} = \text{sign}(\sum U_g)$\; 
        %     \ELSIF[If $v_g$ corresponds to a multiplication gate]{$deg(v_g)\in X_{\times}$}
        %      \STATE $v_g^{(i)} = \prod U_g$\; 
        %     \ELSE[If $v_g$ corresponds to any other gate type]
        %     \STATE $v_g^{(i)} = \sum U_g$\;
        %     \ENDIF
        % \ELSE[If $v_g$ does not correspond to a gate at depth $i$]
        %     \IF[If $v_g$ corresponds predecessor of a memory gate or output gate] {$deg(v_g)\in X_\text{pred}/X_\text{out}$}
        %     \STATE $v_g^{(i)} = v_g^{(i-1)} $\; 
        %     \ELSIF[If $v_g$ corresponds to halting gate] {$deg(v_g)\in X_\text{halt}$}
        %             \IF[If still in simulation of underlying circuit] {$i < \depth(C)$}
        %             \STATE $v_g^{(i)} = v_g^{(i-1)} $\; 
        %             \ELSE[Halting circuit is being simulated]
        %             \STATE $v_g^{(i)} = \text{Reset}(v_g) $\;
        %             \ENDIF 
        %     \ELSE[all other gates where values don't need to be saved]
        %         \STATE $v_g^{(i)} = \text{Reset}(v_g) $\;
        %     \ENDIF
        % \ENDIF
    \end{algorithmic}
    \end{algorithm}

        \begin{algorithm}[H]
    \caption{Algorithm for the circuit family in layer $i$ of the \CGNN}
    \label{alg:circ_outer_rec_def_circ_family}
    \begin{algorithmic}
        \STATE {\bfseries Input:} $v_g$, $v_g^{(i-1)}$, $U_g = \{u^{(i-1)} \mid u \in \mathcal{N}_G(v_g)\}$, $q_g $ \COMMENT{number of dummy and successor nodes of $g$}\; 
        \STATE $Y^{(i)} = \{ deg(v_g) \mid \depth(g) =i\}$\; \COMMENT{The set of degrees of nodes corresponding to gates that need to be evaluated in this layer} 
        \STATE $X_+/X_{\times}/X_\sigma/X_\text{sign} = \{ deg(v_g) \mid g \text{ is +/$\times$/$\sigma$/sign gate} \}$\;
        \STATE $X_\text{halt}/X_\text{out} = \{ deg(v_g) \mid g \text{ is halting/ output gate} \}$\;
        %\STATE $X_\text{in}/X_\text{aux} = \{ deg(v_g) \mid g \text{ is input/auxiliary memory gate} \}$\;
        \STATE $X_\text{pred} = \{ deg(v_g) \mid g \text{ is predecessor of memory gate} \}$\;
            \IF[If $v_g$ is dummy node]{$deg(v_g)=1$}
                \STATE $v_g^{(i)} = v_g^{(i-1)}$
            \ENDIF
            \IF[If a node needs to be evaluated in this layer] {$deg(v_g) \in Y^{(i)}$} 
                \IF[If $v_g$ corresponds to an sign gate] {$deg(v_g)\in X_\text{sign}$}
                \STATE $v_g^{(i)} = \text{sign}(\sum U_g - q_g)$\; 
                \ELSIF[If $v_g$ corresponds to a multiplication gate]{$deg(v_g)\in X_{\times}$}
                 \STATE $v_g^{(i)} = \prod U_g$\; 
                \ELSIF[If $v_g$ corresponds to an activation function gate]{$deg(v_g)\in X_{\sigma}$}
                \STATE $v_g^{(i)} = \sigma(\sum U_g -q_g)$\;
                \ELSE[If $v_g$ corresponds to any other gate type]
                \STATE $v_g^{(i)} = \sum U_g -q_g$\;
                \ENDIF
                \ELSE[If $v_g$ does not correspond to a gate at depth $i$]
                \IF[If $v_g$ corresponds predecessor of a memory gate or output gate] {$deg(v_g)\in X_\text{pred}/X_\text{out}$}
                \STATE $v_g^{(i)} = v_g^{(i-1)} $\; 
        %     \algstore{alg2}
        % \end{algorithmic}
        % \end{algorithm}
        % \begin{algorithm}[H]
        % \begin{algorithmic}

        %         \algrestore{alg2}
                \ELSIF[If $v_g$ corresponds to halting gate] {$deg(v_g)\in X_\text{halt}$}
                        \IF[If still in simulation of underlying circuit] {$i < \depth(C)$}
                        \STATE $v_g^{(i)} = v_g^{(i-1)} $\; 
                        \ELSE[Halting circuit is being simulated]
                        \STATE $v_g^{(i)} = \textbf{Reset}(v_g) $\;
                        \ENDIF 
                \ELSE[all other gates where values do not need to be saved]
                    \STATE $v_g^{(i)} = \textbf{Reset}(v_g) $\;
                \ENDIF
            \ENDIF
    \end{algorithmic}
    \end{algorithm}
     \begin{algorithm}[H]
    \caption{Reset procedure}
    \label{alg:reset_alg}
    \begin{algorithmic}
        \STATE {\bfseries Input:} $v_g$, $c_g$ \COMMENT{constant value of gate $g$, empty if $g$ is not a constant gate}\
        \STATE $X_\text{const} = \{ deg(v_g) \mid g \text{ is constant gate} \}$\;
            \IF[If $v_g$ corresponds to constant gate] {$deg(v_g) \in X_\text{const}$} 
                \STATE $v_g^{(i)}=c_g$;
            \ELSE
                \STATE $v_g^{(i)}=1$;
            \ENDIF
    \end{algorithmic}
    \end{algorithm}

\end{document}